\definecolor{dg}{cmyk}{0.60,0,0.88,0.27}	
\definecolor{myblue}{RGB}{80,80,160}
\definecolor{mygreen}{RGB}{80,160,80}
	\patchcmd{\algocf@makecaption@ruled}{\hsize}{\columnwidth}{}{} %
	\patchcmd{\@algocf@start}{-1.5em}{0em}{}{} %
\newtheorem{questionW}{Question}
\newtheorem{resultW}{Result}
\newtheorem{hypothesis}{Hypothesis}		
\newtheorem{example}{Example}					%
\newtheorem*{proofintuition*}{Proof Intuition}
\newlist{thmlist}{enumerate*}{1}
\setlist[thmlist]{label=(\alph{thmlisti}),
                  ref=\thethm.(\alph{thmlisti}),
                  noitemsep}
\Crefname{thm}{Theorem}{Theorems}
\Crefname{lem}{Lemma}{Lemmas}
\Crefname{listthm}{Theorem}{Theorems}
\Crefname{listlem}{Lemma}{Lemmas}
\Crefname{thmlisti}{Theorem}{Theorems}
\newcommand{\hide}[1]{} 		%
\definecolor{mygreen}{rgb}{0,0.6,0}
\definecolor{mygray}{rgb}{0.5,0.5,0.5}
\definecolor{mymauve}{rgb}{0.58,0,0.82}
\definecolor{myred}{rgb}{1,0,0}
\definecolor{mylightblue}{rgb}{0.953, 0.957, 0.980} 	%
\definecolor{mylightblueborder}{rgb}{0.567, 0.627, 0.757} 	%
\definecolor{mylightyellow}{rgb}{0.988, 0.957, 0.863} 	%
\newcounter{resultboxenv}
\newsavebox{\coloredbgbox}
\renewcommand\subsubsection{\@secnumfont}{\bfseries}%
\renewcommand\subsubsection{\@startsection{subsubsection}{3}
  \z@{.5\linespacing\@plus.7\linespacing}{-.5em}%
  {\bfseries}}  
\definecolor{orcidlogocol}{HTML}{A6CE39}
\tikzset{
  orcidlogo/.pic={
    \fill[orcidlogocol] svg{M256,128c0,70.7-57.3,128-128,128C57.3,256,0,198.7,0,128C0,57.3,57.3,0,128,0C198.7,0,256,57.3,256,128z};
    \fill[white] svg{M86.3,186.2H70.9V79.1h15.4v48.4V186.2z}
                 svg{M108.9,79.1h41.6c39.6,0,57,28.3,57,53.6c0,27.5-21.5,53.6-56.8,53.6h-41.8V79.1z M124.3,172.4h24.5c34.9,0,42.9-26.5,42.9-39.7c0-21.5-13.7-39.7-43.7-39.7h-23.7V172.4z}
                 svg{M88.7,56.8c0,5.5-4.5,10.1-10.1,10.1c-5.6,0-10.1-4.6-10.1-10.1c0-5.6,4.5-10.1,10.1-10.1C84.2,46.7,88.7,51.3,88.7,56.8z};
  }
}
\DeclareRobustCommand\orcidicon[1]{\href{https://orcid.org/#1}{\mbox{\scalerel*{
\begin{tikzpicture}[yscale=-1, transform shape]
    \pic{orcidlogo};
\end{tikzpicture}
}{|}}}}
\def\eox{\unskip\kern 10pt{\unitlength1pt\linethickness{.4pt}$\diamondsuit${}}}
\DeclareRobustCommand*\uell{\mathpalette\@uell\relax}
\newcommand*\@uell[2]{
  \setbox0=\hbox{$#1\ell$}
  \setbox1=\hbox{\rotatebox{10}{$#1\ell$}}
  \dimen0=\wd0 \advance\dimen0 by -\wd1 \divide\dimen0 by 2
  \mathord{\lower 0.1ex \hbox{\kern\dimen0\unhbox1\kern\dimen0}}
}
\newcommand{\introparagraph}[1]{\textbf{#1.}} %
\renewcommand{\epsilon}{\varepsilon} %
\newcommand{\set}[1]{\{#1\}}                    %
\newcommand{\datarule}{{\,:\!\!-\,}} %
\renewcommand{\vec}[1]{\boldsymbol{\mathbf{#1}}}
\newcommand{\N}{\mathbb{N}} %
\newcommand{\R}{{\mathbb{R}}} %
\newcommand{\PP}[1]{\mathbb{P} [ #1 ]} %
\newcommand{\dom}{\texttt{dom}}
\newcommand{\EVar}{\textup{\texttt{EVar}}}  	
\newcommand{\at}{\textup{\texttt{at}}}  		
\newcommand{\fact}{\mathtt{FACT}} 
\newcommand{\prob}{\mathtt{PROB}} 
\newcommand{\minfact}{\mathtt{minFACT}} 
\newcommand{\minfactilp}{\mathtt{minFACT\_ILP}} 
\newcommand{\Var}{\textup{\texttt{var}}}
\newcommand{\len}{\textup{\texttt{len}}}
\newcommand{\res}{\mathtt{RES}} 
\newcommand{\PTIME}{\textup{\textsf{PTIME}}\xspace} 
\newcommand{\factorizationtree}{\texttt{FT}} 
\newcommand{\veo}{\texttt{VEO}} 
\newcommand{\veoff}{\texttt{VEOFF}}
\newcommand{\Provenance}{\textup{\texttt{Prov}}}  	
\newcommand{\NP}{\textup{\textsf{NP}}\xspace}
\newcommand{\npc}{\textup{\textsf{NP-C}}\xspace}
\newcommand{\npcomplete}{\textup{\textsf{NP-complete}}\xspace}
\newcommand{\HVar}{\textup{\texttt{HVar}}}
\newcommand{\projpd}[1]{\pi^p_{#1}}
\newcommand{\projp}[1]{\pi^p_{\!-\!#1}}
\newcommand{\joinp}[2]{\Join^p\!\! \big[#2\big]}
\newcommand{\join}[2]{\Join\!\! \big(#2\big)}
\newcommand{\proj}[1]{\pi_{\!-\!#1}}
\newcommand{\projd}[1]{\pi_{#1}}
\newcommand{\var}{\textup{\texttt{var}}}      	
\newcommand{\true}{\texttt{true}\xspace}		
\newcommand{\bigset}[2]{\bigl\{ #1 \,\bigm|\, #2 \bigr\} }
\newcommand{\Count}{\texttt{count}}
\newcommand{\w}{{\vec w}}
\newcounter{myeqn}
\newcommand{\witnesses}{\texttt{witnesses}\xspace}		%
\newcommand{\arity}{\texttt{arity}\xspace}		%
\newcommand{\np}{\mbox{{\rm NP}}}
\newcommand{\D}{D}
\newcommand{\mveo}{\mathtt{mveo}}
\newcommand{\PV}{\mathit{PV}}
\newcommand{\QPV}{\mathit{QPV}}
\newcommand{\PVF}{\mathit{PVF}}
\newcommand{\wt}{c}
\newcommand{\qtwochain}{Q_{2}^\infty}
\newcommand{\qthreechain}{Q_{3}^\infty}
\newcommand{\qfourchain}{Q_{4}^\infty}
\newcommand{\qfivechain}{Q_{5}^\infty}
\newcommand{\qthreestar}{Q_{3}^\star}
\newcommand{\qtwostar}{Q_{2}^\star}
\newcommand{\qtriangle}{Q^\triangle}
\newcommand{\qtriangleunary}{Q^\triangle_U}
\newcommand{\qdominatedtriangle}{Q_{\textrm{cod}}^\triangle}
\newcommand{\qsixcyclewe}{Q_{6\mathit{WE}}^\circ}
\definecolor{dg}{cmyk}{0.60,0,0.88,0.27}
\renewcommand{\angle}[1]{ \langle #1 \rangle }
\newcommand{\IS}{\textrm{IndSet}}
\newtheoremstyle{dotless}{}{}{\itshape}{}{\bfseries}{}{ }{}
\theoremstyle{dotless}
\newtheorem*{specialexamplenotitle*}{}
\newcites{SM}{SM References}
\newcommand*\circled[1]{\tikz[baseline=(char.base)]{
            \node[shape=circle,fill=black,draw,text=white,inner sep=1.2pt] (char) {#1};}}
\gdef\@copyrightpermission{
  \begin{minipage}{0.2\columnwidth}
   \href{https://creativecommons.org/licenses/by/4.0/}{\includegraphics[width=0.90\textwidth]{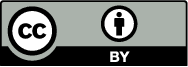}}
  \end{minipage}\hfill
  \begin{minipage}{0.8\columnwidth}
   \href{https://creativecommons.org/licenses/by/4.0/}{This work is licensed under a Creative Commons Attribution International 4.0 License.}
  \end{minipage}
  \vspace{5pt}
}
\begin{document}

\title[Minimally Factorizing the Provenance of Self-join Free Conjunctive Queries]{Minimally Factorizing the Provenance of Self-join Free Conjunctive Queries}

\author{Neha Makhija}
\orcid{0000-0003-0221-6836}
\affiliation{%
    \orcidicon{0000-0003-0221-6836}
	Northeastern University \country{USA}}
\email{makhija.n@northeastern.edu}

\author{Wolfgang	Gatterbauer}
\orcid{0000-0002-9614-0504}
\affiliation{%
    \orcidicon{0000-0002-9614-0504}
	Northeastern University \country{USA}}
\email{w.gatterbauer@northeastern.edu}

\begin{abstract}
We consider the problem of finding the minimal-size factorization of the provenance of self-join-free conjunctive queries, i.e.,\ 
we want to find a formula that \emph{minimizes the number of variable repetitions}. 
This problem is equivalent to solving the fundamental Boolean formula factorization problem for the restricted setting of the provenance formulas of self-join free queries.
While general Boolean formula minimization is $\Sigma^p_2$-complete, 
we show that the problem is $\npcomplete$ in our case.
Additionally, we identify a large class of queries that can be solved in $\PTIME$, expanding beyond the previously known tractable cases of read-once formulas and hierarchical queries.

We describe connections between factorizations, Variable Elimination Orders ($\veo$s), and minimal query plans. 
We leverage these insights to create an Integer Linear Program (ILP) that can solve the minimal factorization problem exactly. 
We also propose a Max-Flow Min-Cut (MFMC) based algorithm that gives an efficient approximate solution. 
Importantly,
we show that both the Linear Programming (LP) relaxation of our ILP, 
and our MFMC-based algorithm are \emph{always correct for all currently known $\PTIME$ cases}.
Thus, we present two unified algorithms (ILP and MFMC) that can both recover all known $\PTIME$ cases in $\PTIME$, 
yet also solve $\npcomplete$ cases either exactly (ILP) or approximately (MFMC), as desired.

\end{abstract}

\begin{CCSXML}
    <ccs2012>
       <concept>
           <concept_id>10003752.10010070.10010111</concept_id>
           <concept_desc>Theory of computation~Database theory</concept_desc>
           <concept_significance>500</concept_significance>
           </concept>
       <concept>
           <concept_id>10002951.10002952</concept_id>
           <concept_desc>Information systems~Data management systems</concept_desc>
           <concept_significance>500</concept_significance>
           </concept>
     </ccs2012>
\end{CCSXML}
    
\ccsdesc[500]{Theory of computation~Database theory}
\ccsdesc[500]{Information systems~Data management systems}

\keywords{Factorization, Provenance, Boolean Formulas}

\maketitle

\section{Introduction}
\label{sec:intro}

Given the provenance formula for a Boolean query, what is its \emph{minimal size equivalent formula}?
And under what conditions can this problem be solved efficiently?
This paper investigates the complexity of $\minfact$, i.e.\ the problem of finding a minimal factorization for 
the provenance of self-join-free conjunctive queries
(sj-free CQs).
While the general Boolean formula minimization is $\Sigma^p_2$-complete~\cite{buchfuhrer2011complexity}, 
several important tractable subclasses have been identified, such as read-once formulas ~\cite{golumbic2008improvement}. 
In this paper, we identify additional tractable cases by identifying a large class of queries for which the minimal factorization of any provenance formula can be found in $\PTIME$.

We focus on provenance formulas for two key reasons: 
1) Provenance computation and storage is utilized in numerous database applications. 
The issue of storing provenance naturally raises the question: How can provenance formulas be represented minimally? 
This problem has previously been investigated in this context ~\cite{DBLP:conf/tapp/Zavodny11,olteanu2012factorised}, where algorithms were described for factorizations with asymptotically optimal sizes, leading to work on factorized databases. 
However, finding instance-optimal factorizations i.e.\ factorizations that are guaranteed to be the smallest possible, for any arbitrary input, remains an open challenge, and is the focus of our work.

2) Minimal factorizations of provenance formulas can be used to obtain probabilistic inference bounds.
Prior approaches for 
approximate
probabilistic inference are either incomplete i.e.\ focus on just $\PTIME$ cases~\cite{DBLP:journals/vldb/DalviS07,DBLP:conf/icdt/RoyPT11,SenDeshpandeGetoor2010:ReadOnce}, or do not solve all $\PTIME$ cases exactly~\cite{DBLP:journals/vldb/DalviS07,DBLP:journals/vldb/GatterbauerS17}. 
As we show, using minimal factorization as a preprocessing step achieves the best of both worlds: It is complete (i.e.\ it applies to easy and hard cases) 
while recovering all known $\PTIME$ cases exactly.

In this paper, we prove that the minimal factorization problem is $\npcomplete$ ($\npc$) for provenance formulas, and give two algorithms for all sj-free CQs that are unified algorithms in the sense that they solve all known tractable cases in $\PTIME$, and provide approximations for hard cases. 
We further place the set of tractable queries firmly between the tractable queries for two other related problems: resilience~\cite{FreireGIM15} and probabilistic query evaluation~\cite{DBLP:conf/vldb/DalviS04}
(\cref{Fig_minFACT_venn}).

\begin{figure}
    \centering
	\hspace{-2mm}
	\includegraphics[scale=0.5]{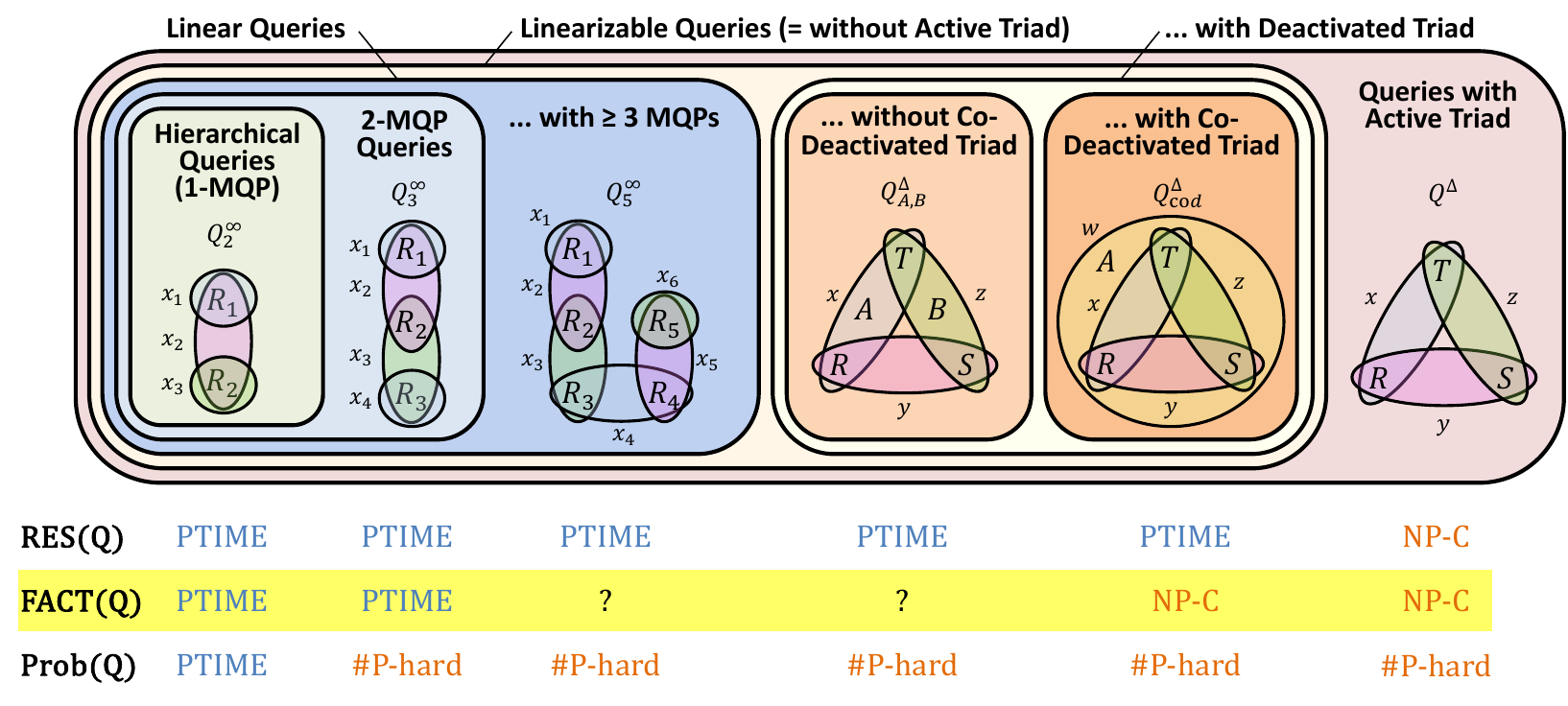}    
    \caption{This paper gives hardness results, identifies $\PTIME$ cases, and gives exact and approximate algorithms for self-join-free conjunctive queries.
	We prove that
	the tractable queries for $\minfact$ 
	reside firmly between the tractable cases for 
	probabilistic query evaluation ($\prob$) = the hierarchical queries with one minimal query plan,
	and those for resilience ($\res$) = queries without active triads.
	The open cases are 
	linear queries with $\geq 3$ minimal query plans (though we know that $\qfourchain$ is in $\PTIME$),
	and linearizable queries with deactivated triads and without co-deactivated triads (though we know that the triangle unary query $\qtriangleunary$ is in $\PTIME$).}
	\label{Fig_minFACT_venn}
\end{figure}

\introparagraph{Contributions \& Outline}
\circled{1}
The $\minfact$ problem has strong ties to the diverse problems of Boolean factorization, factorized databases, probabilistic inference, and resilience, among others. 
\cref{sec:related-work} explains these connections before \Cref{sec:background} formalizes the problem.
\circled{2}
\Cref{sec:factorizationspace} describes connections between provenance factorizations, variable elimination orders ($\veo$s) and query plans.
These connections allow us to reformulate $\minfact$ as the problem of assigning each witness to one of several ``minimal $\veo$s.''
\circled{3}
\Cref{sec:ilp} develops an ILP encoding to solve $\minfact$
for any sj-free CQ exactly.
We are not aware of any prior ILP formulation for minimal-size encodings of propositional formulas, for restricted cases like for monotone formulas.
\circled{4}
\Cref{sec:approx} describes our two unified $\PTIME$ algorithms that are exact for \emph{all known $\PTIME$ cases}, and approximations otherwise.
The first one encodes the problem in the form of a ``factorization flow graph'' s.t.\ a \emph{minimal cut} of the graph corresponds to a valid factorization of the instance. 
We refer to this algorithm as the MFMC (Max-Flow Min-Cut) based algorithm.
The second is an LP relaxation of our ILP encoding, for which we also prove a guaranteed constant factor approximation for hard queries.
\circled{5}
\Cref{sec:read:once} proves that both our unified algorithms can solve the $\minfact$ problem \emph{exactly} 
if the database instance allows a \emph{read-once factorization}.
This implies that our algorithms recover and generalize prior approaches \cite{DBLP:conf/icdt/RoyPT11} 
that are limited to read-once formulas.
\circled{6}
\Cref{sec:easycase} provides a large class of queries for which our $\PTIME$ algorithms can solve the $\minfact$ problem exactly over any database instance.
This class includes hierarchical queries as a strict subset, proving that the tractable queries for $\minfact$ are a strict superset of those for probabilistic query evaluation ($\prob$)~\cite{DBLP:journals/vldb/DalviS07}.
\circled{7}
\Cref{sec:hardcase} proves that the decision variant of $\minfact$ is $\npc$ for a set of queries that form a strict superset of queries that contain ``active triads''.
This result proves that the intractable queries for $\minfact$ are a strict superset of those that 
are intractable for resilience ($\res$)~\cite{FreireGIM15}, thereby bounding the tractable queries for our problem firmly between those tractable for $\prob$ and those tractable for $\res$.

\introparagraph{Appendix}
\iftoggle{fullappendix}{%
The appendix contains all proofs and many illustrating examples that did not fit within the main paper.
\Cref{sec:application:pdbs} shows that using minimal factorization can lead to more accurate probabilistic inference.
\Cref{sec:experiments} contains experiments evaluating the performance and results of the ILP encoding, LP relaxation, and MFMC-based algorithm.
}
{
	The appendix contains details about notations and some additional discussion about related work. However, a much longer online appendix~\cite{MinfacFull} contains all proofs, more illustrating examples, and further discussions.
	It also shows that using minimal factorization can lead to more accurate probabilistic inference ~\cite{MinfacFull}.
	We also perform experiments evaluating the performance and results of the ILP encoding, LP relaxation, or MFMC-based algorithm ~\cite{MinfacFull}.
}

\section{Related Work}
\label{sec:related-work}
\label{SEC:RELATEDWORK}

\introparagraph{Boolean Factorization}
Minimum Equivalent Expression (MEE) is the problem of deciding whether 
a given Boolean formula $\varphi$ (note that we use the terms expressions and formulas interchangeably) has a logically equivalent formula $\varphi'$ that contains $\leq k$ 
occurrences of literals.
It was known to be at least \np-hard for over 40 years~\cite[Section 7.2]{garey1979computers}
and was shown to be $\Sigma^2_p$-complete only 10 years ago~\cite{buchfuhrer2011complexity}.
The problem is more tractable for certain restrictions like
Horn formulas~\cite{hammer1993optimal} as input, 
or if allowing arbitrary Boolean functions as connectors~\cite{hemaspaandra2011minimization},
or if posed as the Minimum Formula Size Problem (MFSP) that
takes the uncompressed truth table as input~\cite{Allender:MinimizingDNF:2008,ilango2022minimum}.
There is a lot of work on approximate Boolean function factorization \cite{mintz2005factoring, martins2010boolean}, 
however efficient, exact methods are limited to classes such as read-once \cite{golumbic2008improvement} 
and read-polarity-once formulas~\cite{callegaro2013read}
(see ~\cite[Section 10.8]{CramaHammer2010:BooleanFunctions} for a detailed historical overview).
Our problem restricts the formula to be minimized to the provenance of a sj-free conjunctive query
(i.e.\ a monotone, $m$-partite DNF that follows join dependencies), 
with the goal of uncovering important classes that permit a $\PTIME$ exact evaluation.
\iftoggle{fullappendix}{
We illustrate an overview of known results in \cref{sec:appendix:rw:booleanfactorization}. 
}
{An illustration of an overview of known results can be found in the full online appendix.~\cite[Fig 7]{MinfacFull}}
To the best of our knowledge, no prior work has provided a general approach for finding the minimal factorization of monotone $k$-partite Boolean formulas given as DNFs,
and we are unaware of prior work that provides an ILP for the problem,
even under restricted settings.

\introparagraph{Factorized Databases and Related Work on Factorization}
Our problem has been studied before in the context of Factorized Databases (FDBs) \cite{olteanu2012factorised,DBLP:journals/sigmod/OlteanuS16,DBLP:conf/tapp/Zavodny11,DBLP:journals/tods/OlteanuZ15}.
Five key differences in focus are:
($i$) The tight bounds provided through that line of work are on ``readability'' i.e. the lowest $k$ such that each variable in the factorized formula is repeated at most $k$ times. 
The work shows that the class of queries with bounded readability is strictly that of hierarchical queries \cite{olteanu2012factorised}.
In contrast, we focus on the minimal number of variable repetitions and show this can be calculated in $\PTIME$ for a strict superset of hierarchical queries.
($ii$) For bounds on the minimal length (as is our focus), FDBs focus only \emph{asymptotic} bounds on the size of query result representations~\cite{DBLP:journals/tods/OlteanuZ15}
whereas we focus on minimizing the exact number of variables
(e.g.\ whether a provenance is read-once or has a factor 2 bigger size is of no relevance in the asymptotic analysis of FDBs).
($iii$) variants of FDBs permit the reuse of intermediate results, i.e.\ they focus on the corresponding \emph{circuit} size, while we focus on \emph{formulas}.
($iv$) Intuitively, FDBs study the trade-offs between applying one of several factorizations (or query plans or variable elimination orders) to the entire query results at once, 
whereas we may \emph{factorize each witness in different ways}.
($v$) Except for \cite{DBLP:conf/tapp/Zavodny11}, the work on FDBs focuses on factorizations in terms of \emph{domain values} whereas provenance formulas are defined in terms of tuple variables 
(e.g, a tuple from an arity-3 relation has 3 different domains, 
but is still represented by a single tuple variable).
These discrepancies lead to different technical questions and answers.
Also related is the very recently studied problem of finding a factorized representation of all the homomorphisms between two finite relational structures~\cite{berkholz_et_al:LIPIcs.ICALP.2023.113}. 
Similar to FDBs, that work also differs from ours in that it focuses on the asymptotic factorization size
(and proves lower bounds and allows a circuit factorization instead of a formula).
Our problem is also different from the problem of calculating a ``$p$-minimal query'' for a given query ~\cite{AmsterdamerDMT:2012:ProvananceMin}:
The solution to our problem depends on the database instance and factorizes a given provenance formula,
whereas the latter problem is posed irrespective of any given database, chooses among alternative polynomials,
and becomes trivial for queries without self-joins.

\introparagraph{Probabilistic Inference, Read-Once Formulas, and Dissociation}
Probabilistic query evaluation ($\prob$) is $\#$P-hard in general~\cite{DBLP:journals/vldb/DalviS07}.
However, if a provenance formula $\varphi$ can be represented in
\emph{read-once form} 
then its marginal probability $\PP{\varphi}$ can be computed in linear time in the number of literals.
Olteanu and Huang~\cite{DBLP:conf/sum/OlteanuH08} showed that the previously known tractable queries called hierarchical queries lead to read-once factorizations.
A query $Q$ is called hierarchical \cite{DBLP:journals/vldb/DalviS07} iff for any two existential variables 
$x,y$, one of the following three conditions holds: 
$\at(x) \subseteq \at(y)$, $\at(x) \supseteq \at(y)$, or $\at(x) \cap \at(y) = \emptyset$, where $\at(x)$ is the set of atoms of $Q$ in which $x$ participates.
Roy et al.\ \cite{DBLP:conf/icdt/RoyPT11} and Sen et al.\ \cite{SenDeshpandeGetoor2010:ReadOnce} independently proposed 
algorithms for identifying read-once provenance for non-hierarchical queries in $\PTIME$.
Notice that finding the read-once form of a formula (if it exists) 
is just an extreme case of 
representing a Boolean function by a minimum length $(\vee,\wedge)$-formula.
Our solution is a \emph{natural generalization} that is \emph{guaranteed} 
to return a read-once factorization in $\PTIME$ should there be one.
We give an interesting connection by proving that  
the tractable queries for our problem are a \emph{strict superset} of hierarchical queries and thus the tractable queries for probabilistic query evaluation.

\introparagraph{Resilience}
The resilience problem~\cite{FreireGIM15,DBLP:conf/pods/FreireGIM20}
is a variant of the deletion propagation problem~\cite{Buneman:2002,Dayal82} focusing on Boolean queries:  
Given $D\models Q$, what is the minimum number of tuples to remove
(called a ``contingency set'')
in order to
make the query false?
We give an interesting connection by proving that  
the tractable queries for our problem are a \emph{strict subset} of the tractable queries for resilience.
Concretely, we show that the structural hardness criterion for resilience also makes the factorization problem hard. We achieve separation by giving a query that is easy for resilience yet hard to factorize.
Additionally, we hypothesize that linear queries 
Addititonally, very recent concurrent work by us on resilience~\cite{MakhijaG:2024} has made a similar observation that the LP relaxation of a natural ILP formulation for resilience solves all $\PTIME$ queries exactly,
which suggests a deeper connection of our problems with general reverse data management problems~\cite{DBLP:journals/pvldb/MeliouGS11}.

\introparagraph{Linear Optimization} 
The question of when an Integer Linear Program (ILP) is tractable has many theoretical and practical consequences~\cite{cornuejols2002ideal}. 
Since we model our problem as an ILP, we can leverage some known results for ILPs to evaluate the complexity of our problem.
We show that for a certain class of queries, the constraint matrices of our ILP are Totally Unimodular~\cite{schrijver1998theory} and hence the ILP is guaranteed to be solvable in $\PTIME$.
Additionally, we find cases that \emph{do not fit known tractable classes}, such as Total Unimodularity~\cite{schrijver1998theory}, Iterative Rounding~\cite{lau2011iterative}, or Balanced Matrices~\cite{conforti2006balanced}. 
We \emph{nevertheless prove that they are in $\PTIME$ and can be solved efficiently by ILP solvers}.
We believe that additional optimization theory will be instrumental in completing the dichotomy. 
From a practical perspective, modeling our problem as an ILP allows us to use highly optimized solvers \cite{gurobi} to obtain exact results even for hard queries.

\introparagraph{Relation to Holistic Join Algorithms}
Our approach  has an interesting conceptual connection 
to ``holistic'' 
join algorithms ~\cite{10.1145/3034786.3056105}
that rely on not just a single tree decomposition (thus one query plan)
but rather multiple tree decompositions (thus multiple plans)
\emph{for different output tuples}. 
Very similarly, our approach also carefully assigns different witnesses to different query plans.

\section{Formal Setup}
\label{sec:background}
\label{SEC:BACKGROUND}

This section introduces our notation and defines the problem $\minfact$ i.e.\  the problem of finding the \emph{minimal factorization} of the provenance of a query.

\subsection{Standard database notations}
\label{sec:background:dbnotation}

We write $D$ for the database, i.e.\ the set of tuples in the relations.
A \emph{conjunctive query} (CQ) is a first-order formula $Q(\vec y)$ $= \exists \vec x\,(g_1 \wedge \ldots \wedge g_m)$ where the variables $\vec x = (x_1, \ldots, x_\ell)$ are called existential variables,
$\vec y$ are called the head
variables and each atom $g_i$ represents a relation $g_i= R_{j_i}(\vec x_i)$ where $\vec x_i \subseteq \vec x \cup \vec y$.\footnote{We follow the conventional notation for boolean CQs that omits writing the existential quantification and that replaces $\wedge$ by a comma.
W.l.o.g., we assume that $\vec x_i$ is a tuple of only variables and don't write the constants.
Selections can always be directly pushed into the database before executing the query.
In other words, for any constant in
the query, we can first apply a selection on each relation and then consider the modified query with
a column removed.
}
W.l.o.g., we discuss only connected queries.\footnote{Results for disconnected queries follow immediately by factorizing each of the query components \emph{independently}.}
We write $\var(X)$ for the set of variables occurring in atom/ relation/ query/ formula $X$ and $\at(x)$ for the set of atoms that contain variable $x$.
We write $[\vec w / \vec x]$ as a valuation (or substitution) of query variables $\vec x$ by constants $\vec w$. These substitutions may be written explicitly by ``domain-annotating'' variables with domain constants as subscripts.
\emph{Domain-annotated tuples} use such domain-annotated variables as subscripts, e.g.\ $r_{x_1,y_2}$ represents a tuple of relation $R(x,y)$ with $x=1$ and $y=2$. 
We sometimes informally omit the variables and use the notation $r_{v_1v_2\ldots v_a}$ where $v_1v_2\ldots v_a$ are the domain values of $\var(R)$ in the order that they appear in atom $R$.
Thus, $r_{12}$ also represents $R(1,2)$.
A \emph{self-join-free (sj-free) CQ} is one where no relation symbol occurs more than once 
and thus every atom represents a different relation. 
Thus, for sj-free CQs, one may refer to atoms and relations interchangeably.
We focus on Boolean queries (i.e., where $\vec y = \emptyset$), since the problem of finding the minimal factorization of the provenance for one particular output tuple of a non-Boolean query immediately
reduces to the Boolean query case
(see e.g.\ \cite{DBLP:series/synthesis/2011Suciu}).\footnote{A solution to Boolean queries immediately also provides an answer to a non-Boolean query $Q(\vec y)$: 
For each output tuple $t \in Q(D)$, solve the problem for a Boolean query $Q'$ that replaces all head variables $\vec y$ with constants of the output tuple $t$.}
Unless otherwise stated, a query in this paper denotes a \emph{sj-free Boolean CQ}.
\cref{sec:sec:appendix:nomenclature} defines further notation.

\subsection{Boolean and Provenance formulas}

The terms provenance and lineage are used in the literature with slightly different 
meanings. 
While lineage was originally formalized in \cite{DBLP:journals/tods/CuiWW00},
we follow the modern treatment of \emph{data provenance} as denoting a proposition formula that corresponds to the \emph{Boolean provenance semiring} of Green et al.\ \cite{GKT07-semirings,Green:2017:SFD:3034786.3056125}, 
which is the commutative semiring of positive Boolean expressions
$(\mathbb{B}[X], \vee, \wedge, 0, 1)$.
We sometimes write 
$\vee$ as semiring-plus ($\oplus$) 
and 
$\wedge$ as times ($\otimes$).

We assign to every tuple $t \in D$ a \emph{provenance token}, i.e. we interpret each tuple as a Boolean variable.
Then the provenance formula (equivalently, provenance expression) $\varphi_p$ of a query $Q \datarule
R_1(\vec x_1),\ldots,\\R_m(\vec x_m)$ on $D$ is the positive Boolean DNF formula 

\begin{align*}
	\Provenance(Q, D) =
	\!
	\bigvee_{\theta: D \models Q[\theta(\vec x)/\vec x]}
	\!
	R_1\big(\theta(\vec x_1)\big) \wedge\cdots\wedge R_m\big(\theta(\vec x_m)\big) 	
\end{align*}
where 
$D \models Q[\theta(\vec x)/\vec x]$ 
denotes that $\theta(\vec x)$ is a \emph{valuation} or assignment of $\vec x$ 
to constants in the active domain that make the query true over database $D$.
Notice that for sj-free queries, this DNF is always $m$-partite as each disjunct contains one tuple from each of the $m$ tables and that the notions of provenance polynomial and provenance formula are interchangeable.

\introparagraph{Read-once}
For a formula $\varphi$, we denote by $\Var(\varphi)$ the set of variables that occur in $\varphi$, 
and by $\len(\varphi)$ its length, i.e., the number of its literals.\footnote{Notice that the length of a Boolean expression $\varphi$ is also at times defined as the total number of symbols (including operators and parentheses, e.g.\ in 	\cite{CramaHammer2010:BooleanFunctions}). 
In our formulation, we only care about the number of variable occurrences.}
A provenance is called \emph{read-once}
if it can be represented in \emph{read-once} form, 
i.e.\ there is an equivalent formula in which each literal appears exactly once~\cite{GolumbicGurvich2010:ReadOnceFunctions,gurvich77,DBLP:conf/sum/OlteanuH08}.
This is possible iff that equivalent formula can be built up recursively 
from the provenance tokens by disjunction (and conjunction), s.t.\
whenever 
$\varphi = \varphi_1 \vee \varphi_2$
(or $\varphi = \varphi_1 \wedge \varphi_2$), then 
 $\var{(\varphi_1)} \cap \var{(\varphi_2)} = \emptyset$.

\introparagraph{Witnesses}
We call a \emph{witness} $\vec w$ a valuation of all variables $\vec x$ that is permitted by $D$ and that makes $Q$ \true (i.e.\ $D \models Q[\vec w/\vec x]$).\footnote{Note that our notion of witness slightly differs from the one used in  provenance literature where a ``witness'' refers to a subset of the input database records that is sufficient to ensure that a given output tuple appears in the result of a query \cite{DBLP:journals/ftdb/CheneyCT09}.} 
The set of witnesses $\witnesses(Q,D)$ (shorthand $W$) is then
\[
	\witnesses(Q,D) = \bigset{\vec w}{D \models Q[\vec w /\vec x]}\; .
\]
Since every witness implies exactly one set of $m$ tuples from $D$ 
that make the query true, 
we will slightly abuse the notation and also refer to this set of tuples as a ``witness.'' 
We will also use ``witness'' to refer to a product term in a DNF of the provenance polynomial.

\begin{figure}[]
\begin{subfigure}[b]{.4\columnwidth}
	\begin{align*}
	\begin{minipage}[t]{35mm}			
	\centering
	\setlength{\tabcolsep}{0.4mm}
		\mbox{
				\begin{tabular}[t]{ >{$}c<{$} | >{$}c<{$} >{$}c<{$} }
	 			R	&  x	\\
				\hline
				\color{blue}r_1		& \color{blue}1		\\
				r_2	& 2		\\
				r_3	& 3								
				\end{tabular}			
		}
		\hspace{0mm}
		\mbox{
				\begin{tabular}[t]{ >{$}c<{$} | >{$}c<{$} >{$}c<{$} >{$}c<{$} >{$}c<{$}}
	 			S	& x		& y\\
				\hline
				s_{11}	& 1		& 1		\\
				s_{12}	& 1		& 2		\\
				s_{23}	& 2		& 3		\\
				s_{33}	& 3		& 3		\\	
				\color{orange} s_{13}	& \color{orange}1		& \color{orange}3
				\end{tabular}
		}
		\hspace{0mm}
		\mbox{
				\begin{tabular}[t]{ >{$}c<{$} | >{$}c<{$} >{$}c<{$}}
	 			T	& y		\\
				\hline
				t_1	& 1	\\	
				t_2	& 2	\\	
				\color{blue}t_3		& \color{blue}3							
				\end{tabular}			
		}
	\end{minipage}
	\end{align*}
\caption{$D$ with and $D'$ without ${\color{orange}s_{13}}$}\label{RST_intro_D}	
\end{subfigure}
\begin{subfigure}[b]{.4\columnwidth}
	\centering
	\begin{tikzpicture}[scale=.5,
		every circle node/.style={fill=white, minimum size=4mm, inner sep=0, draw}]
	\draw (0, 3.5) node { $R$};
	\draw (3.5, 3.5) node[color=black] { $S$};	
	\draw (7, 3.5) node[color=black] { $T$};	
	\node (r1) at  (0,2) [circle]  { $r_1$};
	\node (r2) at  (0,0) [circle]  { $r_2$};
	\node (r3) at  (0,-2) [circle] { $r_3$};
	\node (t1) at  (7,2) [circle]   { $t_1$};
	\node (t2) at  (7,0) [circle]   { $t_2$};
	\node (t3) at  (7,-2) [circle]  { $t_3$};
	\foreach \from/\to/\l in {r1/t1/$s_{11}$, r1/t2/$s_{12}$, r2/t3/$s_{23}$, r3/t3/$s_{33}$}
		\draw[line width=1pt, color=black, >=latex] (\from) -- (\to) node[midway, draw=none, yshift=1.5mm] {\l} ; 
	\draw[line width=1pt, color=orange, dashed, >=latex]  (r1) -- (t3) node[midway, yshift=1.5mm]{$s_{13}$} ;  
	\end{tikzpicture}
\caption{Bipartite join graph}	
\label{fig:RST_intro:b} 
\end{subfigure}
\caption{\Cref{ex:prov,ex:fact}:
(a): Database instance with provenance tokens to the left of each tuple, e.g.\ $s_{12}$ for $S(1,2)$.
(b): $\Provenance(\qtwostar, D)$ for $\qtwostar \datarule R(x), S(x,y), T(y)$ represented as bipartite graph.
$D$ denotes the database with the orange tuple ${\color{orange}s_{13}}$ and $D'$ denotes the database without it.
}
\label{fig:RST_intro} 
\end{figure}

\begin{example}[Provenance]
    \label{ex:prov}
    Consider the Boolean 2-star query $\qtwostar \datarule R(x),$ $S(x,y), T(y)$
    over the database $D'$ in \cref{fig:RST_intro}
    (ignore the tuple  $ \color{orange}s_{13}$ for now).
    Each tuple is annotated with a Boolean variable (or provenance token)
	$ r_1, r_2,  \ldots$. 
    The \emph{provenance} $ \varphi_p$ is the Boolean expression 
	about
	which tuples need to be present for 
	$\qtwostar$ to be true:
    \begin{align}
    	\varphi_p= {\color{blue}r_1} s_{11} t_1 \vee {\color{blue}r_1} s_{12} t_2 \vee r_2 s_{23} {\color{blue}t_3} \vee r_3 s_{33} {\color{blue}t_3}
    	\label{eq:phiDNF}
    \end{align}
	This expression contains $| \Var(\varphi_p) | = 10$ variables, 
	however has a length of $\len(\varphi_p) = 12$ 
	because  variables ${\color{blue}r_1}$ and ${\color{blue}t_3}$ are repeated 2 times each.	
	The witnesses are 
	$\witnesses(Q^\star_2, D') = \{(1,1), (1,2),$ $(2,3), (3,3)\}$
	and their respective tuples are 
	$\set{r_{1},s_{11}, t_1}$, 
	$\set{r_{1},s_{12}, t_2}$, 	
	$\set{r_{2},s_{23}, t_3}$, 		
	and 
	$\set{r_{3},s_{33}, t_3}$.

    The provenance can be re-factored into a read-once factorization $\varphi'$
    which is a factorized representation of the provenance polynomial in which 
    every variable occurs once, 
	and thus $\len(\varphi') | = | \Var(\varphi') | = 10$.
	It can be found in $\PTIME$ in the size of the database~\cite{DBLP:journals/dam/GolumbicMR06}:
    \begin{align*}
    	\varphi' &= {\color{blue}r_1} (s_{11} t_1 \vee s_{12} t_2) \vee (r_2 s_{23} \vee r_3 s_{33}) {\color{blue}t_3}
    \end{align*}
	\end{example}

\subsection{Minimal factorization $\minfact$}

For a provenance $\varphi_p = \Provenance(Q, D)$ as DNF, we want to find an equivalent formula $\varphi' \equiv \varphi_p$ with the minimum number of literals.

\begin{definition}[$\fact$]
	Given a query $Q$ and database $D$, 
we say that $(D, k) \in \fact(Q)$ if there is a formula $\varphi'$ of length $\len(\varphi')\leq k$ that is equivalent to the expression $\varphi_p = \Provenance(Q, D)$.
\end{definition}

Our focus is to determine the difficulty of this problem in terms of data complexity~\cite{DBLP:conf/stoc/Vardi82}, i.e., we treat the query size $|Q|$ as a constant.
We are interested in the optimization version of this decision problem:
given $Q$ and $D$, find the \emph{minimum} $k$ such that $(D,k) \in \fact(Q)$. 
We refer to this optimization variant as the $\minfact$ problem and use $\minfact(Q,D)$ to refer to the length of the minimal size factorization for the provenance of database $D$ under query $Q$.

\begin{example}[$\fact$]
\label{ex:fact}
Now consider the provenance of $\qtwostar$ over 
the modified database $D$ with tuple ${\color{orange}s_{13}}$ from \cref{fig:RST_intro}.
It has no read-once form
and a minimal size 
formula
is
\begin{align*}
	\varphi'' &=r_1 (s_{11} t_1 \vee s_{12} t_2 \vee {\color{orange}s_{13}} {\color{blue}t_3}) 
	\vee (r_2 s_{23} \vee r_3 s_{33}) {\color{blue}t_3}
\end{align*}
We see that $\len(\varphi'') = 12$.
It follows that 
$(D,12) \in \fact(Q_2^\star)$.
At the same time,
$(D,11) \not \in \fact(Q_2^\star)$ and thus $\minfact(Q_2^\star, D)=12$. 
\end{example}

\section{Search Space for $\minfact$}
\label{sec:factorizationspace}
\label{SEC:FACTORIZATIONSPACE}

\begin{figure*}[]
	\begin{subfigure}{4cm}
		\begin{tikzpicture}[grow=right,<-, level distance=3mm, baseline=-1ex, 
			level 1/.style={sibling distance=15mm, level distance=8mm},
			level 2/.style={sibling distance=10mm},
			level 3/.style={sibling distance=3mm}]
			\node [xshift=0mm] {$\oplus$}
			child {node {$\otimes$}
				child {node {$t_1$}
				}
				child {node {$\oplus$}
					child {node {$\otimes$}
						child {node {$s_{11}$}
						}
						child {node {$r_{1}$}
						}
					}
				}
			}
			child {node {$\otimes$}
			  child {node {$r_2$}
			  }
			  child{node {$\oplus$}
				child{node [yshift=-3mm] {$\otimes$}
					child{node {$s_{22}$}
					}
					child{node {$t_{2}$}
					}
				}
				child{node [yshift=3mm] {$\otimes$}
					child{node {$s_{23}$}
					}
					child{node {$t_{3}$}
					}
				}
			  }
			};
	\end{tikzpicture}
	\caption{Factorization Tree ($\factorizationtree$)}
	\end{subfigure}
	\begin{subfigure}{6cm}
		\begin{tikzpicture}[grow=right,<-, 
			level 1/.style={sibling distance=15mm, level distance=8mm},
			level 2/.style={sibling distance=8mm, level distance=12mm},
			level 3/.style={sibling distance=8mm, level distance=12mm},
			level 4/.style={sibling distance=3mm, level distance=14mm}]
			\node [xshift=0mm] {$\pi_{-x_2,y_1}$}
			child {node {$\bowtie_{\textcolor{blue}{y_1}}$}
				child {node {$t_{\substack{1_{y_1}}}$}
				}
				child {node {$\pi_{-x_1}$}
					child {node {$\bowtie_{\textcolor{blue}{x_1}, \textcolor{orange}{y_1}}$}
						child {node {$s_{\substack{{11}_{x_1y_1}}}$}
						}
						child {node {$r_{\substack{1_{x_1}}}$}
						}
					}
				}
			}
			child {node {$\bowtie_{\textcolor{blue}{x_2}}$}
			  child {node {$r_{\substack{2_{x_2}}}$}
			  }
			  child{node {$\pi_{-y_2, y_3}$}
				child{node [yshift=-3mm] {$\bowtie_{\textcolor{orange}{x_2}, \textcolor{blue}{y_2}}$}
					child{node {$s_{\substack{{22}_{x_2y_3}}}$}
					}
					child{node {$t_{\substack{2_{y_2}}}$}
					}
				}
				child{node [yshift=3mm] {$\bowtie_{\textcolor{orange}{x_2}, \textcolor{blue}{y_3}}$}
					child{node {$s_{\substack{{23}_{x_2y_3}}}$}
					}
					child{node {$t_{\substack{3_{y_3}}}$}
					}
				}
			  }
			};
	\end{tikzpicture}
	\caption{Domain-annotated $\factorizationtree$
	}
	\end{subfigure}

	\vspace{3mm}

	\begin{subfigure}{5cm}
		\centering
		\begin{tikzpicture}[grow=right,<-, level distance=3mm, 
			level 1/.style={sibling distance=15mm, level distance=8mm},
			level 2/.style={sibling distance=5mm},
			level 3/.style={sibling distance=10mm},
			level 3/.style={sibling distance=3mm}]
			\node [xshift=0mm] {}
			child {node {$\textcolor{blue}{y_1}$}
					child {node {$\textcolor{blue}{x_1}$}
				}
			edge from parent[draw=none]}
			child {node {$\textcolor{blue}{x_2}$}
				child{node [yshift=-3mm] {$\textcolor{blue}{y_2}$}
				}
				child{node [yshift=3mm] {$\textcolor{blue}{y_3}$}
				}
			edge from parent[draw=none]
			};
	\end{tikzpicture}
	\caption{VEO Factorization Forest (\veoff)}
	\end{subfigure}
	\hspace{3mm}	%
	\begin{subfigure}[b]{3cm}
		\centering
			\begin{tikzpicture}[grow=right,<-, level distance=8mm, baseline=-1ex] 
			\node at (-0.2,0) (w1) {${\{r_1,s_{11},t_1\}}$:};
			\node at (0.8,0) (y) {$y_1$} child {node (x) {$x_1$}};
			\draw [-,opacity=0, line width =10pt, line cap=round] 
			(y.center) -- (x.center) node [below, opacity=0] {S};
			\node at (-0.2,1) (w2) {${\{r_2,s_{22},t_2\}}$:};
			\node at (0.8,1) (x2) {$x_2$} child {node (y2) {$y_2$}};
			\node at (-0.2,2) (w2) {${\{r_2,s_{23},t_3\}}$:};
			\node at (0.8,2) (x2) {$x_2$} child {node (y2) {$y_3$}};
	\end{tikzpicture}
	\vspace{-4mm}
	\caption{$\veo$ instances}
	\end{subfigure}	
\hspace{1mm}
\begin{subfigure}[b]{3cm}
\centering
	\begin{tikzpicture}[grow=right,<-, level distance=8mm] 
	\node at (0,0) (w1) {$v_2$:};
	\node at (0.5,0) (y) {$y$} child {node (x) {$x$}};
	\draw [-,opacity=0.15, line width =10pt, color=dg, line cap=round] (x.center) -- (x.center) node [above, opacity=1] {R} ;
	\draw [-,opacity=0.15, line width =10pt, color=orange, line cap=round] 
	(y.center) -- (x.center) node [below, opacity=1] {S};
	\draw [-,opacity=0.15, line width =10pt, color=blue, line cap=round] (y.center) -- (y.center) node [above, opacity=1] {T};
	\node at (0,1.5) (w3) {$v_1$:};
	\node at (0.5,1.5) (x3) {$x$} child {node (y3) {$y$}};
	\draw [-,opacity=0.15, line width =10pt, color=dg, line cap=round] (x3.center) -- (x3.center) node [above, opacity=1] {R} ;
	\draw [-,opacity=0.15, line width =10pt, color=orange, line cap=round] 
	(x3.center) -- (y3.center) node [below, opacity=1] {S};
	\draw [-,opacity=0.15, line width =10pt, color=blue, line cap=round] (y3.center) -- (y3.center) node [above, opacity=1] {T};
	\end{tikzpicture}
\vspace{-3mm}	
\caption{$\veo$s}
\label{subfig:veos}
\end{subfigure}	
\hfill
\caption{
Representation of a factorization as a mapping of witnesses to $\veo$s for an example database under query $\qtwostar$.
\cref{thm:factorizationveoff} shows the correspondence of (a) via (b) to (c). 
\cref{thm:minfacveo} shows the correspondence of (c) via (d) to (e) for some minimal factorization tree.}
\label{fig:factorizationsveos}
\end{figure*}

\introparagraph{Factorizations}
In order to find the minimal factorization of a provenance formula, 
we first define a search space of all permissible factorizations.
Each factorized formula can be represented as a \emph{factorization tree}
(or $\factorizationtree$),
where each literal of the formula corresponds to a leaf node,\footnote{A variable may appear in multiple leaves just as it can in a factorized formula.} and
internal nodes denote the $\oplus$ and $\otimes$ operators of the commutative provenance semiring.
The length (size) of a $\factorizationtree$ is the number of leaves.
We allow the semiring operations to be $k$-ary (thus even unary) 
and use \emph{prefix notation} for the operators when writing $\factorizationtree$s in linearized text.
Notice that the space of $\factorizationtree$s is strictly larger than the space of factorized expressions: 
E.g., the $\factorizationtree$ $\otimes(r_1, s_1, t_1)$ is \emph{not} equivalent to 
$\otimes(r_1, \otimes(s_1, t_1))$, 
although they represent the same formula $r_1 s_1 t_1$.
We consider $\factorizationtree$s as equivalent under commutativity i.e.\ we treat $\otimes (r_1,s_1,t_1)$ as equivalent to $\otimes (s_1,t_1,r_1)$.
Furthermore, w.l.o.g., we only consider trees in which the operators $\oplus, \otimes$ \emph{alternate}: 
E.g., $\otimes(r_1, \otimes(s_1, t_1))$ is not alternating but represents the same formula as the alternating tree $\otimes(r_1, \oplus(\otimes(s_1, t_1)))$ using unary $\oplus$. 
Henceforth, we use factorization trees or $\factorizationtree$s as a short form for \emph{alternating factorization trees}.

\introparagraph{Variable Elimination Order (\veo)}
$\factorizationtree$s describe tuple-level factorizations, however, they fail to take into account the structure (and resulting join dependencies) of the query producing the provenance. 
For this purpose, we define  query-specific
\emph{Variable Elimination Orders} ($\veo$s).
They are similar to $\veo$s in general reasoning algorithms, such as bucket elimination~\cite{Dechter:1999:Bucket}
and $\veo$s defined in FDBs~\cite{DBLP:journals/sigmod/OlteanuS16} for the case of no caching 
(i.e.\ corresponding to formulas, not circuits). 
However, our formulation allows each node to have \emph{a set of variables} instead of a single variable\iftoggle{fullappendix}{ (see e.g.\ 
\cref{ex:triangleunaryendtoend} and
\cref{Fig:TriangleUnaryEndToEnd} in the appendix)}{}.
This allows $\veo$s to have a 1-to-1 correspondence to the sequence of variables \emph{projected away} in an ``alternating'' query plan, 
in which projections and joins alternate, just as in our $\factorizationtree$s
\iftoggle{fullappendix}{(details in \cref{SEC:APPENDIX:VEOQP})}{(details in ~\cite{MinfacFull})}. 
Furthermore, we show $\veo$s can be ``annotated'' with a data instance and ``merged'' to form forests that describe a minimal factorization tree of any provenance formula.

\begin{definition}[Variable Elimination Order (\veo)]
	A $\veo$ $v$ of a query $Q$ is a rooted tree 
	whose nodes are labeled with non-empty sets of query variables 
	s.t.\ 
	($i$) each variable of $Q$ is assigned to exactly one node of $v$, and 
	($ii$) all variables $\vec x$ for any atom $R(\vec x)$ in $Q$ must occur in the prefix of some 
	node of $v$.
\end{definition}

\begin{definition}[\veo~instance]
Given a $\veo$ $v$ and witness $\w$, a \emph{\veo~instance} $v\langle\w\rangle$ 
is the rooted tree 
resulting from annotating the variables $\vec x$ in $v$ 
with the domain values of $\w$.
\end{definition}

In order to refer to a $\veo$ in-text, we use a linear notation with parentheses representing sets of children. 
To make it a unique serialization, we need to assume an ordering on the children of each parent.
For notational convenience, we leave out the parentheses for nodes with singleton sets. 
For example, $x \!\leftarrow\! y$, 
(instead of $\{x\} \!\leftarrow\! \{y\}$)
and $\{x,y\}$ are two valid $\veo$s of $\qtwostar$.
We refer to the unique path of a node to the root as its \emph{prefix}.

\begin{example}[$\veo$ and $\veo$ instance]
	Consider the 3-chain Query $\qthreechain \datarule R(x,y), S(y,z), T(z,u)$.
	An example $\veo$ is 
$v = z \!\leftarrow\! (u, y \!\leftarrow\! x)$
	\iftoggle{fullappendix}{(\cref{fig:ex:veo1:2star})}{~\cite{MinfacFull}}.
	To make it a unique serialization, we need to assume an ordering on the children of each parent.
	Notice that our definition of $\veo$ also allows sets of variables as nodes. As an extreme example, the legal query plan
	$P' = \proj{xyzu}\join{}{R(x,y),S(y,z),T(z,y)}$ corresponds to a $\veo$ $v'$ with one single node containing all variables.
	In our short notation, we denote nodes with multiple variables in brackets without commas between the variables to distinguish them from children: $v' = \{xyzu\}$. 

	Now consider a witness $\w = (1,2,3,4)$ for $(x,y,z,u)$,
	which we also write as $\w = (x_1,y_2,z_3,u_4)$.
	The $\veo$ instance of $\w$ for $v$ is then
	$v\langle\w\rangle= z_3 \!\leftarrow\! (u_4, y_2 \!\leftarrow\! x_1)$.
	Notice our notation for domain values arranged in a tree: 
	In order to make the underlying $\veo$ explicit
	(and avoiding expressions such as $v\langle\w\rangle= 3 \!\leftarrow\! (4 \!\leftarrow\! 2, 1)$ which would become quickly ambiguous) 
	we include the variable names explicitly in the $\veo$ instance.
	We sometimes refer to them as ``domain-annotated variables.''
	\label{ex:veoinstance}
	\end{example}

\begin{definition}[\veo~table prefix]
Given an atom $R$ in a query $Q$ and a $\veo$ $v$, 
the \emph{table prefix} $v^R$ is the smallest prefix in $v$ that contains all the variables $\vec x \in \var(R)$.
\end{definition}

Similarly to $v \langle\w\rangle $ denoting an instance of a given $\veo$ $v$ for a specific witness $\w$, 
we also define a \emph{table prefix instance} $v^R \langle\w\rangle$ 
for a given table prefix $v^R$ and witness $\w$.

\begin{example}[\veo~table prefix and \veo~table prefix instance]
	Consider again the $\veo$ 
	$v = z \!\leftarrow\! (u, y \!\leftarrow\! x)$
	in \cref{ex:veoinstance}.
	The table prefix of table $S(y,z)$ on $v$ is $v^S = z \!\leftarrow\! y$. 
	Assume a set of two witnesses $W= \{(x_1,y_1,z_1,u_1), (x_1,y_1,z_1,u_2)\}$.
	Then for both witnesses $\w_1$ and $\w_2$,
	the table prefix instances for $S$ are identical:
	$v^S\langle\w_1\rangle = v^S \langle\w_2\rangle = z_1 \!\leftarrow\! y_1$
	\iftoggle{fullappendix}{(\cref{fig:ex:veotableprefixinstance1:2star})}{~\cite{MinfacFull}}.		%
	\label{ex:tableprefixinstance}
	\end{example}

\begin{definition}[\veo~factorization forest (\veoff)]
	A $\veoff$ $\mathcal{V}$ of provenance $\varphi_p$ of database $D$ over query $Q$ is a forest whose nodes are labeled with non-empty sets of domain-annotated variables, such that: 
	(1) For every $\vec w \in \witnesses(Q,D)$ there exists exactly one subtree in $\mathcal{V}$ that is a $\veo$ instance of $\vec w$ and $Q$;
	(2) There is no strict sub-forest of $\mathcal{V}$ that fulfills condition (1).
\end{definition}

	\begin{example}[VEO Factorization Forest]
		Continuing with the $\qthreechain$ query, and the witnesses $W= \{(x_1,y_1,z_1,u_1), (x_1,y_1,z_1,u_2)\}$ as in \cref{ex:tableprefixinstance}, we illustrate several valid and invalid $\veoff$s \iftoggle{fullappendix}{}{(with accompanying figures in the online appendix~\cite{MinfacFull})}.
		We represent a forest of VEO instances in-text as a set of trees $\{t_1, t_2, ...\}$. 

		The forest $\mathcal{V}_1 = \{x_1 \!\leftarrow\! (u_1, u_2, (y_1 \leftarrow z_1)) \}$ \iftoggle{fullappendix}{(\cref{fig:ex:veoff1})}{} 
		is a valid $\veoff$ since (1) for both $w_1$ and $w_2$ there is exactly one subtree each in $\mathcal{V}$ that is a $\veo$ instance. These subtrees are $x_1 \!\leftarrow\! (u_1, (y_1 \leftarrow z_1))$ and $x_1 \!\leftarrow\! (u_1, (y_1 \leftarrow z_1))$. This $\veoff$ also satisfies property (2) removing any variable would lead to a $\veoff$ that does not satisfy property (1).
		
		The forest $\mathcal{V}_2 = \{x_1 \!\leftarrow\! (u_1, (y_1 \leftarrow z_1)) , y_1 \!\leftarrow\! (z_1, (x_1 \leftarrow u_1))  \}$ 		\iftoggle{fullappendix}{(\cref{fig:ex:veoff2})}{}
		is also a valid $\veoff$ since (1) for both $w_1$ and $w_2$ there is exactly one subtree each in $\mathcal{V}$ that is a $\veo$ instance. These subtrees are $x_1 \!\leftarrow\! (u_1, (y_1 \leftarrow z_1))$ and $y_1 \!\leftarrow\! (z_1, (x_1 \leftarrow u_2))$. This $\veoff$ also satisfies property (2) removing any variable would lead to a $\veoff$ that does not satisfy property (1).
	
		The forest $\mathcal{V}_3 = \{x_1 \!\leftarrow\! (u_1, (y_1 \leftarrow z_1)) , y_1 \!\leftarrow\! (z_1, (x_1 \leftarrow u_1 \leftarrow u_2))  \}$ 		\iftoggle{fullappendix}{(\cref{fig:ex:veoff3})}{}
 		is not a valid $\veoff$, although it satisfies property (1) with the same subtrees above. 
		It does not satisfy property (2) since removing $u_2$ in the second tree would lead to a $\veoff$ that still satisfies property (1).
		\label{ex:veoff}
	\end{example}

\begin{restatable}[Factorizations and $\veo$s]{theorem}{thmfactorizationveo}
There exist transformations from $\factorizationtree$s to $\veoff$s and back such that the transformations can recover the original $\factorizationtree$ for at least one minimal size  $\factorizationtree$ $\varphi'$ of any provenance formula $\varphi_p$.
\label{thm:factorizationveoff}	
\end{restatable}

\begin{proofintuition*}
We describe a transformation from $\factorizationtree$s to $\veoff$s 
via domain-annotated $\factorizationtree$s as intermediate step (\cref{fig:factorizationsveos}).
A domain-annotated $\factorizationtree$ is constructed as follows:
We first replace the $\otimes$ operator with a join ($\bowtie$) and the $\oplus$ operator with a projection ($\pi$) and label the leaves with the domain-annotated variables. 
We then recursively 
label each join and projection bottom-up as follows:
(1) label each $\bowtie$ by the union of variables of its children, 
and
(2) label each $\pi$ with the subset of variables of its children that are not required for subsequent joins (this can be inferred from the query).
To get the $\veoff$ instance, we remove all variables
on joins that appear in ancestor joins.
We remove the leaves and absorb all non-join (projection) nodes into their parents (eliminating the root projection node).

We show that if this transformation succeeds then it is a bijection and can be reversed.
The only case when this transformation fails is when it results in an empty annotation for a node, i.e.\ when there is a join after which no variable is projected away (since by design $\veo$s do not permit empty nodes). 
In that case, the $\factorizationtree$ can always be simplified by removing a $\oplus$ node and merging two $\otimes$ nodes.
\end{proofintuition*}

\begin{restatable}[$\minfact$ with $\veo$s]{theorem}{thmminfacveo}
	There exists a transformation that constructs $\factorizationtree$s of a provenance $\varphi_p$ from mappings of each witness of $\varphi_p$ to a $\veo$ of $Q$, and there exists a mapping that is transformed into a minimal size factorization tree $\varphi'$ of $\varphi_p$ under this transformation.
	\label{thm:minfacveo}
\end{restatable}

\begin{proofintuition*}

From \cref{thm:factorizationveoff}, we know that for every provenance
formula $\varphi_p$ there exists a minimal size $\factorizationtree$ that has a reversible transformation to a $\veoff$. 
We show all such $\veoff$s can be constructed by assigning a $\veo$ to each witness of $\varphi_p$.
This is constructed by defining a merge operation on $\veo$s that greedily merges common prefixes. 
\end{proofintuition*}

\introparagraph{Minimal Variable Elimination Orders ($\mveo$)}
By reducing the problem of finding the minimal factorization to that of assigning a $\veo$ to each witness, 
we have so far shown that $\fact$ is in $\NP$ with respect to data complexity.\footnote{This follows from the fact that the number of witnesses is polynomial in the size of the database, and the number of $\veo$s only depends on the query size.}
However, we can obtain a more practically efficient result by showing that we need not consider all 
$\veo$s, but only the \emph{Minimal Variable Elimination Orders} of a query or $\mveo(Q)$. 
We can define a partial order $\preceq$ on $\veo$s of a query $Q$ as follows: $v_1 \preceq v_2$ 
if for every relation $R_i \in Q$ the variables in the $R_i$ table prefix of $v_1$ are a subset of the variables of the $R_i$ table prefix of $v_2$, i.e.\ $\forall R_i \in Q: \var(v_1^{R_i}) \subseteq \var(v_2^{R_i})$.
$\mveo(Q)$ then is the set of all $\veo$s of $Q$ that are minimal with respect to this partial order $\preceq$.
For $\qtwostar$, there are only two minimal variable elimination orders $x \!\leftarrow\! y$ and $y \!\leftarrow\! x$, but not $\{x,y\}$, and for $\qthreestar$, there are only $6$, despite $13$ possible $\veo$s in total.
Interestingly, $\mveo(Q)$ corresponds exactly to Minimal Query Plans as defined in work on probabilistic databases~\cite{DBLP:journals/vldb/GatterbauerS17}, and we can use this connection to leverage prior algorithms for computing $\mveo(Q)$.

\begin{restatable}[$\minfact$ with $\mveo$s]{theorem}{thmminfactminveo}
There exists a transformation that constructs $\factorizationtree$s of a provenance $\varphi_p$ from mappings of each witness of $\varphi_p$ to a $\veo$ $v \in \mveo(Q)$, and there exists a mapping that is transformed into a minimal size factorization tree $\varphi'$ of $\varphi_p$ under this transformation.
\label{thm:minfactminveo}
\end{restatable}

\section{ILP Formulation for $\minfact$}
\label{SEC:ILP}
\label{sec:ilp}

Given a set of witnesses $W=\witnesses(Q,D)$ for a query $Q$ over some database $D$, 
we can use the insight of \cref{thm:minfactminveo} to describe a 0-1 Integer Linear Program (ILP) $\minfactilp$ \cref{eqn:ilp} that chooses a $v \in \mveo$ or equivalently a minimal query plan, for each $\w \in W$, s.t.\ the resulting factorization is of minimal size.
The size of the ILP is polynomial in $n= |D|$ and exponential in the query size.

\introparagraph{ILP Decision Variables} The ILP is based on two sets of binary variables: 
Query Plan Variables ($\QPV$) 
$q$ 
use a one-hot encoding for the choice of a minimal $\veo$ (or equivalently minimal query plan) for each witness, 
and Prefix Variables ($\PV$) $p$
encode sub-factorizations that are a consequence of that choice.
Intuitively, shared prefixes encode shared computation through factorization.

\emph{(1) Query Plan Variables} ($\QPV$):
	For each witness $\w \in W$ 
	and each minimal $\veo$ $v \in \mveo(Q)$ we define a binary variable $q[v\langle\w\rangle]$, 
	which is set to $1$ iff $\veo$ $v$ is chosen for witness
	$\w$.\footnote{Notice that we use indexing in brackets 
	$q[v\langle\w\rangle]$ 
	instead of the more common
	subscript notation $q_{v\langle\w\rangle}$ 
	since each $v\langle\w\rangle$ can depict a tree. 
	Our bracket notation is more convenient.}

\emph{(2) Prefix Variables} ($\PV$):
All witnesses must be linked to a set of prefix variables, by creating instances of $\veo$ prefixes that are in a query-specific set called the \emph{Prefix Variable Format} ($\PVF$). 
This set $\PVF$ is composed of all table prefixes of all minimal $\veo$s $v \in \mveo(Q)$. 
Notice that prefix variables can be shared by multiple witnesses, 
which captures the idea of joint factorization.
Additionally, we define a \emph{weight} 
(or cost\footnote{We write $\wt$ for weight (or cost) to avoid confusion with witnesses $\w$.}) 
$c(v^R)$ for each table prefix $v^R \in \PVF$; 
this weight is equal to 
the number of tables that have the same table prefix for a given $\veo$. 
From that $\PVF$ set, then binary prefix variables $p[v^R\langle\w\rangle]$ 
are defined for each table prefix $v^R \in \PVF$ and $\w \in W$.

\introparagraph{ILP Objective}
\label{subsec:ilp-obj}
The ILP should minimize the length of factorization $\texttt{len}$, which can be calculated by counting the number of times each tuple is written. 
If a tuple is a part of multiple witnesses, it may be repeated in the factorization. 
However, if the tuple has \emph{the same table prefix instance across different 
witnesses} (whether as part of the same $\veo$ or not),
then those occurrences are factorized together and the tuple is written just once in the factorization. 
Thus, $\texttt{len}$ is the weighted sum of all selected table prefix instances. 
The weight accounts for tuples of different tables that have the same table prefix under the same VEO.
Since $p[v^R\langle\w\rangle]=0$ for unselected table prefixes,  we can calculate $\texttt{len}$ as:

\begin{equation}
    \texttt{len} = \sum_{v^R\langle\w\rangle \in \PV} \wt(v^R) \cdot p[v^R\langle\w\rangle]
\end{equation}

\begin{figure}
	\begin{equation}
		\label{eqn:ilp}
		\begin{array}{ll@{}ll}
		\min  	& \!\!\!\!\!\sum\limits_{v^R\langle\w\rangle \in \PV} \wt(v^R) \cdot p[v^R\langle\w\rangle] \\
		\textrm{s.t.}	
			& \!\!\!\!\!\!\!\sum\limits_{v \in \mveo(Q)} \!\!\! q[v\langle\w\rangle] \geq 1, \quad &\forall \w \in W\\
			& \!\!\!\!\!\!\!p[v^R\langle\w\rangle] \geq 
			\sum_{
			v^R\langle\w\rangle 
			\textrm{ prefix of } 
			v\langle\w\rangle} 
			q[v\langle\w\rangle], \quad\ \  &\forall p[v^R\langle\w\rangle] \in \PV \\
			& \!\!\!\!\!\!\!p[v^R\langle\w\rangle] \in \{0,1\}, 	&\forall p[v^R\langle\w\rangle] \in \PV \\
			& \!\!\!\!\!\!\!q[v\langle\w\rangle]\in \{0,1\}, 		&\forall q[v\langle\w\rangle] \in \QPV
		\end{array}
	\end{equation}
	\caption{ILP Formulation for $\minfact$}
	\end{figure}

\introparagraph{ILP Constraints}
A valid factorization of $W$ must satisfy three types of constraints:

\emph{(1) Query Plan Constraints}:
    For every witness $\w \in W$, some $v \in \mveo(Q)$ must be selected.\footnote{
    We wish to have exactly one query plan or minimal $\veo$ per witness, but in a minimization problem, it suffices to say that at least one $v \in \mveo$ 
	is selected - if multiple are selected, either one of them arbitrarily still fulfills all constraints.
	}
	For example, for $\w= (x_1,y_1)$ under $\qtwostar$, 
	we enforce that:
    $q[x_1 \!\leftarrow\! y_1]
	+ 
	q[y_1 \!\leftarrow\! x_1] \geq 1$.
      
\emph{(2) Prefix Constraints}:
For any given table prefix $p$, it must be selected if any one of the $\veo$s that has it as a prefix is selected.
Since (under a minimization optimization) only one $\veo$ is chosen per witness, we can say that the value of 
$p[v^R\langle\w\rangle]$ must be at least as much as the sum of all query plan variables $q[v\langle\w\rangle]$ such that $v^R\langle\w\rangle$ is a prefix of $v\langle\w\rangle$. 
For example, we enforce that $p[x_1]$ must have value at least as much as $q[x_1 \leftarrow y_1 \leftarrow z_1]$ + $q[x_1 \leftarrow z_1 \leftarrow y_1]$.
But we cannot enforce $p[x_1] \geq q[x_1 \leftarrow y_1 \leftarrow z_1] + q[x_1 \leftarrow z_1 \leftarrow y_2]$ (as both $\veo$s do not belong to the same witness.)

\emph{(3) Boolean Integer Constraints}:
Since a $\veo$ is either selected or unselected, we set all variables in $\PV$ and $\QPV$ to 0 or 1.

\begin{restatable}[ILP correctness]{theorem}{thmilpcorrectness}\label{thm:ilpcorrectness}
	The objective of $\minfactilp$ 
	for a query $Q$ and database $D$ is always equal to $\minfact(Q,D)$.
\end{restatable}

\begin{corollary}
	$\fact$, the decision variant of $\minfact$,
	 is in \NP.
\end{corollary}

\begin{figure}[t]
	\centering
	\begin{subfigure}[b]{.42\columnwidth}
		\centering
		\hspace{6mm}
		\begin{tikzpicture}[grow=right,<-, level distance=10mm, baseline=-0.5ex] \node(y){$y$}
		    child {node(z){$z$}
		      child {node(u){$u$}
		      }
		    }
		    child {node(x){$x$}
		    }
			; 
		    \draw [-,opacity=0.15, line width =10pt, color=dg, line cap=round] 
				(y.center) -- (x.center) node [above, opacity=1] {R} ;
		    \draw [-,opacity=0.15, line width =10pt, color=brown, line cap=round] 
				(y.center) -- (z.center) node [above, opacity=1] {S};
		    \draw [-,opacity=0.15, line width =10pt, color=blue, line cap=round, label=above:\(z\)] 
				(z.center) -- (u.center) node [above, opacity=1] {T};
			\draw [-,opacity=0, line width =10pt, color=orange, line cap=round] (x.center) -- (x.center) node [below, opacity=1] {$1$};
	        \draw [-,opacity=0, line width =10pt, color=orange, line cap=round] (y.center) -- (y.center) node [below, opacity=1] {$0$};
	        \draw [-,opacity=0, line width =10pt, color=orange, line cap=round] (z.center) -- (z.center) node [below, opacity=1] {$1$};
	        \draw [-,opacity=0, line width =10pt, color=orange, line cap=round] (u.center) -- (u.center) node [below, opacity=1] {$1$};
		\end{tikzpicture}
		\caption{$v_1 = y \!\leftarrow\! (x, z \!\leftarrow\! u)$}
		\label{3Chain:VEO1}
	\end{subfigure}
	\begin{subfigure}[b]{.42\columnwidth}
		\centering
		\hspace{6mm}	
		\begin{tikzpicture}[grow=right,<-, level distance=10mm, baseline=-0.5ex] \node(z){$z$}
		    child {node(y){$y$}
		      child {node(x){$x$}
		      }
		    }
		    child {node(u){$u$}
		    }
			; 
		    \draw [-,opacity=0.15, line width =10pt, color=dg, line cap=round] 
				(y.center) -- (x.center) node [above, opacity=1] {R} ;
		    \draw [-,opacity=0.15, line width =10pt, color=brown, line cap=round] 
				(z.center) -- (y.center) node [above, opacity=1] {S};
		    \draw [-,opacity=0.15, line width =10pt, color=blue, line cap=round] 
				(z.center) -- (u.center) node [above, opacity=1] {T};
			
			\draw [-,opacity=0, line width =10pt, color=orange, line cap=round] (x.center) -- (x.center) node [below, opacity=1] {$1$};
	        \draw [-,opacity=0, line width =10pt, color=orange, line cap=round] (y.center) -- (y.center) node [below, opacity=1] {$1$};
	        \draw [-,opacity=0, line width =10pt, color=orange, line cap=round] (z.center) -- (z.center) node [below, opacity=1] {$0$};
	        \draw [-,opacity=0, line width =10pt, color=orange, line cap=round] (u.center) -- (u.center) node [below, opacity=1] {$1$};
		\end{tikzpicture}
		\caption{$v_2 = z \!\leftarrow\! (u, y \!\leftarrow\! x)$ 
		}
		\label{3Chain:VEO2}
	\end{subfigure}
	\caption{\Cref{ex:3chain}: $\mveo$ for 3-chain query $\qthreechain$.
	}\label{WeightedBipartite}
\end{figure}

\begin{example}[ILP formulation for 3-chain query]
\label{ex:3chain}
Consider the 3-Chain query $\qthreechain \datarule R(x,y),$ $ S(y,z), T(z,u)$ 
with a set of 2 witnesses $W= \{(x_1,y_1,z_1,u_1),$ $(x_1,y_1,z_1,u_2)\}$
and provenance in DNF of $r_{11}s_{11}t_{11} + r_{11}s_{11}t_{12}$.
Using the dissociation based algorithm \cite{DBLP:journals/vldb/GatterbauerS17}, 
we see that this query has $2$ minimal query plans corresponding to the two $\veo$s shown in 
\cref{WeightedBipartite}.
We use these $\veo$s to first build the set $\QPV$ (Query Plan Variables) and enforce a query plan constraint for each of the $2$ witnesses: 
\begin{align*}
    q[y_1 \!\leftarrow\! (x_1, z_1 \!\leftarrow\! u_1)]+ q[z_1 \!\leftarrow\! (u_1, y_1 \!\leftarrow\! x_1)] & \geq 1 \\
    q[y_1 \!\leftarrow\! (x_1, z_1 \!\leftarrow\! u_2)]+ q[z_1 \!\leftarrow\! (u_2, y_1 \!\leftarrow\! x_1)] & \geq 1 
\end{align*}

Then, we calculate the elements of the set $\PVF$ (Prefix Variable Format) as well as their weights. 
For the two $\veo$s from \cref{WeightedBipartite}, and the three tables $R$, $S$, $T$, we get 6 distinct table prefixes: 
\begin{center}
\renewcommand{\arraystretch}{1.2}
\setlength{\tabcolsep}{3mm}
\begin{tabular}[t]{ >{$}l<{$} | >{$}l<{$}}
	\multicolumn{1}{c |}{VEO $v_1$}				& \multicolumn{1}{c}{VEO $v_2$}		\\
	\hline
	v_1^R = y\!\leftarrow\! x					& v_2^R	= z\!\leftarrow\!y\!\leftarrow\!x\\
	v_1^S = y\!\leftarrow\! z					& v_2^S = z\!\leftarrow\!y	\\
	v_1^T = y\!\leftarrow\!z\!\leftarrow\!u		& v_2^T = z \!\leftarrow\!u
\end{tabular}	
\end{center}

\noindent
We add all these table-prefixes to the $\PVF$. 
Since no table prefix is repeated, they all are assigned weight $c=1$.
Notice that prefixes $y$ for $v_1$ and $z$ for $v_2$ are no table prefixes (and thus have weight $c= 0$
and do not participate in the objective).

From the set of table prefixes $\PVF$, 
we then create the set of prefix variables $\PV$, one for each table prefix and each witness $\w \in W$, 
and define their prefix constraints.
The prefix constraints necessary for witness $\w_1=(x_1,y_1,z_1,u_1)$ are as follows:
\begin{align*}
     p[y_1 \!\leftarrow\! x_1] &\geq q[y_1 \!\leftarrow\! (x_1, z_1 \!\leftarrow\! u_1)] &
     p[y_1 \!\leftarrow\! z_1]  & \geq q[y_1 \!\leftarrow\! (x_1, z_1 \!\leftarrow\! u_1)] \\
     p[y_1 \!\leftarrow\! z_1 \leftarrow u_1] &\geq q[y_1 \!\leftarrow\! (x_1, z_1 \!\leftarrow\! u_1)] &
     p[z_1 \!\leftarrow\! y_1 \leftarrow x_1] &\geq q[z_1 \!\leftarrow\! (u_1, y_1 \!\leftarrow\! x_1)] \\
     p[z_1 \!\leftarrow\! y_1] & \geq q[z_1 \!\leftarrow\! (u_1, y_1 \!\leftarrow\! x_1)] &
     p[z_1 \!\leftarrow\! u_1] &\geq q[z_1 \!\leftarrow\! (u_1, y_1 \!\leftarrow\! x_1)] 
\end{align*}
The prefix constraints for witness $\w_2=(x_1,y_1,z_1,u_2)$ are:
\begin{align*}
     p[y_1 \!\leftarrow\! x_1] &\geq q[y_1 \!\leftarrow\! (x_1, z_1 \!\leftarrow\! u_2)] &
     p[y_1 \!\leftarrow\! z_1]  & \geq q[y_1 \!\leftarrow\! (x_1, z_1 \!\leftarrow\! u_2)] \\
     p[y_1 \!\leftarrow\! z_1 \leftarrow u_2] &\geq q[y_1 \!\leftarrow\! (x_1, z_1 \!\leftarrow\! u_2)] &
     p[z_1 \!\leftarrow\! y_1 \leftarrow x_1] &\geq q[z_1 \!\leftarrow\! (u_2, y_1 \!\leftarrow\! x_1)] \\
     p[z_1 \!\leftarrow\! y_1] & \geq q[z_1 \!\leftarrow\! (u_2, y_1 \!\leftarrow\!x_1)] &
     p[z_1 \!\leftarrow\! u_2] &\geq q[z_1 \!\leftarrow\! (u_2, y_1 \!\leftarrow\! x_1)]
\end{align*}
Notice that we have 12 constraints (one for each pair of witness and table prefix), yet only 8 distinct prefix variables
due to common prefixes across the two witnesses 
(which intuitively enables shorter factorizations).
For this query, for every witness, there are $6$ prefix variables in the objective (some of which are used by multiple witnesses),
$1$ Query Plan constraint, and $6$ Prefix constraints.

Finally, we define the objective to minimize the weighted sum of all 8 prefix variables in $PV$ (here all weights are 1):
\begin{align*}
\texttt{len} = \,\,& 	p(y_1 \leftarrow x_1) +
			p[y_1 \leftarrow z_1]  +
			p[y_1 \leftarrow z_1 \leftarrow u_1] + 
			p[z_1 \leftarrow y_1 \leftarrow x_1] + 
			p[z_1 \leftarrow y_1]  +
			\\
			&
			p[z_1 \leftarrow u_1] + 
 		   	p[y_1 \leftarrow z_1 \leftarrow u_2] + 
			p[z_1 \leftarrow u_2]
\end{align*}

In our given database instance, $\texttt{len}$ has an optimal value of $4$ when the prefixes $p[z_1 \leftarrow y_1 \leftarrow x_1] $, $p[z_1 \leftarrow y_1]$, $p[z_1 \leftarrow u_1]$ and $p[z_1 \leftarrow u_2]$ are set to $1$. This corresponds to the minimal factorization $r_{11}s_{11}(t_{11} + t_{12})$.

\end{example}

\section{$\PTIME$ Algorithms}
\label{sec:approx}
\label{SEC:APPROX}

We provide two $\PTIME$ algorithms: 
(1) a Max-Flow Min-Cut (MFMC) based approach and 
(2) an LP relaxation from which we obtain a rounding algorithm that gives a guaranteed $|\mveo|$-approximation for all instances. 
Interestingly, \cref{SEC:EASYCASE} will later show that both algorithms 
(while generally just approximations)
give \emph{exact answers} for all currently known $\PTIME$ cases of $\minfact$.

\subsection{MFMC-based Algorithm for $\minfact$}
\label{SEC:MINCUT}
\label{sec:mincut}

Given witnesses $W$ and $\mveo(Q)$,
we describe the construction of a \emph{factorization flow graph} $F$
s.t.\ any \emph{minimal cut} of $F$ corresponds to a factorization of $W$.
A minimal cut of a flow graph is the smallest set of nodes whose removal disconnects the source ($\bot$) and target ($\top$) nodes~\cite{williamson2019network}.
Since minimal cuts of flow graphs can be found in $\PTIME$~\cite{Dasgupta:2008le}, 
we obtain is a $\PTIME$ approximation for 
$\minfact$.

\subsubsection{Construction of a factorization flow graph}
\label{sec:constructingFlow}
\label{SEC:CONSTRUCTINGFLOW}

We construct a  flow graph $F$ 
s.t.\ there exists a valid factorization of $W$ of length $\leq c$
if the graph has a cut of size $c$.
$F$ is constructed by transforming decision variables in the ILP into "cut" nodes in the flow graph 
that may be cut at a penalty equal to their weight. 
Any valid cut of the graph selects nodes that fulfill all constraints of the ILP. 
We prove this by describing the construction of $F$ (\cref{fig:mincut-general}). 

\emph{(1) $\mveo$ order}: 
	We use $\Omega$ to describe a total order on $\mveo$ (i.e.\,a total order on the set of minimal query plans). 
	In \cref{fig:mincut-general}, $\mveo$ is ordered by 
	$\Omega= (v_1, v_2, \ldots, v_k)$ where $k= |\mveo|$.

\emph{(2) $\QPV$}:
	For each witness $\vec w$, connect the query plan variables ($\QPV$) as defined by $\Omega$. 
	Since all paths from source to target must be disconnected, 
	at least one $\mveo$ must be cut from this path.
	Thus, $F$ enforces the \emph{Query Plan Constraints}.

\emph{(3) $\PV$}:
	For each witness $\vec w$ and prefix variable $p$, 
	identify the first and last query variable for which $p$ is a prefix 
	and connect the corresponding prefix variable node to connector nodes before and after these query variables.
	For example, in \cref{fig:mincut-general}, for $\w_2$, 
	$p_1$ starts at $q[{v_1\angle{\w_2}}]$ and ends at $q[{v_2\angle{\w_2}}]$
	implying that $p_1$ is a prefix for $q[{v_1\angle{\w_2}}]$ and $q[{v_2\angle{\w_2}}]$, but no query plan after that.
	Now if either of $q[{v_1\angle{\w_2}}]$ or $q[{v_2\angle{\w_2}}]$ are in the minimal cut, the graph is not disconnected until $p_1$ is added to the cut as well.
	These nodes guarantee that $F$ enforces \emph{Prefix Constraints}.

\emph{(4) Weights}:
	Assign each $q$ and $p$ node in $F$ the same weight as in the ILP objective. 
	Recall that this weight is the number of tables with the same table prefix 
	under the same $\veo$ and that it helps calculate the correct factorization length.

Thus, a min-cut of $F$ contains at least one plan for each witness, along with all the prefixes that are necessary for the plan. 
This guarantees a valid (although not necessarily minimal) factorization.\footnote{If a min-cut contains more than one plan for a witness, then one can pick either of the plans arbitrarily to obtain a valid factorization.}

\begin{figure}
    \centering
    \includegraphics[scale=0.35]{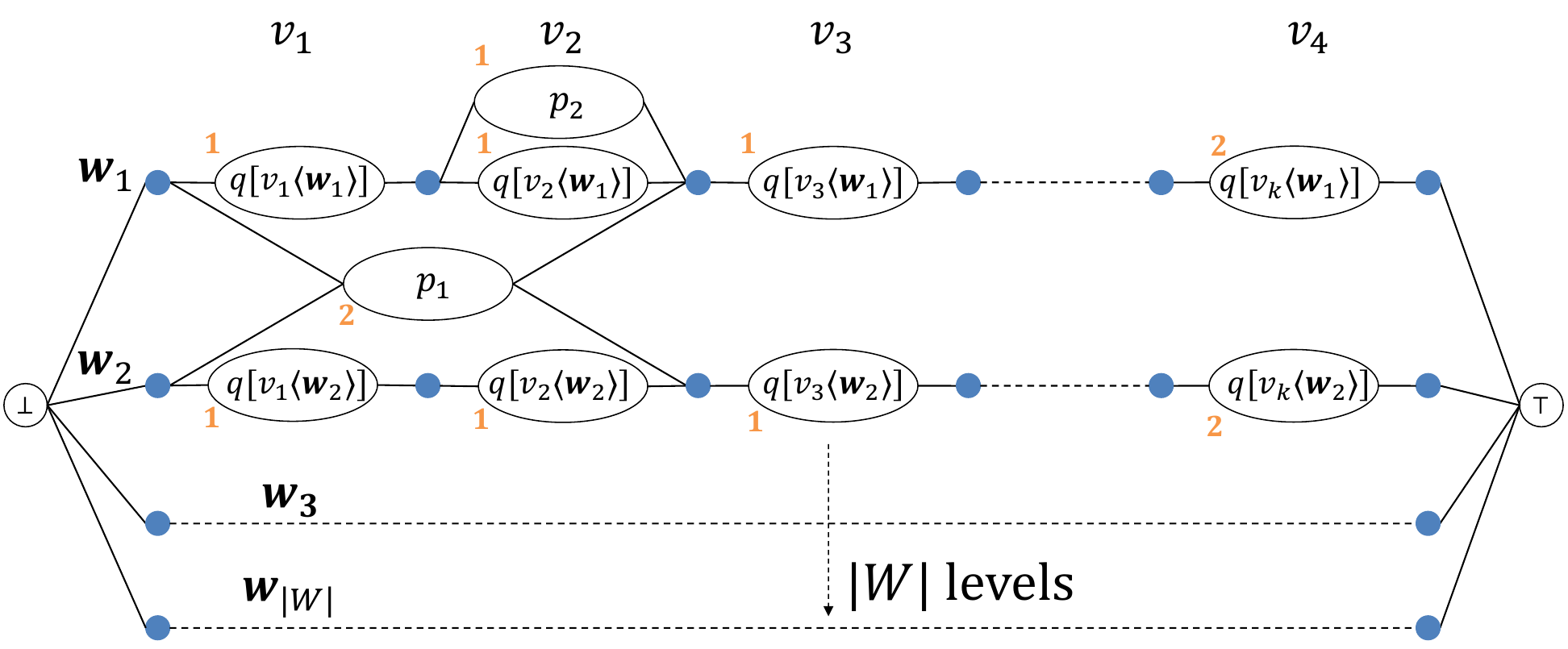}
    \caption{
	A flow graph $F$ for $\minfact$.
	The goal is to disconnect the source and the target nodes with minimum cuts.
	White $q$ and $p$ nodes can be cut and have capacities (in orange) equal to the weights of the corresponding variables in the ILP objective.
	Edges and connector nodes (in blue) have infinite capacity and cannot be cut.
} 
    \label{fig:mincut-general}
\end{figure}

\subsubsection{When is the MFMC-based algorithm optimal?}
\label{sec:heuristic-when-optimal}

In the previous subsection we saw that a min-cut of $F$ always represents a valid factorization.
However, the converse is not true:  there can be \emph{factorizations that do not correspond to a cut}.
The reason is that \emph{spurious constraints} might arise 
by the interaction of paths;
those additional constraints no longer permit the factorization.
There are two types of spurious paths:

\introparagraph{(1) Spurious Prefix Constraints} 
Spurious prefix constraints arise when a prefix node $p$ is in parallel with a query node $q$ of which it is not a prefix. This happens when a $q$ is not prefixed by $p$, 
but other query plans before and after are. 
To avoid this, the ordering $\Omega$ must be a \emph{Running-Prefixes (RP) ordering}.\footnote{Notice that this concept is reminiscent of the \emph{running intersection} property~\cite{Beeri+83,DBLP:journals/siamcomp/BernsteinG81} and the \emph{consecutive ones} property~\cite{ICDE2024:HITSnDIFFs}.}
		
\begin{definition}[Running-Prefixes (RP) ordering]
An ordering 
$\Omega= (q_1, q_2, \ldots,q_k)$
is an RP Ordering
and satisfies the RP-Property
iff for any $p$ that is a prefix for both
$q_i$ and $q_j$ ($i < j$), 
$p$ is a prefix for all $q_k$ with $i \leq k \leq j$.
\end{definition}

\begin{example}[RP ordering]
	\label{ex:rpordering}
Assume $\mveo\!=\!\{(x\!\leftarrow\!y\!\leftarrow\!z), (x\!\leftarrow\!z\!\leftarrow\!y), (z\!\leftarrow\!y\!\leftarrow\!x) \}$.
Then
$\Omega_1\!=\!((x\!\leftarrow\!y\!\leftarrow\!z), (z\!\leftarrow\!y\!\leftarrow\!x), (x\!\leftarrow\!z\!\leftarrow\!y ))$ is not an RP ordering 
since  
the 1st and 3rd
VEO share prefix $x$, however 
the 2nd starts with $z$.
In contrast, $\Omega_2 = ((x\!\leftarrow\!y\!\leftarrow\!z),(x\!\leftarrow\!z\!\leftarrow\!y),(z\!\leftarrow\!y\!\leftarrow\!x))$ is an RP ordering. 
\end{example}

It turns out that for some queries 
RP-Orderings are impossible
(such as $\qsixcyclewe$\iftoggle{fullappendix}{, see \cref{fig:6CycleWE-mveo} in \cref{sec:nestingRP}}{~\cite{MinfacFull}}). 
However, we are able to adapt our algorithm for such queries with a simple extension called \emph{nested orderings}. 
We first define two query plans as \emph{nestable} if each query plan can be ``split'' into paths from root to leaf such that they have an equal number of resulting paths, and that the resulting paths can be mapped to each other satisfying the property that corresponding paths use the same set of query variables.
\emph{Nested orderings} then are partial orders of query plans such that the pair of query plans may be uncomparable iff they are nestable. 
Finally, we define Nested RP-orderings as those such that all paths in the partial nested order satisfy the RP property.
Intuitively, nested orderings add parallel paths for a single witness to model independent decisions. 
We can now prove
that there always is an ordering that avoids spurious prefix constraints. 

\begin{restatable}[Running Prefixes (RP) Property]{theorem}{thmrpproperty}
	\label{lem:RunningPrefix}
	For any query, there is a simple or nested ordering $\Omega$ that satisfies the RP Property.	
\end{restatable}

\introparagraph{(2) Spurious Query  Constraints} 
Query Plan constraints are enforced by paths from source to target 
such that at least one node \emph{from each path} must be chosen for a valid factorization. 
Due to sharing of prefix variables, 
these paths can interact and can lead to additional \emph{spurious paths}
that place additional \emph{spurious constraints} 
on the query nodes. 
The existence of spurious query constraints does not necessarily imply that the algorithm is not optimal. 
In fact, \cref{sec:easycase} shows that  $\qtriangleunary$ and $\qfourchain$ have spurious query paths, 
yet \emph{the min-cut is guaranteed to correspond to the minimal factorization}.
However, if the presence of spurious query paths prevents any of the minimal factorizations to be a min-cut for $F$, 
then we say that the factorization flow graph $F$ has ``\emph{leakage}'' \iftoggle{fullappendix}{(\cref{ex:leakage} in appendix)}{~\cite{MinfacFull}}. 
Since all paths along one witness are cut by construction, 
a leakage path must contain nodes from at least two different witnesses.

\begin{definition}[Leakage] 
Leakage exists in a factorization flow graph 
if no minimal factorization is a valid cut of the graph. 
A leakage path is a path from source to target such that
a valid minimal factorization is possible without using any node on the path. 
\end{definition}

\introparagraph{Optimality of algorithm}
Thus, a solution found by the MFMC-based algorithm is guaranteed to be optimal
if it has two properties:
\begin{enumerate}[nosep]
    \item The ordering $\Omega$ is a \emph{Running-Prefixes ordering} or a \emph{nested Running-Prefixes ordering} (always possible).
    \item There is no \emph{leakage} in the flow graph (not always possible).
\end{enumerate}

We use these properties in \cref{sec:read:once} and \cref{sec:easycase} 
to prove a number of queries 
to be in $\PTIME$.
In fact, all currently known $\PTIME$ cases can be solved exactly with the MFMC-based algorithm via a query-dependent ordering of the $\mveo$s.

\begin{figure}
	\centering
	\begin{subfigure}[b]{.18\linewidth}
		\centering
		\begin{tikzpicture}[grow=right,<-, level distance=11mm, baseline=-0.5ex] \node(xy){$xy$} child {node(z){$z$} };
		\draw [-,opacity=0.15, line width =20pt, color=brown, line cap=round] (xy.center) -- (z.center) node [above, opacity=1] {S};
		\draw [-,opacity=0.15, line width =20pt, color=blue, line cap=round] (xy.center) -- (z.center) node [below, opacity=1] {T};
		\draw [-,opacity=0.5, line width =20pt, color=dg, line cap=round] (xy.center) -- (xy.center) node [above, opacity=1] {R} ;
		\draw [-,opacity=0, line width =40pt, color=orange, line cap=round] (xy.center) -- (xy.center) node [below, opacity=1] {$1$};
		\draw [-,opacity=0, line width =40pt, color=orange, line cap=round] (z.center) -- (z.center) node [below, opacity=1] {$2$};
		\end{tikzpicture}
		\caption{$v_1 = xy \!\leftarrow\! z$} \label{Tri:VEO1}
	\end{subfigure}
	\begin{subfigure}[b]{.18\linewidth}
		\centering
		\begin{tikzpicture}[grow=right,<-, level distance=11mm, baseline=-0.5ex] \node(yz){$yz$} child {node(x){$x$} };
		\draw [-,opacity=0.15, line width =20pt, color=blue, line cap=round] (yz.center) -- (x.center) node [below, opacity=1] {T};
		\draw [-,opacity=0.15, line width =20pt, color=dg, line cap=round] (yz.center) -- (x.center) node [above, opacity=1] {R} ;
		\draw [-,opacity=0.5, line width =20pt, color=brown, line cap=round] (yz.center) -- (yz.center) node [above, opacity=1] {S};
		\draw [-,opacity=0, line width =40pt, color=orange, line cap=round] (yz.center) -- (yz.center) node [below, opacity=1] {$1$};
		\draw [-,opacity=0, line width =40pt, color=orange, line cap=round] (x.center) -- (x.center) node [below, opacity=1] {$2$};
		\end{tikzpicture}
		\caption{$v_2 = yz \!\leftarrow\! x$} \label{Tri:VEO2}
	\end{subfigure}
	\begin{subfigure}[b]{.18\linewidth}
		\centering
		\begin{tikzpicture}[grow=right,<-, level distance=11mm, baseline=-0.5ex] \node(zx){$zx$} child {node(y){$y$} };
		\draw [-,opacity=0.15, line width =20pt, color=dg, line cap=round] (zx.center) -- (y.center) node [above, opacity=1] {R} ;
		\draw [-,opacity=0.15, line width =20pt, color=brown, line cap=round] (zx.center) -- (y.center) node [below, opacity=1] {S};
		\draw [-,opacity=0.5, line width =20pt, color=blue, line cap=round] (zx.center) -- (zx.center) node [below, opacity=1] {T};
		\draw [-,opacity=0, line width =40pt, color=orange, line cap=round] (zx.center) -- (zx.center) node [below, opacity=1] {$1$};
		\draw [-,opacity=0, line width =40pt, color=orange, line cap=round] (y.center) -- (y.center) node [below, opacity=1] {$2$};
		\end{tikzpicture}
		\caption{$v_3 = zx \!\leftarrow\! y$} \label{Tri:VEO3}
	\end{subfigure}
	\begin{subfigure}[b]{0.3\linewidth}
		\centering
		\begin{align*}
		\begin{minipage}[t]{50mm}			
		\centering
		\setlength{\tabcolsep}{0.4mm}
			\mbox{
					\begin{tabular}[t]{ >{$}c<{$} | >{$}c<{$} >{$}c<{$} }
					 R	& x & y	\\
					\hline
					r_{00}	& 0	& 0		\\
					r_{01}	& 0 & 1									
					\end{tabular}			
			}
			\hspace{2mm}
			\mbox{
					\begin{tabular}[t]{ >{$}c<{$} | >{$}c<{$} >{$}c<{$} >{$}c<{$} >{$}c<{$}}
					 S		& y		& z\\
					\hline
					s_{00}	& 0		& 0		\\
					s_{10}	& 1		& 0		
					\end{tabular}
			}
			\hspace{2mm}
			\mbox{
					\begin{tabular}[t]{ >{$}c<{$} | >{$}c<{$} >{$}c<{$}}
					 T	& z	& x		\\
					\hline
					t_{00}	& 0	& 0	\
					\end{tabular}			
			}
		\end{minipage}
		\end{align*}
	\caption{Database instance $D$
	}\label{tab:Triangle-mincut-example-database}
	\end{subfigure}
	
	\vspace{3mm}
	
	\begin{subfigure}[b]{.99\linewidth}
		\centering
		 \includegraphics[scale=0.4]{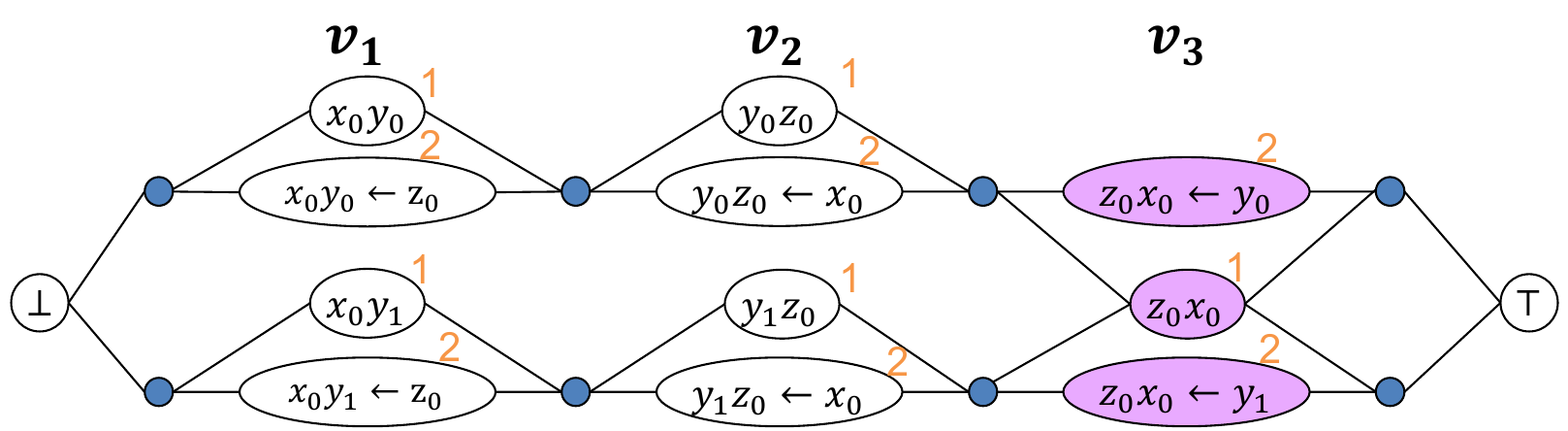}
			\caption{Flow graph $F$.
			}
		\label{fig:mincut-triangle}
	\end{subfigure}
	
	\caption{\Cref{ex:triangleILP}: 
		Three $\mveo$'s for triangle query $\qtriangle \datarule R(x,y),$ $S(y,z),T(z,x)$
		(a)-(c),
		example database instance $D$ (d),
		and constructed flow graph for $\mveo$ order $\Omega = [v_1, v_2, v_3]$ (e).
		Notice that several variables may appear \emph{in the same} node of a $\mveo$ (e.g., $x$ and $y$ in $xy \leftarrow z$).
		}
	\label{Triangle-VEOs}
	\end{figure}

	\vspace{5mm}

	\begin{example}[Flow graph construction for Triangle Query]

	\label{ex:triangleILP}
	Consider the triangle query 
	$\qtriangle \datarule R(x,y), S(y,z), T(z,x)$. 
	The query has $3$ minimal Query Plans corresponding to $\mveo$s	%
	shown in \cref{Tri:VEO1,Tri:VEO2,Tri:VEO3}.
	The provenance
	of $\qtriangle$ over the database shown in \cref{tab:Triangle-mincut-example-database},
	has $2$ witnesses: $W = \{r_{00}s_{00}t_{00}, r_{01}s_{10}t_{00}\}$.
	We build a flow graph to find a factorization.
	(1) We choose 
	$\Omega=(v_1, v_2, v_3)$ 
	as linear order for the $\mveo$. 
	(2) For each witness, we connect their three query plan variables 
	$q[v \langle \w \rangle]$ in this order serially from source to target. 
	(3) In $\qtriangle$, each $\mveo$ has a single prefix. We attach these variables in parallel to their corresponding query variables. Notice that the prefix $z_0x_0$ is shared by both $\w_1$ and $\w_2$, and therefore is attached in parallel to both corresponding query variables.
	(4) Finally, we add weights corresponding to the number of tables having each prefix (see \cref{Tri:VEO1,Tri:VEO2,Tri:VEO3}).
	
	The resulting flow graph is shown in \cref{fig:mincut-triangle}. 
	The min-cut (highlighted in purple) consists of the nodes 
	$\{z_0x_0 \!\leftarrow\! y_0, z_0x_0, z_0x_0 \!\leftarrow\! y_1\}$.
	The corresponding factorization using the selected query plans is $t_{00}(r_{00}s_{00} \vee r_{01}s_{10})$. 
	The weighted cut-value ($5$) is equal to the length of the factorization.
	This factorization is minimal.
	\end{example}

\subsection{LP relaxation for $\minfact$ and an LP relaxation-based approximation}
\label{SEC:LP-ROUNDING}
\label{sec:lp-rounding}

Linear Programming relaxation and rounding is a commonly-used technique to find $\PTIME$ approximations for $\npc$ problems \cite{vazirani2001approximation}. 
The LP relaxation for $\minfact$ simply removes the integrality constraints on all the problem variables.
The LP relaxation may pick multiple query plans for a given witness, each with fractional values.
We present a rounding scheme for $\minfact$ and show it to be a $|\mveo|$-factor approximation of the optimal solution.
The rounding algorithm simply picks the maximum fractional value of query plan variables for each witness, breaking ties arbitrarily. 
Finally, it counts only the prefix variables necessitated by the chosen query plans. 

\begin{restatable}{theorem}{thmlprounding}
	\label{thm:lp-rounding}
	The described rounding scheme gives a $\PTIME$, $|\mveo|$-factor approximation for $\minfact$.
\end{restatable}

\subsubsection{\textbf{When is the LP relaxation optimal?}}
\label{sec:lp-when-optimal}

Experimentally, we observed that the LP solution of many queries 
are equal to the integral ILP solution.
This is surprising since the ILP does not satisfy any of the known requirements for tractable ILPs such as Total Unimodularity, Balanced Matrices, or even Total Dual Integrality~\cite{schrijver1998theory}.
Interestingly, we next prove that the LP relaxation has the same objective value as the original ILP whenever the MFMC-based algorithm is optimal.

\begin{restatable}{lemma}{lemlpeasywhenmincuteasy}\label{prop:lp-easy-mincut}
If all database instances can be solved exactly by the MFMC-based algorithm 
for a given query $Q$
(i.e.\ for each database instance there exists an ordering that generates a leakage-free graph), then the LP relaxation of $\minfact$ always has the same objective as the original ILP.
\end{restatable}

This result is important as it 
exposes cases for which the optimal objective value of $\minfactilp$ is identical to the optimal objective value of a simpler LP relaxation.
The only example we know of where this has been shown using flow graphs is in recent work on resilience \cite{MakhijaG:2024}. 
In such cases, \emph{standard ILP solvers return the optimal solution to the original ILP in $\PTIME$}.
This is due to ILP solvers using an LP-based branch and bound approach which starts by computing the LP relaxation bound and then exploring the search space to find integral solutions that move closer to this bound.
If an integral solution is encountered that is equal to the LP relaxation optimum, then the solver is done~\cite{gurobi_working}.
We use this knowledge in \cref{sec:easycase} to show that ILP solvers can solve all known $\PTIME$ cases in $\PTIME$.

\section{Recovering Read-Once instances}
\label{sec:read:once}
\label{SEC:READONCE}

It is known that the read-once factorization of a read-once instance can be found in $\PTIME$ with specialized algorithms \cite{DBLP:journals/dam/GolumbicMR06,DBLP:conf/icdt/RoyPT11,SenDeshpandeGetoor2010:ReadOnce}. 
We prove that our more general MFMC based algorithm and LP relaxation 
are \emph{always guaranteed to find read-once formulas} when they exist, even though they are not specifically designed to do so.

\begin{restatable}[Read-Once]{thm}{thmreadonce}\label{prop:read-ONCE}
	$\minfact$ can be found in \PTIME by
	\begin{thmlist}
	\item the MFMC based algorithm, and \label{prop:readonce:Mincut}
	\item the LP relaxation\label{prop:readonce:LP},
	\end{thmlist}
	for any query and database instance that permits a read-once factorization.
\end{restatable}

\section{Tractable Queries for $\minfact$}
\label{SEC:EASYCASE}
\label{sec:easycase}

We now go beyond cases when $\prob$ is in \PTIME (i.e.\ read-once instances). We first prove that $\minfact(Q)$ is $\PTIME$ for the large class of queries with $2$ minimal query plans. 
We then show examples of queries with $3$ and $5$ minimal query plans that are $\PTIME$ as well.
All these newly recovered $\PTIME$ cases, along with the previously known read-once cases, 
can be solved \emph{exactly} with both our $\PTIME$ algorithms from \cref{sec:approx}.
Finally, we hypothesize that $\minfact$ is in $\PTIME$
for any linear query.

\subsection{All queries with $\leq$2 minimal query plans}

We prove that our MFMC-based algorithm has no leakage and thus always finds the minimal factorization for queries with at most $2$ minimal $\veo$s ($2$-MQP) queries such as $\qtwostar$ and $\qthreechain$.
We also give an alternative proof that shows that any ILP generated by such a query
is guaranteed to have a \emph{Totally Unimodular (TU)} constraint matrix, and thus is $\PTIME$ solvable \cite{schrijver1998theory}.

\begin{restatable}[2-MQP Queries]{thm}{thmtwoqueryplans}\label{prop:2queryplans}
	$\minfact$ can be found in \PTIME for any query with max $2$ minimal $\veo$s by
	\begin{thmlist}
	\item the MFMC based algorithm, and \label{prop:2queryplans:Mincut}
	\item the LP relaxation\label{prop:2queryplans:LP}.
	\end{thmlist}
\end{restatable}

The theorem recovers the hierarchical queries which are equivalent to $1$-MQP queries since they have one ``safe plan'' \cite{DBLP:journals/vldb/DalviS07}.
The $\PTIME$ nature of $1$-MQP queries also follows from \cref{prop:read-ONCE}, as all hierarchical queries have read-once formulations. 

\begin{restatable}{corollary}{corrhierachicalqueries}
	\label{corr:hierachicalquerieseasy}
	$\minfact$ for Hierarchical Queries is in $\PTIME$.
\end{restatable}

\begin{restatable}{corollary}{corrminfacprobinference}
	\label{corr:minfacharderthanprob}
	The classes of queries for which $\minfact$ is in $\PTIME$ is a strict super-class of those for which probabilistic query evaluation is in $\PTIME$ (if $P \neq NP$).
\end{restatable}
	
\subsection{Two queries with $\geq 3$ minimal query plans}	
	
\introparagraph{Triangle-unary $\qtriangleunary$} 
$\qtriangleunary$ is structurally similar to
$\qtriangle$ (\cref{fig:triangle:variants})
and both have $|\mveo|=3$. 
However, while $\qtriangle$ contains an ``active triad''~\cite{MakhijaG:2024}
and is hard, we show that $\qtriangleunary$ is in $\PTIME$ 
by proving that the factorization flow graph has no leakage.
Interestingly, $\qtriangleunary$'s ILP is \emph{not guaranteed to have a TU} constraint matrix, yet the MFMC algorithm is optimal, and the LP relaxation recovers the minimal ILP objective,
showing that $\PTIME$ cases extend beyond Total Unimodularity of the ILP constraint matrix.

\begin{restatable}[$\qtriangleunary$ is easy]{thm}{thmtriangleunary}\label{prop:triangleUNARY}
	$\minfact(\qtriangleunary, D)$ can be found in \PTIME for any database $D$ by 
	\begin{thmlist}
		\item the MFMC based algorithm, and \label{prop:triangleUNARY:Mincut}
		\item the LP relaxation.\label{prop:triangleUNARY:LP}
	\end{thmlist}
\end{restatable}

\introparagraph{4-chain $\qfourchain$}
This is arguably the most involved proof in the paper. 
$\qfourchain$ has $|\mveo|=5$. 
Yet in a similar proof to $\qtriangle$, we can show that the MFMC-based algorithm and the LP are both optimal.
This surprising result leads to the conjecture that $\minfact$ for longer chains, and all linear queries are in $\PTIME$.

\begin{restatable}[$\qfourchain$ is easy]{thm}{thmfourchain}\label{prop:fourchain}
	$\minfact(\qfourchain, D)$ can be found in \PTIME for any database $D$ by
	\begin{thmlist}
	\item the MFMC based algorithm, and \label{prop:fourchain:Mincut}
	\item the LP relaxation. \label{prop:fourchain:LP}
	\end{thmlist}
\end{restatable}

\begin{figure}
\begin{subfigure}[b]{.45\linewidth}
	\centering
	\begin{tikzpicture}[scale=.5,
				every circle node/.style={fill=white, minimum size=7mm, inner sep=0, draw}]
	\node (4) at (3,4) [] {$x$};	
	\node (1) at (0,0) [] {$y$};
	\node (3) at (6,0) [] {$z$};
	
	\node (10) at (0.5,2.5) [blue] {$R$};
	\node (11) at (3,-1) [blue] {$S$};
	\node (12) at (5.5,2.5) [blue] {$T$};

	\path[->, line width=1pt, auto]
		(4) edge[color=black] (1)
		(1) edge[color=black] (3)
		(3) edge[color=black] (4)	
		;
	\end{tikzpicture}
	\vspace{-2mm}
	\caption{Triangle Query $q_{\triangle}$}
	\label{fig:triangle-query}
\end{subfigure}
~	
\begin{subfigure}[b]{.45\linewidth}
	\centering
	\begin{tikzpicture}[scale=.5,
				every circle node/.style={fill=white, minimum size=7mm, inner sep=0, draw}]
	\node (4) at (3,4) [] {$x$};	
	\node (1) at (0,0) [] {$y$};
	\node (3) at (6,0) [] {$z$};
	\node (10) at (0.5,2.5) [blue] {$R$};
	\node (11) at (3,-1) [blue] {$S$};
	\node (12) at (5.5,2.5) [blue] {$T$};
	\node (12) at (2,5) [blue] {$U$};

	\draw[transform canvas={yshift=+0.0ex},->, line width=1pt, out=120, in=60, distance=1.5cm](4) to (4);

	\path[->, line width=1pt, auto]
		(4) edge[color=black] (1)
		(1) edge[color=black] (3)
		(3) edge[color=black] (4)	
		;
	\end{tikzpicture}
	\vspace{-2mm}
	\caption{Triangle-Unary Query $q_{\triangle U}$}
	\label{fig:triangle-unary-query}
\end{subfigure}
\caption{We show that $\qtriangle$ is hard in \cref{sec:hardcase} because it contains an ``active triad.'' 
Surprisingly, for $\qtriangleunary$, a query that differs by a single unary relation, 
the minimal factorization can always be found in $\PTIME$ by either
using our MFMC based algorithm from \cref{SEC:MINCUT}
or our LP relaxation from \cref{sec:lp-rounding}.
}
\label{fig:triangle:variants}
\end{figure}

\subsection{Conjecture for Linear Queries}
\label{sec:dichotomy}
\label{sec:linearcase}

A query is acyclic if it has a join tree, i.e.\ it permits a placement of its atoms into a tree s.t. 
for any two atoms, the intersection of variables is contained in the union of the variables of the atoms on the unique path between them.\footnote{The concept is alternatively called coherence, the running intersection property, connected subgraph property~\cite{Beeri+83,DBLP:journals/siamcomp/BernsteinG81,DBLP:journals/siamcomp/TarjanY84},
and is used
in the definition of
the junction tree algorithm \cite{10.2307/2345762} and tree decompositions~\cite{DBLP:journals/jal/RobertsonS86,Dechter:2003aa}.}
A query is \emph{linear} if it permits a join path.\footnote{This definition, introduced in \cite{MakhijaG:2024}, is more restrictive than linear queries defined in the original work on resilience~\cite{FreireGIM15} as it does not allow \emph{linearizable queries} (those that can be made linear by ``making \emph{dominated} atoms exogenous'').}
We have spent a lot of time trying to prove the hardness of such queries without success. 
Based on our intuition we hypothesize that \emph{all linear queries are in $\PTIME$}.
Our intuition is strengthened by the fact that over many experimental evaluations, 
the LP relaxation of the $\minfactilp$ was always integral and optimal, 
thus being able to solve the problem in $\PTIME$.

\begin{hypothesis}[$\PTIME$ conjecture]
	If $Q$ is a linear query, then
	$\minfact(Q,D)$ can be found in $\PTIME$ for any database $D$.	
\end{hypothesis}	

We think that additional insights from optimization theory are needed to explain the integrality of the solution to the LP relaxation
and to thus prove this conjecture.
We leave open the structural criterion that separates the easy and hard cases.

\section{Hard Queries for $\minfact$}
\label{SEC:HARDCASE}
\label{sec:hardcase}

In this section, we first prove that all queries that contain a structure called ``\emph{an active triad}'' 
(e.g. $\qthreestar$ and $\qtriangle$) are $\npc$.
We then prove another query to be $\npc$
that does not contain an active triad, but a ``\emph{co-deactivated triad}.'' 
We thus show that while active triads are sufficient for $\fact$ of a query to be $\npc$,
they are not necessary, and 
$\minfact$ is a strictly harder problem than $\res$.

\introparagraph{Queries with Active triads}
We repeat here the necessary definitions introduced in the context of resilience under bag semantics~\cite{MakhijaG:2024}.
A \emph{triad} is a set of three atoms, $\mathcal{T} = \set{R_1, R_2, R_3}$  
s.t.\ for every pair $i \neq j$, there is a path from $R_i$ to $R_j$ that uses no variable occurring in the third atom of $\mathcal{T}$.
Here a \emph{path} 
is an alternating sequence of relations and variables
$R_1-\vec x_1-R_2- \cdots \vec x_{p-1}-R_p$
s.t.\ all adjacent relations 
$R_i, R_{i+1}$ 
share variables $\vec x_i$.
In a query $Q$ with atoms $R$ and $S$, we say $R$ \emph{dominates} $S$ iff $\var(R) \subset \var(S)$.
We call an atom $g$ in a query \emph{independent} iff there is no other atom in the query that contains a strict subset of its variables (and hence it is not dominated).
A triad is \emph{active} iff none of its atoms are dominated.

\begin{restatable}[Active Triads are hard]{theorem}{thmactivetriadshard}
	\label{prop:triadS}
	$\fact(Q)$ for a query $Q$ with an active triad is $\npc$.
\end{restatable}

\introparagraph{Separation between $\res$ and $\minfact$}	
A triad is \emph{deactivated} if any of the three atoms is dominated.
A triad is \emph{co-deactivated} if all three atoms are dominated only by the same (non-empty) set of atoms.
The co-deactivated triangle query
$\qdominatedtriangle \datarule A(w),
R(w,x,y),$ $S(w, y, z), $ $T(w, z, x)$
contains no active triads: notice that the tables $R$, $S$ and $T$ are not independent and have no independent paths to each other. 
Thus, $\res(\qdominatedtriangle)$ is $\PTIME$. 
However, $\qdominatedtriangle$ contains a co-deactivated triad since $R$, $S$ and $T$ are all dominated only by atom $A$.
We next prove that $\fact(\qdominatedtriangle)$ is $\npc$, thus showing a strict separation in the complexities of the two problems.

\begin{restatable}[Co-Deactivated Triads are hard]{theorem}{thmsemideactivatedtriadhard}
	\label{prop:dominatedtrianglehard}
	$\fact(Q)$  for a query $Q$ with a co-deactivated triad is $\npc$.
\end{restatable}

\section{Conclusion}

We propose an ILP framework for minimizing the size of provenance polynomials for sj-free CQs.
We show that our problem is $\npc$
and thus in a lower complexity class than the general Minimum Equivalent Expression (MEE) problem.
Key to our formulation is a way to systematically constrain a space of possible minimum factorizations thus allowing us to build an ILP, 
and connecting minimal variable elimination orders to minimal query plans developed in the context of probabilistic databases.
We complement our hardness results with two unified $\PTIME$ algorithms that can recover \emph{exact solutions} to a strict superset of \emph{all prior known tractable cases}.

\section*{Acknowledgements}

This work was supported in part by the National Science Foundation (NSF) under award numbers IIS-1762268 and IIS-1956096, 
and conducted in part while the authors were visiting the Simons Institute for the Theory of Computing.

\bibliographystyle{ACM-Reference-Format}
\bibliography{BIB/propagation.bib}

\appendix

\clearpage
\appendix

\section{Nomenclature and Conventions}\label{sec:sec:appendix:nomenclature}

\begin{table}[h]
\centering
\small
\begin{tabularx}{\linewidth}{@{\hspace{0pt}} >{$}l<{$} @{\hspace{2mm}}X@{}} %
\hline
\textrm{Symbol}		& Definition 	\\
\hline
    \hline
	Q			& a self-join free Boolean CQ	\\
	R, S, T, U		& relational tables \\
	r_i, s_i, t_i, u_i & tuple identifiers \\
	x, y, z		& query variables \\	
	m			& number of atoms in a query \\
	N			& size of database $|D|$ \\
	\varphi, \psi	& propositional formulas / expressions   	\\
	\var(X)		& the set of variables in atom / relation / formula $X$\\
	\at(x_j)	& set of atoms that contain variable $x_j$	\\
	W			& set of witnesses $W = \witnesses(Q,D)$\\
	\vec w  	& witness \\
	\veo(Q) & set of all legal $\veo$s for Q\\
	\mveo(Q)		& set of minimal $\veo$s for Q\\
	k=|\mveo(Q)|	& number of minimal $\veo$s \\	
	v\angle{\w}	& a $\veo$ instance of $\veo$ $v$ over witness $\w$\\
	\Var(g_i) 	& set of variables of a query $q$ or atom $g_i$ \\
	P			& query plan	\\
	\mathcal{P}	& set of plans	\\	
	F			& flow graph	\\
	\join{}{\ldots}		& provenance join operator in prefix notation	\\
	\projd{\vec x},\proj{\vec y}	& provenance project operators: onto $\vec x$, or project $\vec y$ away \\
	\vec x		& unordered set or ordered tuple \\
	\vec a/ \vec x		& substitute values $\vec a$ for variables $\vec x$ \\
	$Q[\vec x]$ & indicates that $\vec x$ represents the set of all existentially quantified variables for Boolean query $Q$ \\
	\texttt{len}			& Length of a Factorization \\
	\QPV 			& Query Plan Variables of an ILP \\
	\PV 			& Prefix Variables of an ILP \\
	q[\hdots] 				& a ILP decision query plan variable \\
	p[\hdots] 				& a ILP decision prefix variable \\
	$c$ 			& weight (or cost) of variables in the ILP / nodes in the Factorization Flow Graph \\
	\Omega 			& An Ordering of $\mveo$ chosen for MFMC based algorithm\\
	(v_1,v_2,\hdots,v_k) & An ordered list of $\veo$s, $\veoff$s or any other set of objects \\
\hline

\end{tabularx}
\end{table}

\begin{table}[h]
	\small
	\centering
	\begin{tabularx}{\linewidth}{@{\hspace{0pt}} >{$}l<{$} @{\hspace{2mm}}X@{}} %
	\hline
	\textrm{Query}		& Definition 	\\
		\hline
		\hline
		\qtwochain 		& 2-chain query $R(x,y), S(y,z)$ \\
		\qthreechain 	& 3-chain query $R(x,y), S(y,z), T(z,u)$ \\
		\qfourchain 	& 4-chain query $P(u,x),R(x,y), S(y,z), T(z,v)$ \\
		\qfivechain 	& 5-chain query $L(a,u),P(u,x),R(x,y), S(y,z), T(z,v)$ \\
		\qtwostar 		& 2-star query $R(x)S(y),W(x,y)$ \\
		\qthreestar 	& 3-star query $R(x)S(y),T(z)W(x,y,z)$ \\
		\qtriangle 		& Triangle query $R(x,y)S(y,z),T(z,x)$ \\
		\qtriangleunary & Triangle-unary query $U(x)R(x,y)S(y,z),T(z,x)$ \\
		\qsixcyclewe & 6-cycle query with end points $A(x),$ $R(x,y),$ $B(y),$ $S(y,z),$ $C(z),$ $T(z,u)$ $D(u),$ $U(u,v),$ $E(v),$ $V(v,w),$ $F(w),$ $W(w,x)$ \\
		\qdominatedtriangle & Co-dominated triangle query $A(w),$ $R(w,x,y),$ $S(w, y,z),$ $T(w,z,x)$ \\
		\hline
		\end{tabularx}
	\end{table}

	\begin{table}
		\small
		\centering	
		\textrm{\textbf{Additional Nomenclature used in the Appendix}}\\
		\begin{tabularx}{\linewidth}{@{\hspace{0pt}} >{$}l<{$} @{\hspace{2mm}}X@{}}
		\hline
		\textrm{Symbol}		& Definition 	\\
		\hline
		\hline
			\HVar(P) 	& set of head variables of a query $q$ or a plan $P$ \\	
			\EVar(q) 	& set of existential variables: $\EVar(q) \!=\! \Var(q) \!-\! \HVar(q)$ \\		
			\PP{\phi}	& probability of a Boolean expression		\\
			\Delta		& collection of sets of variables $\Delta = (\vec y_1, \ldots, \vec y_m)$ \\
			R_i^{\vec y_i} & dissociated relation $R_i(\vec x_i)$ on variables $\vec y_i$: $R_i(\vec x_i, \vec y_i)$ \\
			q^{\Delta}	& dissociated query\\
			\join{}{\ldots}		& provenance join operator in prefix notation	\\
			\projd{\vec x},\proj{\vec y}	& provenance project operators: onto $\vec x$, or project $\vec y$ away \\
		\hline
		\end{tabularx}
	\end{table}

We write $[k]$ as short notation for the set $\{1, \ldots, k\}$
and use boldface to denote tuples or ordered sets, 
(e.g., $\vec x = (x_1, \ldots, x_\ell)$).
We fix a relational vocabulary $\vec R = (R_1, \ldots, R_m)$, 
and denote with $\arity(R_i)$ the number of attributes of a relation $R_i$. 
For notational convenience, we assume w.l.o.g. that there are no two atoms $R_i$ and $R_j$ with $\var(R_i)=\var(R_j)$.
A database instance over $\vec R$ is $D = (R_1^D, \ldots, R_m^D)$, where each $R_i^D$ is a finite relation.
We call the elements of $R_i^D$ tuples and
write $R_i$ instead of $R_i^D$ when $D$ is clear from the context.
With some abuse of notation 
we also denote $D$ as the set of all tuples, 
i.e.\ $D = \bigcup_i R_i$.
The active domain $\dom(D)$ is the set of all constants occurring in $D$.
W.l.o.g., we commonly use $\dom(D) \subset \N \cup \{a, b, \ldots, z\}$.
The size of the database instance is $n = |D|$, i.e.\ the number of tuples in the database.\footnote{Notice that other work sometimes uses $|\dom(D)|$ as the size of the database.
Our different definition has no implication on our complexity results but simplifies the discussions of our reductions.
}

\section{Additional details on \cref{SEC:RELATEDWORK}: Related Work}

\subsection{Boolean Factorization}
\label{sec:appendix:rw:booleanfactorization}

\cref{Fig:RelatedWorkBooleanFactorization} illustrates the landscape of known results for the problem of Minimum Equivalent Expressions (MEE) applied to formulas.

The general problem of MEE has been long known to be \np-hard \cite{garey1979computers}.
However, only relatively recently it has been proved to be $\Sigma^2_p$-complete~\cite{buchfuhrer2011complexity}.
Various important classes of this problem have been studied, a fundamental one being the factorization of DNF expressions. 
The MinDNF problem \cite{umans2001minimum}, deals with finding the minimum equivalent DNF expression of an input DNF formula, and is also known to be $\Sigma^2_p$-complete.
However, if the input to the MinDNF is the truth table (or set of all true assignments of the formula) then the problem is $\npc$ \cite{Allender:MinimizingDNF:2008}. 
If we take away the restriction that the factorized formula must be a DNF, then the problem of finding the minimum factorization of an input table is known as the Minimum Formula Size Problem (MFSP) and is shown to be in $\np$ and (ETH)-hard \cite{ilango2022minimum}. 

Another important class of restrictions is over monotone formulas (thus we do not allow negatives in input or output formulas). 
Surprisingly, we do not know of any work that proves the complexity of the general monotone boolean factorization problem.
However, there are many interesting and important restrictions for which complexity results are known.
One such important sub-class is that of read-once formulas, which can be factorized in $\PTIME$ \cite{golumbic2008improvement}.
For Monotone formulas with DNF input and output restrictions, the problem can be solved in logspace by eliminating monomials \cite{goldsmith2008complexity}.
Interestingly the problem monotone formula factorization of an arbitrary formula with a DNF restriction on the output only has differing complexity based on the input encoding of the length of the factorization. 
Checking if the minimum size of a DNF for a monotone
formula is at most k is PP-complete, but for k in unary, the complexity of the problem drops to coNP \cite{goldsmith2008complexity}.
The intuition is that in this problem, (which can be seen as ``dual'' of $\minfact$ since it has a DNF output restriction instead of a DNF input restriction), the optimal output (a DNF) can be exponentially larger than the input (any monotone formula). 

Our problem of $\minfact$ is a further restriction on the MEE problem applied to a monotone DNF. 
Provenance formulas for sj-free CQs are $k$-partite monotone formulas that satisfy join dependencies.
We prove in this paper that the problem is $\npc$, in general, and further identify interesting $\PTIME$ subcases.

\tikzset{
  smallnode/.style={font=\fontsize{8}{10}\selectfont},
  largenode/.style={font=\fontsize{8}{10}\selectfont},
}
\definecolor{darkgreen}{RGB}{34,139,34}
\begin{figure*}
	\centering
	\begin{tikzpicture}[scale=.9]
		\draw[rounded corners=20pt] (0,0) rectangle (16,9);
		\draw[rounded corners=20pt, fill=cyan, draw=black, fill opacity=0.1] (3,0.1) rectangle (9,8);
		\draw[rounded corners=20pt, fill=orange, draw=black, fill opacity=0.1] (6,0.2) rectangle (15.5,8.1);
		\draw[rounded corners=20pt, fill=teal, draw=black, fill opacity=0.2] (6.5,4.5) rectangle (14.5,6.5);
		\draw[rounded corners=20pt, fill=pink, draw=black, fill opacity=0.2] (0.5,0.5) rectangle (14.5,4);
		\draw[rounded corners=20pt, fill=purple, draw=black, fill opacity=0.2] (9.4,1) rectangle (12,2.8);
		\draw[rounded corners=20pt, fill=white, draw=black, fill opacity=0.6, line width=1.5pt] (11.3,1.5) rectangle (14.3,3.5);

		\node[largenode, anchor=north west] at (0.1,5) {\textbf{\textcolor{orange}{$\Sigma^p_2$-complete}}~\cite{buchfuhrer2011complexity} };
		\node[largenode] at (7.5,7) {\textbf{\textcolor{orange}{$\Sigma^p_2$-complete}}~\cite{umans2001minimum}};
		\node[largenode] at (7.8,5.5) {\textbf{\textcolor{blue}{\npc}}~\cite{Allender:MinimizingDNF:2008}};
		\node[largenode] at (11.5, 5.5) {\textbf{\textcolor{blue}{In NP, (ETH) Hard}}~\cite{{ilango2022minimum}}};
		\node[largenode, align=center] at (12.8, 2.5) {\textbf{\textcolor{blue}{\npc}}\\($\minfact$)};
		\node[largenode] at (10.4, 2) {\textbf{\textcolor{darkgreen}{$\PTIME$}}~\cite{golumbic2008improvement}};
		\node[largenode] at (7.5, 2) {\textbf{\textcolor{olive}{$L$}}~\cite{goldsmith2008complexity}};
		\node[largenode, align=center] at (4.5, 2) {
			Unary input:\\ 
			\textbf{\textcolor{blue}{Co-NPC}}~\cite{goldsmith2008complexity}\\
			Binary input:\\ 
			\textbf{\textcolor{teal}{PP-C}}~\cite{goldsmith2008complexity}};

		\node[anchor=south] at (10.5,8) {Input = DNF};	
		\node[anchor=south] at (5,7.9) {Output = DNF};	
		\node[anchor=south] at (10.5, 6.4) {Input = Truth Table};	
		\node[anchor=south] at (10.5, 3.9) {Input = Monotone};	
		\node[anchor=south] at (12.8, 3.4) {Input = Provenance};	
		\node[anchor=north] at (10.8, 1) {Input = Read-Once};	

	\end{tikzpicture}
	\caption{An overview of related work on the Exact, Minimal Equivalent Expression (MEE) problem applied to formulas.}
	\label{Fig:RelatedWorkBooleanFactorization}
\end{figure*}

\subsection{Probabilistic Inference and Dissociation}

Given a provenance that is not read-once, one can still upper and lower bound its probability efficiently via dissociation~\cite{gatterbauer2014oblivious}:
Let $\varphi$ and  $\varphi'$ be two Boo\-lean formulas with variables $\mathbf{x}$ and $\mathbf{x}'$, respectively.  
Then $\varphi'$ is a \emph{dissociation} of $\varphi$ if there exists 
a substitution $\theta: \mathbf{x}' \rightarrow \mathbf{x}$ s.t.\ 
$\varphi'[\theta]=\varphi$. 
If $\theta^{-1}(x) = \set{x_1',\ldots, x_d'}$, then variable $x$ dissociates into $d$ variables $x_1', \ldots, x_d'$. 
Every provenance expression has a unique read-once dissociation up to renaming of variables. 
One application of compiling provenance polynomials into their smallest representation
is motivated by the following known results on ``oblivious bounds" \cite{gatterbauer2014oblivious}:
($i$) lower and upper bounds for intractable expressions can be found very efficiently;
and ($ii$) those bounds work better the fewer times variables are repeated.
Similarly, anytime approximation schemes based on branch-and-bound provenance decomposition methods \cite{DBLP:conf/sigmod/HeuvelIGGT19,DBLP:journals/vldb/FinkHO13} give tighter bounds if Shannon expansions need to be run on fewer variables.

\subsection{Resilience}

The resilience of a Boolean query measures the minimum number of tuples in database $D$, the removal of which makes the query false. 
The optimization version of this decision problem is then:
given $Q$ and $D$, find the \emph{minimum} $k$ so that $(D,k) \in \res(Q)$. 
A
larger $k$ implies that the query is more ``\emph{resilient}'' and requires
the deletion of more tuples to change the query output. 
We know from \cite{FreireGIM15} that all hard queries must have a ``triad''
which is a set of three 
non-dominated
atoms, $\mathcal{T} = \set{R_1,R_2,R_3}$ 
s.t.\ for every pair $i \neq j$, 
there is a path from $R_i$ to $R_j$ that uses no variable occurring in the other 
atom of $\mathcal{T}$. 
In return, the tractable queries for $\res$ are exactly those that are triad-free.

\begin{example}[Resilience]
    \label{ex:res}
	The resilience for our example from \cref{fig:RST_intro} is $2$
	because removing the set $\Gamma=\{{\color{blue}r_1}, {\color{blue}t_3} \}$
	removes one tuple from each witness and there is no smaller set that achieves that.
\end{example}

\iftoggle{fullappendix}
{
\section{Details for \cref{SEC:FACTORIZATIONSPACE}: Search Space of Factorizations}

\subsection{Bijection Between $\veo$s and Query Plans}
\label{SEC:APPENDIX:VEOQP}

Intuitively, $\veo$s capture the sequence of variables \emph{projected away} in a query plan. To make this bijection clear, we define notation around query plans and give an example. 

\introparagraph{Query plans}
A \emph{query plan} $P$ is given by the grammar
$   P 	::=\, R_i(\vec x)
	 \,\,\,|\,\, 	\projd{- \vec x} P
	 \,\,\,|\,\,\! 	\join{}{P_1, \ldots, P_k} 
$
where $R_i(\vec x)$ is an atom containing the variables
$\vec x$, $\projd{- \vec x}$ is the 
\emph{project-away operator} with
duplicate elimination 
($x$ here is the set of variables being removed), 
and $\join{}{\ldots}$ is the \emph{natural join} in prefix notation, 
which we  allow to be $k$-ary ($k \geq 2$).  

We require that joins and projections alternate in every plan. 
This is w.l.o.g.\
because nested joins,
such as $\join{}{\!\!\join{}{R, S}, T}$ 
or $\join{}{R, \join{}{S, T}}$,
can
be rewritten into $\join{}{R, S, T}$ while keeping the
same provenance

by associativity,
e.g., $(r_1 s_2) t_3 = r_1 (s_2 t_3)$.  
For the same reason, we do not distinguish
between different permutations in the joins,
called join orders~\cite{Moerkotte:BuildingQueryCompilers}. 
Thus, in contrast to standard query plans, our query plans do not necessarily form binary trees.

An important type of query plan is the \emph{hierarchical plan}. 
If a query has a hierarchical plan, then this plan can be used to always produce 
read-once factorized provenance polynomials~\cite{DBLP:conf/sum/OlteanuH08} (and thus the minimum size factorization).
Only \emph{hierarchical queries} have hierarchical plans.

\introparagraph{Notation Related to Query Plans}
We write $\Var(P)$ for all variables in a plan $P$ and $\HVar(P)$ for its \emph{head variables},
which are recursively defined as follows:
(1) if $P = R_i(\vec x)$, then $\HVar(P) = \vec x$;
(2) if $P = \projd{\vec x}(P')$, then $\HVar(P) = \vec x$; and
(3) if $P = \,\join{}{P_1, \ldots, P_k}$ 
then $\HVar(P) = \bigcup_{i=1}^k \HVar(P_i)$.
The \emph{existential variables} $\EVar(P)$ are then defined as $\Var(P) - \HVar(P)$.
Every plan $P$ represents a query $Q_P$
defined by taking all atoms mentioned in $P$ as the body and setting
$\HVar(Q_P) = \HVar(P)$.
A plan is called Boolean if $\HVar(P) = \emptyset$.
For notational convenience, we also use the ``\emph{project-away operator}''
$\proj{\vec y}P$ instead of $\projd{\vec x}P$, where
$\vec y$ are the variables being projected away, 
i.e.\ 
$\vec x = \HVar(\proj{\vec y}P) = \HVar(P) - \vec y$.

\introparagraph{Legal Query Plans}
We assume the usual sanity conditions on plans to be satisfied: for a
projection $\projd{\vec x}P$ we assume $\vec x \subseteq
\HVar(P)$, and each variable $y$ is projected away at most once in a
plan, i.e.\ there exists at most one such operator $\proj{\vec x}$ s.t. $y \in \vec x$.
We also assume that at least one variable is projected away at each projection operator, i.e.\ for all $\proj{\vec x}$ s.t. $\vec x \neq \emptyset$.
We call any such plan \emph{legal}.

\begin{figure}
	\centering
	\begin{subfigure}[b]{.4\columnwidth}
	\centering
	\begin{tikzpicture}[grow=right,<-, level distance=14mm, baseline=-0.5ex, 
			level 1/.style={sibling distance=12mm},
			level 2/.style={level distance=10mm}]
			\node [xshift=1mm] (z){$\pi_{-z}\Join$}
		    child {node(u){$\pi_{-u}T(z,u)$}
		    }
		    child {node(y){$\pi_{-y}\Join$}
		      child {node [xshift=5mm](x){$S(y,z)$}
		      }
		      child{node [xshift=5mm] (R){$\pi_{-x}R(x,y)$}
		      }
			};
	\end{tikzpicture}
	\caption{Query Plan $P$}
	\label{WeightedBipartite:VEO1}
	\end{subfigure}
	\begin{subfigure}[b]{.4\columnwidth}
	\centering
	\begin{tikzpicture}[grow=right,<-, level distance=10mm, baseline=-0.5ex, level 1/.style={sibling distance=13mm}] \node(z){$z$}
	    child {node(u){$u$}
	    }
	    child {node(y){$y$}
	      child {node(x){$x$}
	      }
	    }; 
	    \draw [-,opacity=0.15, line width =10pt, color=dg, line cap=round] (y.center) -- (x.center) node [below, opacity=1] {R} ;
	    \draw [-,opacity=0.15, line width =10pt, color=red, line cap=round] (z.center) -- (y.center) node [below, opacity=1] {S};
	    \draw [-,opacity=0.15, line width =10pt, color=blue, line cap=round] (z.center) -- (u.center) node [below, opacity=1] {T};
	\end{tikzpicture}
	\caption{$\veo$ $v$}
	\label{WeightedBipartite:VEO2}
	\end{subfigure}
\caption{\Cref{ex:VEO3chain}: A query plan $P$ and the corresponding $\veo$ $v$.}
\end{figure}

\begin{example}[Query Plans and Variable Elimination Orders]
\label{ex:VEO3chain}
Consider the 3-chain Query
$\qthreechain \datarule R(x,y), S(y,z), T(z,u)$
and the legal query plan $P = \proj{z}\!\join{}{\proj{y}\!\join{}{\proj{x}R(x,y),S(y,z)}, \proj{u}T(z,u)}$, 
also shown in 
\cref{WeightedBipartite:VEO1}.
The corresponding $\veo$ $v$ is shown in \cref{WeightedBipartite:VEO2}. 
In order to refer to a particular $\veo$, 
we use a linear notation with parentheses representing sets of children i.e.\ \cref{WeightedBipartite:VEO2} can be written as
$v = z \!\leftarrow\! (u, y \!\leftarrow\! x)$.
To make it a unique serialization, we only need to assume an ordering on the children of each parent, which we achieve by using the alphabetic ordering on the variables.
Notice that our definition of $\veo$ also allows sets of variables as nodes. As an extreme example, the legal query plan
$P' = \proj{xyzu}\join{}{R(x,y),S(y,z),T(z,y)}$ corresponds to a $\veo$ $v'$ with one single node containing all variables.
In our short notation, we denote nodes with multiple variables in brackets without commas between the variables 
in order to distinguish them from children: $v' = \{xyzu\}$. 
\end{example}

Notice that also minimal $\veo$s can have multiple variable in a node
(e.g.\ $\{yz\} \!\leftarrow\! x$ for query $\qtriangle$).

\subsection{Details and Examples: VEO Instances, Table Prefixes and Factorization Forests}

In this section, we include \cref{ex:veoveoinstanceveoff}, which illustrates the $\veo$s and {\veoff}s mentioned in \cref{ex:veoinstance,ex:tableprefixinstance,ex:veoff} of the main text.

We can also define intuitive ``merging'' and ``splitting'' operations on $\veo$s.
The merge operation (\cref{ex:mergeveos}) takes two $\veo$ instance forests and combines any trees in the forests if they share path(s) from the root.

\begin{definition}[$\veo$ Merge]
	A merge operation on one or more $\veo$ instances or $\veo$ instance forests $\mathcal{V}_1$ and $\mathcal{V}_2$ outputs a new $\veo$ instance forest $\mathcal{V}$ such that:
	(1) All VEO instances in $\mathcal{V}_1$ and $\mathcal{V}_1$ are also present in $\mathcal{V}$.
	(2) There are no two trees $t_1, t_2 \in \mathcal{V}$ such that $t_1$ and $t_2$ share a prefix for any variable that is in $t_1$ and $t_2$.
	\label{def:veomerge}
\end{definition}

\begin{example}[Merging $\veo$ instances]
\label{ex:mergeveos}
Consider the two $\veo$ instances $v_{\w_1}= x_1 \!\leftarrow\! y_1 \!\leftarrow\! z_1$ and $v_{\w_2}= x_1 \!\leftarrow\! y_1 \!\leftarrow\! z_2$. 
Then merging $v_{\w_1}$ and $v_{\w_2}$ would result in the instance $x_1 \!\leftarrow\! y_1 \!\leftarrow\! (z_1, z_2)$, since the common rooted path $x_1 \!\leftarrow\! y_1 $ in both $\veo$s can be combined.

On the other hand, $v_{\w_3}= x_1 \!\leftarrow\! y_1 \!\leftarrow\! z_1$ and $v_{\w_4}= x_2 \!\leftarrow\! y_1 \!\leftarrow\! z_1$ have no common rooted path, and their merger results in simply the forest \{$x_1 \!\leftarrow\! y_1 \!\leftarrow\! z_1$, $x_2 \!\leftarrow\! y_1 \!\leftarrow\! z_1$\}.

Another example is merging the $\veo$ instances $v_{\w_5}= y_1 \!\leftarrow\! (x_1, z_1)$ and $v_{\w_6}= y_1 \!\leftarrow\! (x_2, z_2)$ results in combining the shared path $y_1 \!\leftarrow\! ((x_1, x_2), (z_1, z_2))$. 
Notice that this includes the $\veo$ instances $y_1 \!\leftarrow\! (x_2, z_1)$ and $y_1 \!\leftarrow\! (x_1, z_2)$, 
which must be present in the data anyway due to join dependencies.

We can also merge $\veo$ instances from different $\veo$s. The merger of $v_{\w_7}= x_1 \!\leftarrow\! y_1 \!\leftarrow\! z_1$ and $v_{\w_8}= x_1 \!\leftarrow\! z_2 \!\leftarrow\! y_2$ leads to 
$x_1 \!\leftarrow\! (y_1 \!\leftarrow\! z_1, z_2 \!\leftarrow\! y_2)$.
\end{example}

The splitting operation (\cref{ex:splitveos,fig:veosplit}) splits a $\veo$ with more than one leaf into nested paths from the root to the leaves.
This property is useful to define nested prefix orderings in \cref{SEC:MINCUT} (with an example in \cref{sec:nestingRP}).

\begin{definition}[$\veo$ Split]
	A split operation on a $\veo$ instance outputs the set of root-to-leaf paths in the $\veo$ instance.
	\label{def:veosplit}
\end{definition}

\begin{example}[Splitting $\veo$ instances]
\label{ex:splitveos}
The $\veo$ $x_1 \!\leftarrow\! (y_1, z_1)$ can be split into nested paths $x_1 \!\leftarrow\! y_1$ and $x_1 \!\leftarrow\! z_1$.
\end{example}

\begin{figure}
	\centering
	\begin{subfigure}[b]{.4\columnwidth}
		\centering
		\begin{tikzpicture}[<-, level distance=8mm,sibling distance=10mm,baseline=-0.5ex, grow=right] \node(x){$x$}
		    child {node(u){$u$}
		      child {node(y){$y$}
    		      child {node(z){$z$}
    		      }
		      }
		      child {node(w){$w$}
		        child {node(v){$v$}
    		    }
		      }
		    };
		    \draw [-,opacity=0.15, line width =10pt, color=blue, line cap=round, label=above:\(z\)] 
				(x.center) -- (y.center) node [left, opacity=1] {};
			\draw [-,opacity=0.15, line width =10pt, color=black, line cap=round, label=above:\(z\)] 
				(x.center) -- (w.center) node [left, opacity=1] {};
		    \draw [-,opacity=0.15, line width =10pt, color=green, line cap=round] 
				(u.center) -- (z.center) node [left, opacity=1] {} ;
		    \draw [-,opacity=0.15, line width =10pt, color=red, line cap=round] 
				(y.center) -- (z.center) node [right, opacity=1] {};
			\draw [-,opacity=0.15, line width =10pt, color=teal, line cap=round, label=above:\(z\)] 
				(w.center) -- (v.center) node [left, opacity=1] {};
			\draw [-,opacity=0.20, line width =10pt, color=yellow
			, line cap=round, label=above:\(z\)] 
				(u.center) -- (v.center) node [left, opacity=1] {};
			\draw [-,opacity=0, line width =10pt, color=orange, line cap=round] (x.center) node [above, opacity=1] {$1$};
			\draw [-,opacity=0, line width =10pt, color=orange, line cap=round] (u.center) node [above, opacity=1] {$1$};
			\draw [-,opacity=0, line width =10pt, color=orange, line cap=round] (y.center) node [above, opacity=1] {$2$};
			\draw [-,opacity=0, line width =10pt, color=orange, line cap=round] (z.center) node [above, opacity=1] {$3$};
			\draw [-,opacity=0, line width =10pt, color=orange, line cap=round] (w.center) node [above, opacity=1] {$2$};
			\draw [-,opacity=0, line width =10pt, color=orange, line cap=round] (v.center) node [above, opacity=1] {$3$};
		\end{tikzpicture}
		\caption{$v_1$}
	\end{subfigure}
	\begin{subfigure}[b]{.4\columnwidth}
	\centering
	\begin{minipage}[b][0.1\textheight][s]{0.1\textwidth}
			\centering
			\begin{tikzpicture}[<-, level distance=8mm,sibling distance=10mm, grow=right] 
				\node(x){$x$}
				child {node(u){$u$}
				  child {node(w){$w$}
					  child {node(v){$v$}
					  }
				  }
				};
				\draw [-,opacity=0.15, line width =10pt, color=black, line cap=round, label=above:\(z\)] 
					(x.center) -- (w.center) node [left, opacity=1] {};
				\draw [-,opacity=0.15, line width =10pt, color=teal, line cap=round, label=above:\(z\)] 
					(w.center) -- (v.center) node [left, opacity=1] {};
				\draw [-,opacity=0.20, line width =10pt, color=yellow
				, line cap=round, label=above:\(z\)] 
					(u.center) -- (v.center) node [left, opacity=1] {};
				\draw [-,opacity=0, line width =10pt, color=orange, line cap=round] (x.center) node [above, opacity=1] {$1$};
				\draw [-,opacity=0, line width =10pt, color=orange, line cap=round] (u.center) node [above, opacity=1] {$1$};
				\draw [-,opacity=0, line width =10pt, color=orange, line cap=round] (w.center) node [above, opacity=1] {$2$};
				\draw [-,opacity=0, line width =10pt, color=orange, line cap=round] (v.center) node [above, opacity=1] {$3$};
			\end{tikzpicture}
			\vfill
			\begin{tikzpicture}[<-, level distance=8mm,sibling distance=10mm, grow=right] 
				\node(x){$x$}
				child {node(u){$u$}
				  child {node(y){$y$}
					  child {node(z){$z$}
					  }
				  }
				};
				\draw [-,opacity=0.15, line width =10pt, color=blue, line cap=round, label=above:\(z\)] 
					(x.center) -- (y.center) node [left, opacity=1] {};
				\draw [-,opacity=0.15, line width =10pt, color=green, line cap=round] 
					(u.center) -- (z.center) node [left, opacity=1] {} ;
				\draw [-,opacity=0.15, line width =10pt, color=red, line cap=round] 
					(y.center) -- (z.center) node [right, opacity=1] {};
				\draw [-,opacity=0, line width =10pt, color=orange, line cap=round] (x.center) node [above, opacity=1] {$1$};
				\draw [-,opacity=0, line width =10pt, color=orange, line cap=round] (u.center) node [above, opacity=1] {$1$};
				\draw [-,opacity=0, line width =10pt, color=orange, line cap=round] (y.center) node [above, opacity=1] {$2$};
				\draw [-,opacity=0, line width =10pt, color=orange, line cap=round] (z.center) node [above, opacity=1] {$3$};
			\end{tikzpicture}
	\end{minipage}
	\caption{$v_1$ split into paths}
	\end{subfigure}
	\caption{The $\veo$ split operation in \cref{ex:splitveos}}
	\label{fig:veosplit}
\end{figure}

\begin{figure}
	\begin{subfigure}[t]{.45\columnwidth}
		\centering
		\begin{tikzpicture}[grow=right,<-, level distance=10mm, level 1/.style={sibling distance=13mm}] \node(z){$z$}
			child {node(u){$u$}
			}
			child {node(y){$y$}
			  child {node(x){$x$}
			  }
			}; 
			\draw [-,opacity=0.15, line width =10pt, color=dg, line cap=round] (y.center) -- (x.center) node [below, opacity=1] {R} ;
			\draw [-,opacity=0.15, line width =10pt, color=red, line cap=round] (z.center) -- (y.center) node [below, opacity=1] {S};
			\draw [-,opacity=0.15, line width =10pt, color=blue, line cap=round] (z.center) -- (u.center) node [below, opacity=1] {T};
		\end{tikzpicture}
		\caption{An example $\veo$ $v$ for $\qtwostar$}
		\label{fig:ex:veo1:2star}
	\end{subfigure}
	\hspace{5mm}
	\begin{subfigure}[t]{.45\columnwidth}
			\centering
			\begin{tikzpicture}[grow=right,<-, level distance=10mm, level 1/.style={sibling distance=13mm}] \node(z){$z_3$}
				child {node(u){$u_4$}
				}
				child {node(y){$y_2$}
				  child {node(x){$x_1$}
				  }
				}; 
				\draw [-,opacity=0.15, line width =10pt, color=dg, line cap=round] (y.center) -- (x.center) node [below, opacity=1] {R} ;
				\draw [-,opacity=0.15, line width =10pt, color=red, line cap=round] (z.center) -- (y.center) node [below, opacity=1] {S};
				\draw [-,opacity=0.15, line width =10pt, color=blue, line cap=round] (z.center) -- (u.center) node [below, opacity=1] {T};
			\end{tikzpicture}
			\caption{An example $\veo$ instance $v\angle\w$ for $\qtwostar$ where $\w = (x_1,y_2,z_3,u_4)$}
			\label{fig:ex:veoinstance1:2star}
		\end{subfigure}
	\begin{subfigure}[t]{.45\columnwidth}
		\centering
		\begin{tikzpicture}[grow=right,<-, level distance=10mm, level 1/.style={sibling distance=13mm}] \node(z){$z_1$}
			child {node(y){$y_1$}
			}; 
		\end{tikzpicture}
		\caption{An example $\veo$ table prefix instance $v^S\angle\w$ for $\qtwostar$}
	\label{fig:ex:veotableprefixinstance1:2star}
	\end{subfigure}
	\hspace{5mm}
	\begin{subfigure}[t]{.45\columnwidth}
		\centering
		\begin{tikzpicture}[grow=right,<-, level distance=3mm, 
			level 1/.style={sibling distance=15mm, level distance=8mm},
			level 2/.style={sibling distance=5mm},
			level 3/.style={sibling distance=10mm},
			level 3/.style={sibling distance=3mm}]
			\node [xshift=0mm] {}
			child {node {$\textcolor{blue}{x_1}$}
				child{node  {$\textcolor{blue}{u_1}$}
				}
				child{node {$\textcolor{blue}{u_2}$}
				}
				child{node {$\textcolor{blue}{y_1}$}
					child{node {$\textcolor{blue}{z_1}$}
					}
				}
			edge from parent[draw=none]
			};
	\end{tikzpicture}
	\caption{VEO Factorization Forest (\veoff) $\mathcal{V}_1 =$\\$ [x_1 \!\leftarrow\! (u_1, u_2, (y_1 \leftarrow z_1)) ]$ }
	\label{fig:ex:veoff1}
	\end{subfigure}
	\begin{subfigure}[t]{.45\columnwidth}
		\centering
		\begin{tikzpicture}[grow=right,<-, level distance=3mm, 
			level 1/.style={sibling distance=15mm, level distance=8mm},
			level 2/.style={sibling distance=5mm},
			level 3/.style={sibling distance=10mm},
			level 3/.style={sibling distance=3mm}]
			\node [xshift=0mm] {}
			child {node {$\textcolor{blue}{y_1}$}
					child {node {$\textcolor{blue}{x_1}$}
				}
			edge from parent[draw=none]}
			child {node {$\textcolor{blue}{x_2}$}
				child{node [yshift=-3mm] {$\textcolor{blue}{y_2}$}
				}
				child{node [yshift=3mm] {$\textcolor{blue}{y_3}$}
				}
			edge from parent[draw=none]
			};
	\end{tikzpicture}
	\caption{VEO Factorization Forest (\veoff) $\mathcal{V}_2 =$\\$ [x_1 \!\leftarrow\! (u_1, (y_1 \leftarrow z_1)) ; y_1 \!\leftarrow\! (z_1, (x_1 \leftarrow u_1))  ]$    }
	\label{fig:ex:veoff2}
	\end{subfigure}
	\hspace{5mm}
	\begin{subfigure}[t]{.45\columnwidth}
		\centering
		\begin{tikzpicture}[grow=right,<-, level distance=3mm, 
			level 1/.style={sibling distance=15mm, level distance=8mm},
			level 2/.style={sibling distance=5mm},
			level 3/.style={sibling distance=10mm},
			level 3/.style={sibling distance=3mm}]
			\node [xshift=0mm] {}
			child {node {$\textcolor{blue}{x_1}$}
					child {node {$\textcolor{blue}{y_1}$}
					child {node {$\textcolor{blue}{z_1}$}}
					}
					child {node {$\textcolor{blue}{u_1}$}
				}
			edge from parent[draw=none]}
			child {node {$\textcolor{blue}{y_1}$}
				child{node [yshift=-3mm] {$\textcolor{blue}{z_1}$}
				}
				child{node [yshift=3mm] {$\textcolor{blue}{x_1}$}
					child{node [yshift=3mm] {$\textcolor{blue}{u_1}$}
						child{node {$\textcolor{blue}{u_2}$}
						}
					}
				}
			edge from parent[draw=none]
			};
	\end{tikzpicture}
	\caption{An invalid VEO Factorization Forest (\veoff) \\ $\mathcal{V}_3 = [x_1 \!\leftarrow\! (u_1, (y_1 \leftarrow z_1)) ; y_1 \!\leftarrow\! (z_1, (x_1 \leftarrow u_1 \leftarrow u_2))  ]$   }
	\label{fig:ex:veoff3}
	\end{subfigure}
\caption{\cref{ex:veoinstance,ex:tableprefixinstance,ex:veoff}: Examples of $\veo$ instances, $\veo$ table prefix instances and $\veo$ factorization forests
}
\label{ex:veoveoinstanceveoff}
\end{figure}

\subsection{Connections of $\veo$s to Related Work}

\introparagraph{Connection to FBDs}
Our definition of $\veo$s parallels the definition of variable orders defined in Factorized Databases~\cite{DBLP:journals/sigmod/OlteanuS16,DBLP:journals/tods/OlteanuZ15}, however has two key differences: (i) we do not allow any caching in the variable order, and (ii) we allow multiple variables in each node of the $\veo$.

Thus, we allow the $\{yz\} \!\leftarrow\! x$, while an FDB style definition would force every factorization to choose between $y \!\leftarrow\! z \!\leftarrow\! x$ and $z\!\leftarrow\! y \!\leftarrow\! x$.  

We see that using multiple variables in each node of the $\veo$ allows us to build a more restricted search space, since some $\veo$s with multiple variables in the nodes are minimal $\veo$s (defined through notions of minimal query plans \cite{gatterbauer2014oblivious}, later described in \cref{sec:appendix:minveos}). 
An example of a query with such a minimal $\veo$ is the triangle unary query $\qtriangleunary$ (\cref{ex:triangleunaryendtoend}).

\introparagraph{Connection to Knowledge Representation}
We can draw a parallel between $\veo$s and $\veo$ instances as the notion of v-trees and DNNF formulas in knowledge compilation~\cite{pipatsrisawat2008new}, as the v-trees and $\veo$s both capture the \emph{structure} of the data in the $\veo$ instance or DNNF, respectively.

\subsection{Proof of \cref{thm:factorizationveoff}: Reversible Transformation Between Factorization Trees and $\veo$ Factorization Forests}

\thmfactorizationveo*

\begin{proof}[Proof~\cref{thm:factorizationveoff}]
We first describe in \cref{alg:transformfttoveoff} a partial mapping from a factorization tree (\factorizationtree)
to a $\veo$ factorization forest (\veoff).
Note that such a transformation may result in an error if it does not succeed.
We show that this process is always reversible if it succeeds.
We then show that for at least one minimal $\factorizationtree$ of any provenance formula $\varphi_p$, 
such a transformation succeeds, and is thus also reversible.

\cref{alg:transformfttoveoff} returns a structure consistent with the definition of a $\veo$ factorization forest since 
(1) for each witness there is a unique $\veo$ instance in the $\veoff$ corresponding to sequence of $\oplus$ and $\otimes$ operations that evaluated to the witness in the $\factorizationtree$, 
(2) taking away any variable from the structure would remove at least once $\veo$ instance (corresponding to a witness of the literal the variable is derived from), hence the $\veoff$ is minimal in line with the required definition.

If the \cref{alg:transformfttoveoff} does indeed return a $\veoff$, then we can show that this transformation is always reversible by applying the transformation detailed in \cref{alg:transformveofftoft}.
We can prove step-by-step that the \cref{alg:transformveofftoft} recovers the same $\factorizationtree$ as the one that was used to construct the $\veoff$.
We see that the relationship between $\otimes$ nodes in the factorization trees is never modified in the transformation to the $\veoff$, and hence substituting the nodes of the $\veoff$ with $\otimes$ nodes directly recovers the $\otimes$ relationships. Due to the alternating operators in the factorization trees, we can also recover the $\oplus$ nodes by placing them between $\otimes$ nodes. 
The leaves can be recovered from the table prefix instances. 
Since in the transformation to $\veoff$s we projected away variables ``as soon as possible'', the $\veo$ table prefix instances can recover losslessly where each tuple was joined in the original factorization tree.

However, consider that \cref{alg:transformfttoveoff} returns an error for a given $\factorizationtree$.
Then there exists a join node $n$ in the ``reduced'' annotated $\factorizationtree$ built during \cref{alg:transformfttoveoff} that is not annotated. 
This happens when all the variables at that join node are used in subsequent joins.
Like in \cref{alg:transformfttoveoff}, let $D$ be the relations of the descendant leaves of $n$ and $Q'$ be the query obtained by removing atoms from $Q$ that correspond to relations in $D$.
We also define $Q''$ as the query obtained by keeping only atoms from $Q$ that correspond to relations in $D$.
Thus, $Q$ is split into $Q'$ and $Q''$.
If we have an error condition, there is a set of relations $\vec R$ in $Q'$ that is a superset of $\var(Q'')$.
We see that the factorization length cannot increase if we merge the joins of the relations $\vec R$ with the relations of $Q''$.
We can now try the transformation again on the new $\factorizationtree$ with the merged joins.
We can repeat this process until we have a $\factorizationtree$ that does not result in an error. 
We know that we always make progress since the merging of joins reduces the number of internal nodes of the $\factorizationtree$.
\end{proof}

\begin{algorithm}
\SetKwInOut{Input}{Input}
\Input{A factorization tree $\factorizationtree$ over provenance formula $\varphi_p$ of query $Q$}
\KwResult{A $\veoff$ or an error}
\tcc{Build annotated factorization tree:}

Replace $\otimes$ nodes with a join ($\bowtie$) and $\oplus$ nodes with a projection ($\pi$) \\
\tcc{Annotating the nodes from the leaves to root}
\While{$\exists$ a node $n$ that is not annotated}{
	\If{$n$ is a leaf node }{Annotate with annotated-domain variables of the tuple at the leaf\\}
	\If{Join node}{Annotate with the union of the annotations of the children of $n$ \\}
	\If{Projection node}{
		$v$ = union of annotations of the children of $n$ \\
		$D$ = set of all relations of descendants of $n$ \\
		$Q'$ = query obtained by removing atoms from $Q$ that correspond to relations in $D$\\
		$v'$ = The annotated-domain variables in $v$ that do not correspond to variables of $\var(Q')$ \tcc{These are the variables that can be projected away since they do not feature in any subsequent joins $Q'$}
		Annotate $n$ with $v'$\\}
}
\tcc{Reduce the join annotations:}
	\If{$v$ is a variable that is part of the annotation of a join node $n$}{
		\If{$\exists n'$ that is an ancestor of $n$ and contains $v$}{
			Remove $v$ from $n$\\
		}
	}
\If{A join node has no annotation after minimization}{
	\Return{Error}
}
\tcc{Extract the join annotations:}
From the annotated $\factorizationtree$, build a forest of join nodes where the parent of a join node is its grandparent in the $\factorizationtree$ (thus skipping over $\pi$ nodes)\\
Replace join nodes with the annotations obtained after the minimization step to obtain a $\veoff$ \\
\Return{The constructed $\veoff$}
\caption{Transformation of a factorization tree to a $\veo$ factorization forest}
\label{alg:transformfttoveoff}
\end{algorithm}

\begin{algorithm}
	\SetKwInOut{Input}{Input}
	\Input{A $\veoff$ built from a $\factorizationtree$ via \cref{alg:transformfttoveoff}}
	\KwResult{The original $\factorizationtree$}

	\tcc{Re-add literals to the $\veoff$}
	Identify the paths in the $\veoff$ that correspond to table prefix instances\\
	\If{Node $n$ is the leaf of a table prefix instance}{
		Add a child to $n$ corresponding to the tuple represented by the table prefix instance\\
	}
	\tcc{Re-add $\oplus$ and $\otimes$ nodes}{
		Replace each node in the VEOFF with a $\otimes$ node \\
		Add a $\oplus$ node as a child for every $\otimes$ node\\
		Replace the parent of each $\otimes$ node with the $\oplus$ child of the original parent\\
	}
	\If{there were multiple trees in the $\veoff$}{
		Add an $\oplus$ node as the parent of all the root $\otimes$ nodes\\
	}
	\Return{The reconstructed $\factorizationtree$}
	\caption{Transformation of a factorization tree to a $\veo$ factorization forest}
	\label{alg:transformveofftoft}
\end{algorithm}

\begin{example}[Instance where a $\factorizationtree$ cannot be transformed to a $\veoff$]
	Consider the factorization tree $\otimes(s_{11},\oplus(\otimes(r_1,t_1)))$ for the $\qtwostar$ query over the simple database with $3$ tuples $R(1)$, $S(1,1)$, $T(1)$.

	The join node that is the parent of $r_1$ and $s_1$ is annotated with $x_1$ and $y_1$.
	However, both these variables are also required at the join node with $s_{11}$ and so cannot be projected away.
	Thus, in the step where we reduce the join annotations in \cref{alg:transformfttoveoff}, both $x_1$ and $y_1$ are removed from the deeper join node, leading to an empty annotation. 
	The transformation thus fails.

	We can see intuitively that this is due to the fact a join is performed on two relations whose variables are subsumed by the variables of another relation.
	The joins can thus be reordered without increasing the size of the factorization tree.
\end{example}

\subsection{Proof of \cref{thm:minfacveo}: Construction of Factorization Trees from $\veo$s}

\thmminfacveo*

\begin{proof}[Proof~\cref{thm:minfacveo}]
	Through \cref{thm:factorizationveoff}, we know that for any provenance
	formula $\varphi_p$ there exists a minimal size $\factorizationtree$ that has a correspondence to a $\veoff$. 
	We show all such $\veoff$s can be constructed by assigning a $\veo$ to each witness of $\varphi_p$.
	From the assignment of $\veo$s to each witness, we are able to construct the corresponding $\veo$ instances.
	We use the merge operation (\cref{def:veomerge}) on $\veo$ instances greedily to merge common prefixes. 
	The merge operation is both commutative and associative, and no matter the order of merging, all $|\vec w|$ $\veo$ instances will produce exactly the same forest after merging.
	This resulting forest is the $\veoff$ corresponding to the minimal $\factorizationtree$, since (1) it contains a $\veo$ instance for each witness of $\varphi_p$ (2) it cannot be made smaller since the merge operation maximally merges all shared prefixes.
	Since we can build a $\veoff$ corresponding to a minimal formula of $\varphi_p$ from the assignment of $\veo$s to witnesses, we can also $\factorizationtree$ corresponding to a minimal formula of $\varphi_p$ through the reverse transformation \cref{alg:transformveofftoft}.
\end{proof}

\subsection{Details about Minimal $\veo$s}
\label{sec:appendix:minveos}

We show a connection between minimal $\veo$s and minimal query plans from \cite{gatterbauer2014oblivious}.
To do this, we leverage a correspondence of $\veo$s to query plans and the previously defined notion of dissociations from probabilistic databases \cite{gatterbauer2014oblivious}.

Intuitively, the goal of dissociations in probabilistic inference~\cite{gatterbauer2014oblivious} is to obtain better inference bounds, by converting it to a hierarchical query, for which inference is easy.
Our purpose in using dissociations is in the same spirit - we know that hierarchical queries have read-once factorizations, and hence we would like to use dissociations to obtain a dissociated database on which we can obtain the original provenance by using the dissociated query.

We here recap the key notions of dissociations from \cite{gatterbauer2014oblivious} and provide examples.
But first, we add some more detail on provenance computation from query plans and hierarchical query plans.

\introparagraph{Provenance computation from Query Plans}
We slightly adapt our previous definition of query plans to ``provenance query plans.''
Each sub-plan $P$ now
returns an
intermediate relation of arity $|\HVar(P)|+1$. 
The extra
\emph{provenance attribute} stores an expression $\varphi$ for each output tuple
$t \in P(D)$ returned from a plan $P$.

Given a database $D$ and a plan $P$, 
$\varphi$ is defined
inductively on the structure of $P$ as follows: 
(1) If $t \in R_i(\vec x)$, 
then $\varphi(t)= \varphi_t$, i.e.\ the provenance token of $t$ in $D$;
(2) if $t \in \,
\join{}{P_1(D), \ldots, P_k(D)}$ where 
$t = \join{}{t_1, \ldots,
  t_k}$, then 
  $\varphi(t) = \bigwedge_{i=1}^k\varphi(t_i)$; 
  and
(3) if $t \in
\projd{\vec x}P(D)$, and $t_1, \ldots, t_n \in P(D)$ are all the
tuples that project onto $t$, then 
$\varphi(t) = \big(\bigvee_{i=1}^n \varphi(t_i) \big)$ (notice the required outer parentheses).
For a Boolean plan $P$, we get one single expression,
which we denote $\varphi(P, D)$.  

\introparagraph{Hierarchical Query Plans}
An important type of plan is the \emph{hierarchical plans}. 
Intuitively, if a query has a hierarchical plan, then this plan can be used to always produce read-once factorized provenance polynomial~\cite{DBLP:conf/sum/OlteanuH08} (and thus the minimum size factorization).
The only queries that have hierarchical plans are the \emph{hierarchical queries}.

\begin{definition}[Hierarchical query~\cite{DBLP:journals/vldb/DalviS07}]
\label{def:hierarchicalQ}
A query $Q$ is called hierarchical iff for any two existential variables
$x,y$,
one of the following three conditions holds:
$\at(x) \subseteq \at(y)$, $\at(x) \supseteq \at(y)$, or $\at(x) \cap \at(y) = \emptyset$ where $\at(x)$ is the set of atoms of $Q$ in which $x$ participates.
\end{definition}

\begin{definition}[Hierarchical plan]\label{def:safePlan}
A plan $P$ is called \emph{hierarchical} iff, 
for each join  $\join{}{P_1, \ldots, P_k}$, 
the head variables of each sub-plan $P_i$ contain the same existential variables of plan $P$:
$\HVar(P_i) \cap \EVar(P) = \HVar(P_j) \cap \EVar(P), \forall i, j \in [k]$
\end{definition}

\begin{example}[Hierarchical Queries]
	The $2$-chain query $\qtwochain \datarule$ 
	$R(x,y), S(y,z)$ is a hierarchical query since for each pair of variables $a,b$ either $\at(a)$ and $\at(b)$ are disjoint (like the variables $x$ and $z$, since $\at(x)= \{R\}$ and $\at(z)= \{S\}$) or have a subset relationship (like $\at(x)= \{R\}$ and  $\at(y)= \{R, S\}$).

However, the $3$-chain query $\qthreechain \datarule R(x,y), S(y,z), T(z,u)$ is not hierarchical since $\at(y) = \{R, S\}$ and $\at(z) = \{S,T\}$ 
and these are overlapping sets with neither being a subset of the other.
\end{example}

\begin{example}[Hierarchical Plan]
	\label{ex:hierarchicalplan2chain}
	Consider two query plans for the query $\qtwochain$ in \cref{Fig:ex:twochainqps}.

The query plan $P_2$ is \emph{not} a hierarchical plan since for the sub-plan $\Join(R(x,y), S(y,z))$ 
the head variables for $R(x,y)$ are $\{x, y\}$
and for $S(y,z)$ they are $\{y, z\}$.
Thus, they contain different sets of existential variables of $P_2$.

However, in $P_1$, at the only join $\Join(\pi_{-x}R(x,y), \pi_{-z}S(y,z))$, the head variables of both subplans $\pi_{-x}R(x,y)$ and $\pi_{-z}S(y,z)$ 
are the same set $\{y\}$, so the plan is hierarchical.
\end{example}

\begin{figure}
	\centering
	\begin{subfigure}[b]{.45\columnwidth}
	\centering
	\begin{tikzpicture}[grow=right,<-, level distance=20mm, baseline=-0.5ex, 
			level 1/.style={sibling distance=12mm},
			level 2/.style={level distance=20mm}]
			\node [xshift=1mm] (z){$\pi_{-y}\Join$}
		    child {node(u){$\pi_{-x}R(x,y)$}
		    }
		    child {node(y){$\pi_{-z}S(y,z)$}
		  };
	\end{tikzpicture}
	\caption{Query Plan $P_1$}
	\label{Fig:ex:twochainqp1}
	\end{subfigure}
	\begin{subfigure}[b]{.45\columnwidth}
	\centering
	\begin{tikzpicture}[grow=right,<-, level distance=20mm, baseline=-0.5ex, 
		level 1/.style={sibling distance=12mm},
		level 2/.style={level distance=20mm}]
		\node [xshift=1mm] (z){$\pi_{-xyz}\Join$}
		    child {node(u){$R(x,y)$}
		    }
		    child {node(y){$S(y,z)$}
		  };
	\end{tikzpicture}
	\caption{Query Plan $P_2$}
	\label{Fig:ex:twochainqp2}
	\end{subfigure}
\caption{\Cref{ex:hierarchicalplan2chain}: Two legal
query plans for $\qtwochain$.
$P_1$ is a hierarchical minimal plan while $P_2$ is not.}
\label{Fig:ex:twochainqps}
\end{figure}

\introparagraph{Query Dissociation~\cite[Section 3.1]{DBLP:journals/vldb/GatterbauerS17}}
We leverage the previously defined idea of query dissociation to generate query plans that ``come as close as possible'' to hierarchical plans, to help find the minimal factorization for queries that are not read-once.

A query dissociation is a rewriting of both the data and the query by adding variables to atoms in the query.
In this paper, we care only about hierarchical dissociations, i.e.\ dissociations s.t.\ the rewritten dissociated query plan is hierarchical. 
Dissociations can be compared with a partial order, and for our problem, we only care about minimal hierarchical dissociations and their corresponding plans, known as \emph{minimal query plans}.
There exists an algorithm that can find all such minimal query plans~\cite{DBLP:journals/vldb/GatterbauerS17}, and we assume them as input.

\begin{definition}[Query dissociation]\label{def:Dissociation}
Given a query 
$Q(\vec z) \datarule R_1(\vec x_1), $ $\ldots, R_m(\vec x_m)$.
Let $\Delta = (\vec y_1, \ldots, \vec y_m)$ be a collection of sets
of variables with $\vec y_i \subseteq \EVar(Q)- \vec x_i$ for every relation $R_i$.
The ``\emph{dissociated query}'' defined by $\Delta$ is then
$Q^{\Delta}(\vec z) \datarule R_1^{\vec y_1}(\vec x_1, \vec y_1), \ldots, R_m^{\vec y_m}(\vec x_m, \vec y_m)$
where each $R_i^{\vec y_i}(\vec x_i, \vec
y_i)$ is a new relation of arity $|\vec x_i| + |\vec y_i|$.
\end{definition}

\begin{example}[Query Dissociation]
	\label{ex:querydiss}
	Consider the non-hierarchical query of $\qthreechain \datarule R(x,y), S(y,z), $ $T(z,u)$ again.
	Consider the dissociation $\Delta = (\{z\}, \{ \}, \{ \})$. 
	Then the dissociated query is ${\qthreestar}^\Delta \datarule R(x,y,z),$ $S(y,z), T(z,u)$.
\end{example}

Intuitively, the goal of dissociations in probabilistic inference~\cite{DBLP:journals/vldb/GatterbauerS17} is to obtain better inference bounds, by converting it to a hierarchical query, for which inference is easy.
Our purpose in using dissociations is in the same spirit - we know that hierarchical queries have read-once factorizations, and hence we would like to use dissociations to obtain a dissociated database on which we can obtain the original provenance by using the dissociated query.

\introparagraph{Hierarchical Dissociation}

\begin{definition}[Hierarchical dissociation]
A dissociation $\Delta$ of a query $Q$ is called \emph{hierarchical}
if the dissociated query ${Q^{\Delta}}$ is hierarchical.
We denote the corresponding hierarchical plan applied over the original relations $P^\Delta$.
\end{definition}

For every sf-free CQ, there is an isomorphism between the set of query plans and the set of hierarchical dissociations \cite{DBLP:journals/vldb/GatterbauerS17}. 
Thus we can restrict the dissociations we consider to hierarchical dissociations.

\begin{example}[Hierarchical Dissociation]
	Consider again the non-hierarchical query of $\qthreechain$ 
	and its dissociation $\Delta = (\{z\}, \{ \}, \{ \})$ from \cref{ex:querydiss}.
	Then the dissociated query ${\qthreestar}^\Delta \datarule R(x,y,z), S(y,z),$ $T(z,u)$ is hierarchical
	as all pairs of variables satisfy the constraint for hierarchical queries. 
	Thus $\Delta$ is a hierarchical dissociation.
\end{example}

\introparagraph{Partial Dissociation Orders}
The number of dissociations to be considered can be further reduced by exploiting an ordering on the dissociations.

\begin{definition}[Partial dissociation order]
Given two dissociations
$\Delta = (\vec y_1, \ldots, \vec y_m)$
and
$\Delta' = (\vec y_1', \ldots, \vec y_m')$.
We define the partial order on the dissociations of a query as:
$\Delta \preceq \Delta'   \,\,\Leftrightarrow\,\,  \forall i: \vec y_i \subseteq \vec y_i'$.
\end{definition}

A minimal hierarchical dissociation is one such that no smaller dissociation is hierarchical.

\begin{example}[Partial Dissociation Order]
	For $\qthreechain$, the dissociation $\Delta = (\{z\}, \{ \}, \{ \})$ is a minimal hierarchical dissociation while $\Delta' = (\{z\}, \{ \}, \{y\})$ is hierarchical but not minimal.

	Additionally, we can say that $\Delta \preceq \Delta'$.
\end{example}

\introparagraph{Equivalence of $\mveo(Q)$ and Minimal Query Plans}
In \cref{SEC:FACTORIZATIONSPACE}, we defined minimal $\veo$s based on a partial order $\preceq$ defined via the size of table prefixes.
We show that this partial order corresponds to the partial dissociation order, and hence we can obtain a direct correspondence between $\mveo(Q)$ and Minimal Query Plans (due to the fact that all $\veo$s correspond to query plans through the bijection explained in \cref{SEC:APPENDIX:VEOQP}).
The variables of table-prefix of $R$ always contain the variables of $R$, plus additional variables that are used in joins with $R$.  
If these ``additional'' variables of each table-prefix are added to the respective relations, then the resulting query is hierarchical since every atom already contains all the variables that are used in joins with them.
In other words, the additional variables in the table-prefixes correspond exactly to a hierarchical dissociation of the query.
We can see then that the partial order on $\veo$s is exactly the same as the partial dissociation order since both look at the variables in the table-prefixes / dissociations.

\subsection{Proof of \cref{thm:minfactminveo}: Construction of Factorization Trees from minimal $\veo$s}

\thmminfactminveo*

\begin{proof}[Proof \cref{thm:minfactminveo}]
	We know from \cref{thm:minfacveo} that a minimal factorization tree can be constructed by assigning $\veo$s to each witness.
	It now remains to be shown that using non-minimal $\veo$s (corresponding to non-minimal query plans) cannot make the factorization smaller than if just minimal $\veo$s are used.
	Assume that the factorization uses a non-minimal query plan P', which corresponds to the dissociation $\Delta'$.
	Since it is not minimal there must exist a dissociation $\Delta$ where $\Delta \preceq \Delta'$. 
	This means that the provenance (of $D$, or simply a subset of witnesses) calculated with $P'$ is a dissociation of the provenance calculated with $P$, which means that each variable occurs at least as many times in $\varphi(P^{\Delta'})$
	as in $\varphi(P^{\Delta})$~\cite{gatterbauer2014oblivious}.
\end{proof}

\subsection{End to End Example: Query Plans, $\veo$s and Factorizations}

\cref{ex:endtoendveostu} gives an end-to-end example that shows a query, its minimal hierarchical dissociations~\cite{DBLP:journals/vldb/GatterbauerS17}, the corresponding minimal query plans, the $\veo$s, and the factorizations that each $\veo$ corresponds to. 
This example is also interesting since it shows minimal $\veo$s that have \emph{multiple variables in a node}; this is not permitted in prior definitions of $\veo$s such as \cite{DBLP:journals/sigmod/OlteanuS16,Dechter:1999:Bucket}.

\begin{example}[Triangle unary $\qtriangleunary$]
	\label{ex:triangleunaryendtoend}
	Consider the triangle unary query $\qtriangleunary \datarule U(x), R(x,y), $ $S(y,z), T(z,x)$.
	Notice that this query is not hierarchical: $\at(y) = \{R, S\}$ and $\at(z) = \{S, T\}$, 
	which is not allowed in a hierarchical query.
	Thus, it does not have a hierarchical plan, and we know that a database instance under this query is not guaranteed to have a read-once factorization.
	
	Thus, to obtain the minimal factorization, we consider all its minimal hierarchical dissociations~\cite{DBLP:journals/vldb/GatterbauerS17}, 
	which correspond to the minimal query plans and minimal $\veo$s.
	This query has 3 such minimal dissociations, shown in \cref{Fig:TriangleUnaryEndToEnd} as incidence matrices, query plans, $\mveo$s, and factorizations.
	The objective is now to assign each witness to exactly one of these 3 minimal dissociations.
	Here the incidence matrix is a simple visual representation of the variables, atoms, and variables contained in each atom.
	Recall the intuition for dissociations of adding variables to atoms.

	Notice that $v_1 = \{y,z\} \!\leftarrow\! x$
	would not be permitted in an FDB-style variable ordering, which requires a single variable at each level of the tree~\cite[Definition 3.2]{DBLP:journals/sigmod/OlteanuS16}.
	Thus it would require $v_1$ to be split into $2$ different query plans $y \!\leftarrow\! z \!\leftarrow\! x$ and $z \!\leftarrow\! y \!\leftarrow\! x$. Our approach takes into account concurrent variable elimination and hence can reduce the search space for minimal factorizations in this case.
	\label{ex:endtoendveostu}
\end{example}

\begin{figure}
	\centering
	\begin{subfigure}[b]{.7\linewidth}
		\centering
		\begin{minipage}[t]{40mm}			
		\centering
		\setlength{\tabcolsep}{1mm}
			\mbox{		
					\begin{tabular}[t]{c | c c c  }
						& x & y	& z\\
				\hline
				R	& \circle{5}	& \circle{5} & \\
				S	& 	& \circle{5}	& \circle{5} \\
				T	& \circle{5} &	& \circle{5} \\
				U	& \circle{5}	&  & 			
				\end{tabular}			
			}
		\end{minipage}
	\caption{$\qtriangleunary$ Incidence Matrix
	}\label{tab:TriangleUnaryIncidence}
	\end{subfigure}
	
	\vspace{5mm}

	\begin{subfigure}[b]{.25\linewidth}
		\centering
		\begin{minipage}[t]{30mm}			
		\centering
		\setlength{\tabcolsep}{0.8mm}
			\mbox{
					\begin{tabular}[t]{c | c c c  }
					 	& x & y	& z\\
					\hline
					R	& \circle{5}	& \circle{5} & \circle*{5}\\
					S	& 	& \circle{5}	& \circle{5} \\
					T	& \circle{5} & \circle*{5}	& \circle{5} \\
					U	& \circle{5}	& \circle*{5} & \circle*{5}			
					\end{tabular}			
			}
		\end{minipage}
	\caption{$\Delta_1$
	}\label{tab:TriangleUnaryDelta1}
	\end{subfigure}
	\begin{subfigure}[b]{.37\linewidth}
		\centering
		\begin{minipage}[t]{30mm}			
		\centering
		\setlength{\tabcolsep}{0.8mm}
			\mbox{
					\begin{tabular}[t]{c | c c c  }
					 	& x & y	& z\\
					\hline
					R	& \circle{5}	& \circle{5} & \\
					S	& \circle*{5}	& \circle{5}	& \circle{5} \\
					T	& \circle{5} & \circle*{5}	& \circle{5} \\
					U	& \circle{5}	&  & 			
					\end{tabular}			
			}
		\end{minipage}
	\caption{$\Delta_2$
	}\label{tab:TriangleUnaryDelta2}
	\end{subfigure}
	\begin{subfigure}[b]{.33\linewidth}
		\centering
		\begin{minipage}[t]{30mm}			
		\centering
		\setlength{\tabcolsep}{0.8mm}
			\mbox{
					\begin{tabular}[t]{c | c c c  }
					 	& x & y	& z\\
					\hline
					R	& \circle{5}	& \circle{5} &  \circle*{5}\\
					S	& \circle*{5}	& \circle{5}	& \circle{5} \\
					T	& \circle{5} &	& \circle{5} \\
					U	& \circle{5}	&  & 			
					\end{tabular}			
			}
		\end{minipage}
	\caption{$\Delta_3$
	}\label{tab:TriangleUnaryDelta3}
	\end{subfigure}

	\vspace{5mm}

	\begin{subfigure}[b]{.28\linewidth}
		\centering
     	\includegraphics[scale=0.6]{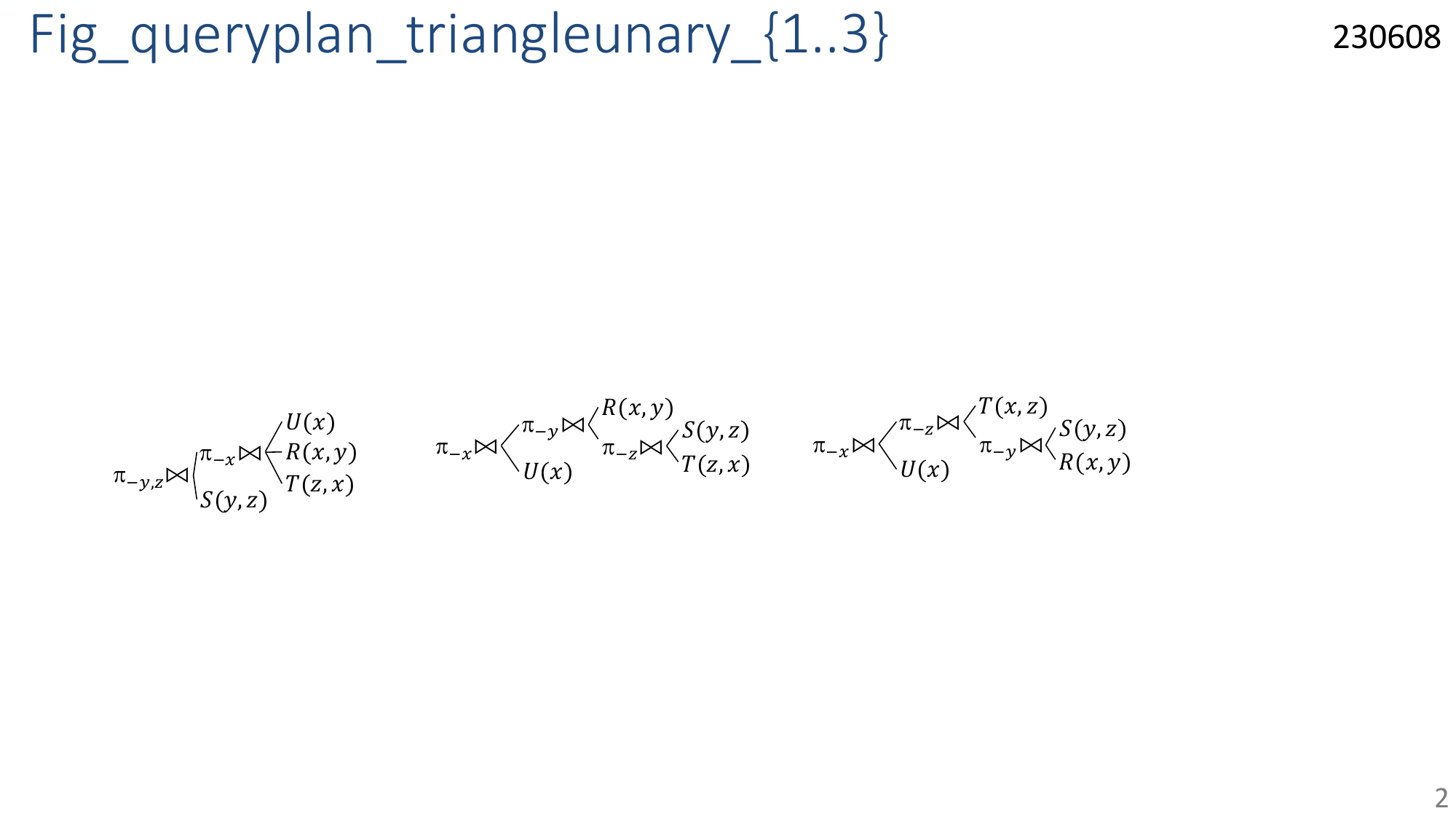}
        \caption{$P_1$}
    \label{fig:TriangleUnaryP1}
	\end{subfigure}
	\begin{subfigure}[b]{.35\linewidth}
		\centering
        \includegraphics[scale=0.6]{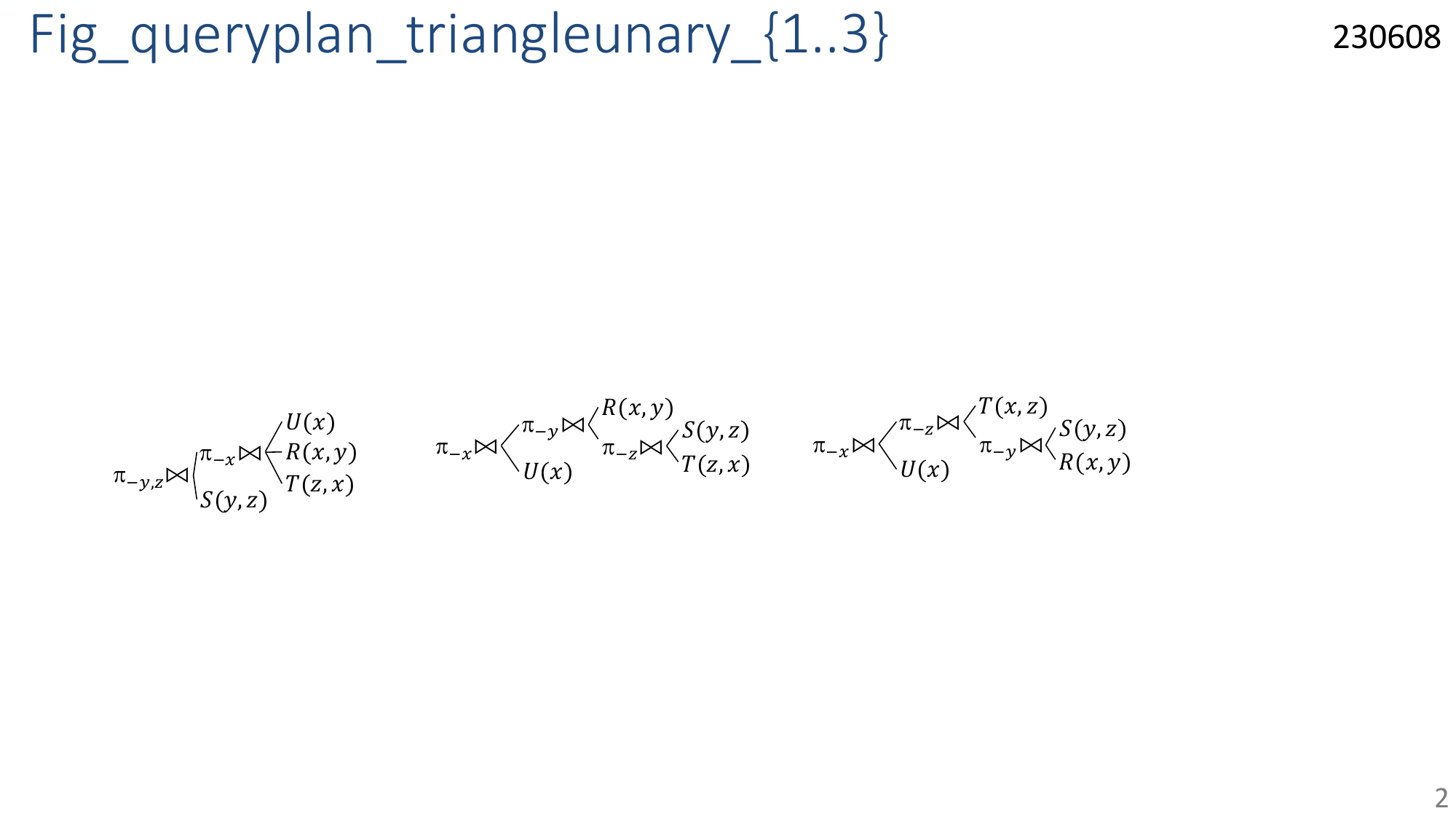}
        \caption{$P_2$}
    \label{fig:TriangleUnaryP2}
	\end{subfigure}
	\begin{subfigure}[b]{.35\linewidth}
		\centering
     	\includegraphics[scale=0.6]{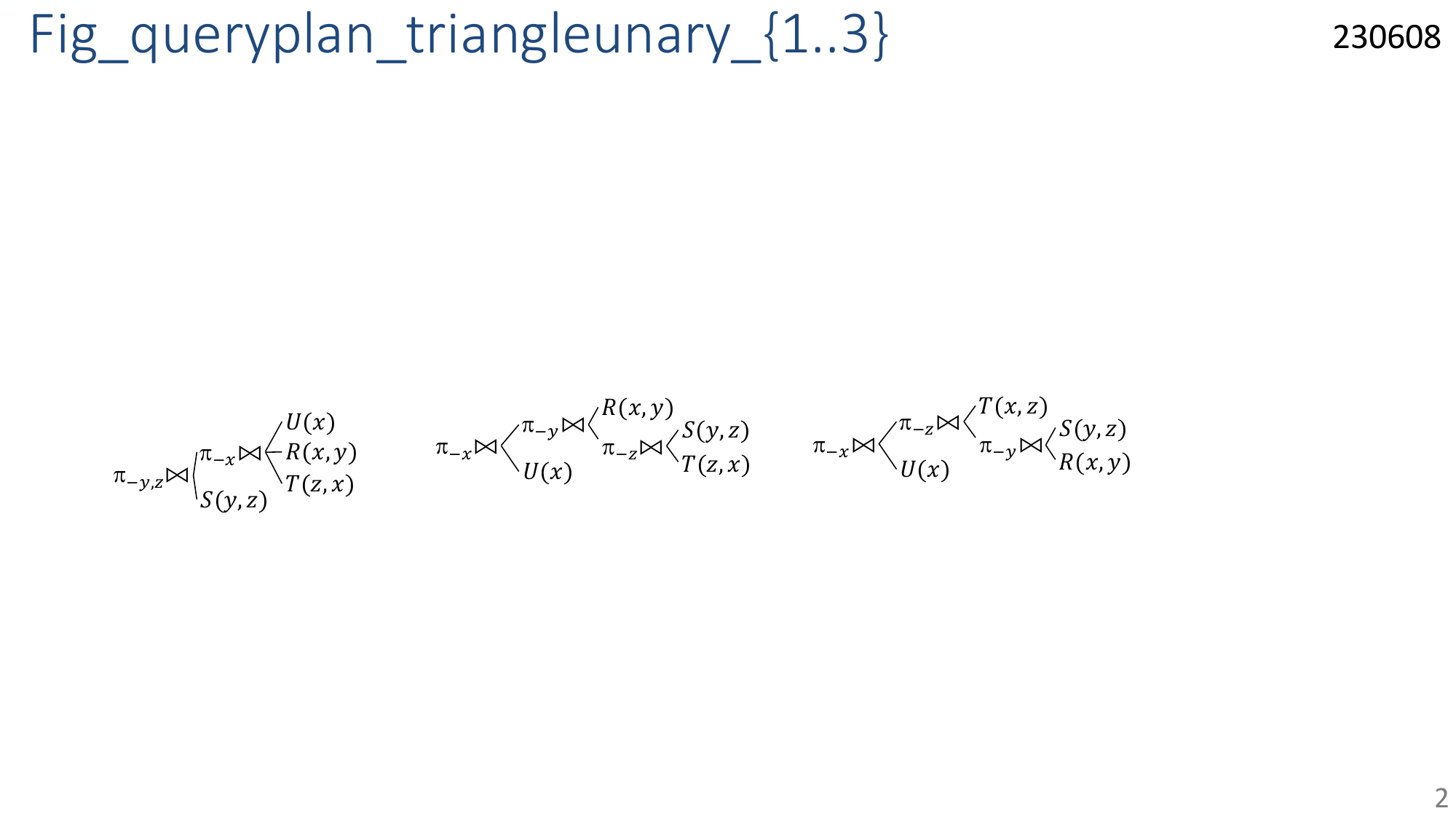}
        \caption{$P_3$}
    \label{fig:TriangleUnaryP3}
	\end{subfigure}

	\vspace{5mm}

	\begin{subfigure}[b]{.25\linewidth}
		\centering
		\begin{tikzpicture}[grow=right,<-, level distance=10mm, baseline=-0.5ex] \node(xy){$yz$} child {node(z){$x$} };
		\draw [-,opacity=0.15, line width =20pt, color=blue, line cap=round] (xy.center) -- (z.center) node [above, opacity=1] {T};
		\draw [-,opacity=0.15, line width =20pt, color=red, line cap=round] (xy.center) -- (z.center) node [above, right, opacity=1] {U};
		\draw [-,opacity=0.15, line width =20pt, color=dg, line cap=round] (xy.center) -- (z.center) node [below, opacity=1] {R};
		\draw [-,opacity=0.5, line width =20pt, color=brown, line cap=round] (xy.center) -- (xy.center) node [above, opacity=1] {S} ;
		\draw [-,opacity=0, line width =40pt, color=orange, line cap=round] (xy.center) -- (xy.center) node [below, opacity=1] {$1$};
		\draw [-,opacity=0, line width =40pt, color=orange, line cap=round] (z.center) -- (z.center) node [below, opacity=1] {$3$};
		\end{tikzpicture}
		\caption{$v_1 = yz \!\leftarrow\! x$} \label{TriU:VEO1}
	\end{subfigure}
	\hspace{1mm}
	\begin{subfigure}[b]{.37\linewidth}
		\centering
		\begin{tikzpicture}[grow=right,<-, level distance=10mm, baseline=-0.5ex] \node(x){$x$} child {node(y){$y$} child {node(z){$z$} }};
		\draw [-,opacity=0.5, line width =20pt, color=brown, line cap=round] (x.center) -- (z.center) node [above, opacity=1] {S};
		\draw [-,opacity=0.15, line width =20pt, color=blue, line cap=round] (x.center) -- (z.center) node [below, opacity=1] {T};
		\draw [-,opacity=0.15, line width =20pt, color=dg, line cap=round] (x.center) -- (y.center) node [above, opacity=1] {R} ;
		\draw [-,opacity=0.15, line width =20pt, color=red, line cap=round] (x.center) -- (x.center) node [above, opacity=1] {U};
		\draw [-,opacity=0, line width =40pt, color=orange, line cap=round] (x.center) -- (x.center) node [below, opacity=1] {$1$};
		\draw [-,opacity=0, line width =40pt, color=orange, line cap=round] (x.center) -- (y.center) node [below, opacity=1] {$1$};
		\draw [-,opacity=0, line width =40pt, color=orange, line cap=round] (x.center) -- (z.center) node [below, opacity=1] {$2$};
	\end{tikzpicture}
		\caption{$v_2 = x \!\leftarrow\! y \!\leftarrow\! z$} \label{TriU:VEO2}
	\end{subfigure}
	\hspace{0mm} 
	\begin{subfigure}[b]{.33\linewidth}
		\centering
		\begin{tikzpicture}[grow=right,<-, level distance=10mm, baseline=-0.5ex] \node(x){$x$} child {node(z){$z$} child {node(y){$y$} }};
			\draw [-,opacity=0.5, line width =20pt, color=brown, line cap=round] (x.center) -- (y.center) node [above, opacity=1] {S};
			\draw [-,opacity=0.15, line width =20pt, color=dg, line cap=round] (x.center) -- (y.center) node [below, opacity=1] {R} ;
			\draw [-,opacity=0.15, line width =20pt, color=blue, line cap=round] (x.center) -- (z.center) node [above, opacity=1] {T};
			\draw [-,opacity=0.15, line width =20pt, color=red, line cap=round] (x.center) -- (x.center) node [above, opacity=1] {U};
			\draw [-,opacity=0, line width =40pt, color=orange, line cap=round] (x.center) -- (x.center) node [below, opacity=1] {$1$};
			\draw [-,opacity=0, line width =40pt, color=orange, line cap=round] (x.center) -- (z.center) node [below, opacity=1] {$1$};
			\draw [-,opacity=0, line width =40pt, color=orange, line cap=round] (x.center) -- (y.center) node [below, opacity=1] {$2$};	
		\end{tikzpicture}
		\caption{$v_3 = x \!\leftarrow\! z \!\leftarrow\! y$}
		\label{TriU:VEO3}
	\end{subfigure}

	\vspace{10mm}

	\begin{subfigure}[b]{.3\columnwidth}
		\centering
		\begin{tikzpicture}[baseline=1ex]
		\node at (0,1.5)  { \small $\bigvee_{\!yz}S(y,z) [\bigvee_{\!x} U(x), R(x,y), T(x,z) ]$};
		\end{tikzpicture}
		\caption{Factorization 1}
		\label{}
	\end{subfigure}
	\begin{subfigure}[b]{.25\columnwidth}
		\centering
		\begin{tikzpicture}[]
		\node at (0,0) { \small $\bigvee_{\!x} U(x) [\bigvee_{\!y} R(x,y) [\bigvee_{\!z} S(y,z), T(x,z) ]]$};
		\end{tikzpicture}
		\caption{Factorization 2}
		\label{}
	\end{subfigure}
	\begin{subfigure}[b]{.4\columnwidth}
		\centering
		\begin{tikzpicture}[baseline=1ex]
		\node at (-3,1.5) {\small $\bigvee_{\!x} U(x) [\bigvee_{\!z} T(x,z) [\bigvee_{\!y} R(x,y), S(y,z) ]]$};
		\end{tikzpicture}
		\caption{Factorization 3}
		\label{}
	\end{subfigure}
	
	\caption{\Cref{ex:triangleunaryendtoend}: 
		For the triangle unary query $\qtriangleunary \datarule U(x), $ $R(x,y), S(y,z), T(z,x)$, 
		we show the incidence graph matrix (a), 
		the three minimal dissociations (b)-(d),
		corresponding three minimal query plans (e)-(g),
		three minimal $\veo$s (h)-(j),
		and factorizations (k)-(m).
	}
\label{Fig:TriangleUnaryEndToEnd}
\end{figure}

\section{Details  for \cref{sec:ilp}: ILP Formulation for $\minfact$}
\label{sec:ilp-examples}

Here we illustrate $\minfactilp$ with examples, including two end-to-end worked-out examples that illustrate the construction of the ILP, and more insights on the value of structural knowledge.

\introparagraph{Size of ILP}
For query $Q$ with 
$m$ relations and
$|\mveo(Q)| = k$, and a set of $n$ witnesses under $Q$: the resulting ILP (without any optimizations) has
$n(1+km)$ constraints.
Thus the size of the ILP is linear in the number of witnesses.

\begin{example}[ILP Decision Variables]
	\label{ex:ILPdecisionvar}
	Consider again the 2-star query 
	$\qtwostar$ 
	with $\mveo(Q)=\{x \!\leftarrow\! y, y \!\leftarrow\! x\}$.
	For a witness $\w = (x_1, y_2)$ we thus have two QPVs:
	$q(x_1 \!\leftarrow\! y_2)$ and
	$q(y_2 \!\leftarrow\! x_1)$.
	The set of prefixes for $v_1 = x \!\leftarrow\! y$ is 
	$\{x, x \!\leftarrow\! y\}$.
	In that set, $x$ is the table prefix for $R(x)$ and $x \!\leftarrow\! y$ the table prefix for both $S(x,y)$ and $T(y)$.
	Thus, the weights for the prefixes are
	$c(x) = 1$ for one table $R$, and
	$c(x \!\leftarrow\! y) = 2$ for two tables $S$ and $T$.
	For that witness $\w$ and $v_1$ we thus get the two prefix variables
	$p(x_1)$ and
	$p(x_1\!\leftarrow\! y_1)$
	with weights for the prefixes
	$c(x) = 1$ and
	$c(x\!\leftarrow\! y) = 2$.
\end{example}

\begin{example}[ILP Objective Function]
	Consider again $\qtwostar$ 
	and two witnesses $W=\{(x_1,y_1),$ $(x_1,y_2)\}$.
	If both witnesses are factorized according to the $\veo$ 
	$x \!\leftarrow\! y$, then the $\veo$ instances used will be encoded by $q(x_1 \!\leftarrow\! y_1)=1$ and $q(x_1 \!\leftarrow\! y_2)=1$. 
	There will be a single table prefix instance of table $R$ under this VEO
	that is assigned true: $v^R_{\w_1} = v^R_{\w_2} = x_1$, 
	and thus $p(x_1)= 1$.
	And there will be two unique table prefix instances for tables $S$ and $T$ each: 
	$v^S_{\w_1} = v^T_{\w_1} = x_1 \!\leftarrow\! y_1$ and $v^S_{\w_2} = v^T_{\w_2} x_1 \!\leftarrow\! y_2$,
	and thus $p(x_1 \!\leftarrow\! y_1)= 1$ and $p(x_1 \!\leftarrow\! y_2)= 1$ each with weight 2.
	Thus, the objective is equal to $5$. 
	This fits with the actual length of the factorization $\texttt{len} = r_1(s_{11} t_1+s_{12} t_2)$.
	\end{example}

\subsection{ILP for 3-Star Query $\qthreestar$}
\label{sec:3star-ilp-example}

This example highlights some optimizations to the ILP.

\begin{example}[ILP formulation for 3-Star query]
\label{ex:3star}
Consider the 3-Star query 
$\qthreestar \datarule R(x), S(y),$ $ T(z), W(x,y,z)$.
This query has $6$ minimal query plans corresponding to the 6 $\veo$s 
shown in \cref{3Star}.
We use these $\veo$s to build the set $\QPV$. 

Note that since all query plans are linear (they all form paths), 
they can alternatively be specified by the first two variables in the $\veo$ (the last one is implied).
We use this shorter notation as it also simplifies constraints later.
Hence, the query plan constraint 
for an example witness $\w_1 = (x_1,y_1,z_1)$ would be:
\begin{align*}
q[x_1 \!\leftarrow\! y_1] +
q[x_1 \!\leftarrow\! z_1] +
q[y_1 \!\leftarrow\! x_1]	& \\ 
\phantom{x} + q[y_1 \!\leftarrow\! z_1] +
q[z_1 \!\leftarrow\! x_1] +
q[z_1 \!\leftarrow\! y_1]  	&\geq 1  \phantom{x}
\end{align*}

Then, we calculate the table-prefixes needed to build set $\PVF$, illustrated here for all $6$ $\veo$s: 

\begin{center}
\renewcommand{\arraystretch}{1.2}
\setlength{\tabcolsep}{1mm}
\begin{tabular}[t]{ >{$}l<{$} | >{$}l<{$}  | >{$}l<{$}  | >{$}l<{$}}
\multicolumn{1}{c |}{VEO $v_1$}				& \multicolumn{1}{c |}{VEO $v_2$}			& \multicolumn{1}{c|}{VEO $v_3$}\\
\hline
\;v_1^R = x									& \;v_2^R	= x                            & \;v_3^R	= x\!\leftarrow\!y \\
\;v_1^S = x \!\leftarrow\!y				& \;v_2^S = x\!\leftarrow\!z\!\leftarrow\!y & \;v_3^S = y	\\
\;v_1^T = x\!\leftarrow\!y\!\leftarrow\!z	& \;v_2^T = x\!\leftarrow\!z				& \;v_3^T = y\!\leftarrow\!x\!\leftarrow\!z\\
v_1^W = x\!\leftarrow\!y\!\leftarrow\!z		& v_2^W = x\!\leftarrow\!z\!\leftarrow\!y	& v_3^W = y\!\leftarrow\!x\!\leftarrow\!z\\	
\end{tabular}	
\end{center}

\begin{center}
\renewcommand{\arraystretch}{1.2}
\setlength{\tabcolsep}{1mm}
\begin{tabular}[t]{ >{$}l<{$} | >{$}l<{$}  | >{$}l<{$}  | >{$}l<{$}}
\multicolumn{1}{c |}{VEO $v_4$}				& \multicolumn{1}{c |}{VEO $v_5$}			& \multicolumn{1}{c|}{VEO $v_6$}\\
\hline
\;v_4^R = y\!\leftarrow\!z\!\leftarrow\!x									& \;v_5^R	= z\!\leftarrow\!x                            & \;v_6^R	= z\!\leftarrow\!y\!\leftarrow\!x\\
\;v_4^S = y					& \;v_5^S = z\!\leftarrow\!x\!\leftarrow\!y & \;v_6^S = z\!\leftarrow\!y	\\
\;v_4^T = y\!\leftarrow\!z	& \;v_5^T = z				& \;v_6^T = z\\
v_4^W = y\!\leftarrow\!z\!\leftarrow\!x		& v_5^W = z\!\leftarrow\!x\!\leftarrow\!y	& v_6^W = z\!\leftarrow\!y\!\leftarrow\!x\\	
\end{tabular}	
\end{center}

Notice that we need to count the paths of length 3 twice, 
once for the table $W$, and once for the last of the unary tables in the respective VEO.
In total, we have $15 = 3 + 6 + 6$ unique prefixes 
($3$ of length 1, $6$ of length 2, $6$ of length 3).

Let $T_i$ be the set of all table prefix instances of length $i$.
The objective can be written as 
$$
\texttt{len} = 
\min \sum_{\pi \in T_1} p(\pi) + 
\sum_{\pi \in T_2} p(\pi) + 
2 \sum_{\pi \in T_3} p(\pi)
$$

We can simplify this objective by observing 
that the prefixes of length 3 are unique to each witness
(this is so as those prefixes contain all variables, and every witness is unique under set semantics).
Thus, the sum of all total orders of length $3$ from the set $\{x,y,z\}$ over all $n$ witnesses is $n$ (one for each witness). 
We can thus replace all projection variables with table-prefix of length $3$ with a constant 
and simplify the objective to 
$$
\texttt{len} = 
\min \sum_{\pi \in T_1} p(\pi) + 
\sum_{\pi \in T_2} p(\pi) + 
2 n
$$

The projection constraints for the witness $w_1=(x_1,y_1,z_1)$ are:
\begin{align*}
    p[x_1] \geq q[x_1 \!\leftarrow\! y_1] + q[x_1 \!\leftarrow\! z_1] \\
    p[y_1] \geq q[y_1 \!\leftarrow\! x_1] + q[y_1 \!\leftarrow\! z_1] \\
    p[z_1] \geq q[z_1 \!\leftarrow\! x_1] + q[z_1 \!\leftarrow\! y_1] 
\end{align*}

In this query, for every witness, there must be $9$ corresponding variables in the objective, $1$ Query Plan constraint, and $3$ Projection constraints.

\end{example}

\begin{figure}
\centering
\begin{subfigure}[b]{.25\columnwidth}
	\centering
	\begin{tikzpicture}[grow=right,<-, level distance=7mm, baseline=-0.5ex] \node(x){$x$} child {node(y){$y$} child {node(z){$z$}}};
	\draw [-,opacity=0.15, line width =10pt, color=dg, line cap=round] (x.center) -- (x.center) node [above, opacity=1] {R} ;
	\draw [-,opacity=0.15, line width =10pt, color=brown, line cap=round] (y.center) -- (y.center) node [above, opacity=1] {S};
	\draw [-,opacity=0.15, line width =10pt, color=blue, line cap=round] (z.center) -- (z.center) node [above, opacity=1] {T};
	\draw [-,opacity=0.15, line width =10pt, color=red, line cap=round] 
		(x.center) -- (y.center) -- (z.center) node [below, opacity=1] {W};
	
	\draw [-,opacity=0, line width =30pt, color=orange, line cap=round] (x.center) -- (x.center) node [below, opacity=1] {$1$};
	\draw [-,opacity=0, line width =30pt, color=orange, line cap=round] (y.center) -- (y.center) node [below, opacity=1] {$1$};
	\draw [-,opacity=0, line width =30pt, color=orange, line cap=round] (z.center) -- (z.center) node [below, opacity=1] {$2$};
	\end{tikzpicture}
	\caption{$v_1 = x \!\leftarrow\! y \!\leftarrow\! z$} \label{3Star:VEO1}
\end{subfigure}
\hspace{3mm}
\begin{subfigure}[b]{.25\columnwidth}
	\centering
	\begin{tikzpicture}[grow=right,<-, level distance=7mm, baseline=-0.5ex] \node(x){$x$} child {node(z){$z$} child {node(y){$y$}}};
	\draw [-,opacity=0.15, line width =10pt, color=dg, line cap=round] (x.center) -- (x.center) node [above, opacity=1] {R} ;
	\draw [-,opacity=0.15, line width =10pt, color=brown, line cap=round] (y.center) -- (y.center) node [above, opacity=1] {S};
	\draw [-,opacity=0.15, line width =10pt, color=blue, line cap=round] (z.center) -- (z.center) node [above, opacity=1] {T};
	\draw [-,opacity=0.15, line width =10pt, color=red, line cap=round] 
		(x.center) -- (z.center) -- (y.center) node [below, opacity=1] {W};
	
	\draw [-,opacity=0, line width =30pt, color=orange, line cap=round] (x.center) -- (x.center) node [below, opacity=1] {$1$};
	\draw [-,opacity=0, line width =30pt, color=orange, line cap=round] (z.center) -- (z.center) node [below, opacity=1] {$1$};
	\draw [-,opacity=0, line width =30pt, color=orange, line cap=round] (y.center) -- (y.center) node [below, opacity=1] {$2$};
	\end{tikzpicture}
	\caption{$v_2 = x \!\leftarrow\! z \!\leftarrow\! y$} \label{3Star:VEO2}
\end{subfigure}
\hspace{3mm}
\begin{subfigure}[b]{.25\columnwidth}
	\centering
	\begin{tikzpicture}[grow=right,<-, level distance=7mm, baseline=-0.5ex] \node(y){$y$} child {node(x){$x$} child {node(z){$z$}}};
	\draw [-,opacity=0.15, line width =10pt, color=dg, line cap=round] (x.center) -- (x.center) node [above, opacity=1] {R} ;
	\draw [-,opacity=0.15, line width =10pt, color=brown, line cap=round] (y.center) -- (y.center) node [above, opacity=1] {S};
	\draw [-,opacity=0.15, line width =10pt, color=blue, line cap=round] (z.center) -- (z.center) node [above, opacity=1] {T};
	\draw [-,opacity=0.15, line width =10pt, color=red, line cap=round] 
		(y.center) -- (x.center) -- (z.center) node [below, opacity=1] {W};
	
	\draw [-,opacity=0, line width =30pt, color=orange, line cap=round] (y.center) -- (y.center) node [below, opacity=1] {$1$};
	\draw [-,opacity=0, line width =30pt, color=orange, line cap=round] (x.center) -- (x.center) node [below, opacity=1] {$1$};
	\draw [-,opacity=0, line width =30pt, color=orange, line cap=round] (z.center) -- (z.center) node [below, opacity=1] {$2$};
	\end{tikzpicture}
	\caption{$v_3 = y \!\leftarrow\! x \!\leftarrow\! z$} \label{3Star:VEO3}
\end{subfigure}

\vspace{5mm}
 
\begin{subfigure}[b]{.25\columnwidth}
	\centering
	\begin{tikzpicture}[grow=right,<-, level distance=7mm, baseline=-0.5ex] \node(y){$y$} child {node(z){$z$} child {node(x){$x$}}};
	\draw [-,opacity=0.15, line width =10pt, color=dg, line cap=round] (x.center) -- (x.center) node [above, opacity=1] {R} ;
	\draw [-,opacity=0.15, line width =10pt, color=brown, line cap=round] (y.center) -- (y.center) node [above, opacity=1] {S};
	\draw [-,opacity=0.15, line width =10pt, color=blue, line cap=round] (z.center) -- (z.center) node [above, opacity=1] {T};
	\draw [-,opacity=0.15, line width =10pt, color=red, line cap=round] 
		(y.center) -- (z.center) -- (x.center) node [below, opacity=1] {W};
	
	\draw [-,opacity=0, line width =30pt, color=orange, line cap=round] (y.center) -- (y.center) node [below, opacity=1] {$1$};
	\draw [-,opacity=0, line width =30pt, color=orange, line cap=round] (z.center) -- (z.center) node [below, opacity=1] {$1$};
	\draw [-,opacity=0, line width =30pt, color=orange, line cap=round] (x.center) -- (x.center) node [below, opacity=1] {$2$};
	\end{tikzpicture}
	\caption{$v_4 = y \!\leftarrow\! z \!\leftarrow\! x$} \label{3Star:VEO4}
\end{subfigure}
\hspace{3mm}
\begin{subfigure}[b]{.25\columnwidth}
	\centering
	\begin{tikzpicture}[grow=right,<-, level distance=7mm, baseline=-0.5ex] \node(z){$z$} child {node(x){$x$} child {node(y){$y$}}};
	\draw [-,opacity=0.15, line width =10pt, color=dg, line cap=round] (x.center) -- (x.center) node [above, opacity=1] {R} ;
	\draw [-,opacity=0.15, line width =10pt, color=brown, line cap=round] (y.center) -- (y.center) node [above, opacity=1] {S};
	\draw [-,opacity=0.15, line width =10pt, color=blue, line cap=round] (z.center) -- (z.center) node [above, opacity=1] {T};
	\draw [-,opacity=0.15, line width =10pt, color=red, line cap=round] 
		(z.center) -- (x.center) -- (y.center) node [below, opacity=1] {W};
	\draw [-,opacity=0, line width =30pt, color=orange, line cap=round] (x.center) -- (x.center) node [below, opacity=1] {$1$};
	\draw [-,opacity=0, line width =30pt, color=orange, line cap=round] (y.center) -- (y.center) node [below, opacity=1] {$2$};
	\draw [-,opacity=0, line width =30pt, color=orange, line cap=round] (z.center) -- (z.center) node [below, opacity=1] {$1$};
	\end{tikzpicture}
	\caption{$v_5 = z \!\leftarrow\! x \!\leftarrow\! y$} \label{3Star:VEO5}
\end{subfigure}
\hspace{3mm}
\begin{subfigure}[b]{.25 \columnwidth}
	\centering
	\begin{tikzpicture}[grow=right,<-, level distance=7mm, baseline=-0.5ex] \node(z){$z$} child {node(y){$y$} child {node(x){$x$}}}; 
	\draw [-,opacity=0.15, line width =10pt, color=dg, line cap=round] (x.center) -- (x.center) node [above, opacity=1] {R} ;
	\draw [-,opacity=0.15, line width =10pt, color=brown, line cap=round] (y.center) -- (y.center) node [above, opacity=1] {S};
	\draw [-,opacity=0.15, line width =10pt, color=blue, line cap=round] (z.center) -- (z.center) node [above, opacity=1] {T};
	\draw [-,opacity=0.15, line width =10pt, color=red, line cap=round] 
		(z.center) -- (y.center) -- (x.center) node [below, opacity=1] {W};
	\draw [-,opacity=0, line width =30pt, color=orange, line cap=round] (x.center) -- (x.center) node [below, opacity=1] {$2$};
	\draw [-,opacity=0, line width =30pt, color=orange, line cap=round] (y.center) -- (y.center) node [below, opacity=1] {$1$};
	\draw [-,opacity=0, line width =30pt, color=orange, line cap=round] (z.center) -- (z.center) node [below, opacity=1] {$1$};
	\end{tikzpicture}
	\caption{$v_6 = z \!\leftarrow\! y \!\leftarrow\! x$} \label{3Star:VEO6}
\end{subfigure}
\caption{\Cref{ex:3star}: $\mveo$ for 3-star query $\qthreestar$.}
\label{3Star}
\end{figure}

\subsection{Correctness of $\minfactilp$}

\thmilpcorrectness*

\begin{proof}[Proof \cref{thm:ilpcorrectness}]
	We can show that $\minfactilp$ always computes the minimal length factorization by proving that:
	(1) A valid assignment of the ILP variables always results in a valid factorization of length equal to the objective value.
	(2) for any valid factorization, there is a valid variable assignment of the ILP.

	From earlier discussion, we know that a factorization can be identified by an assignment of a query plan to each witness.
	This resolves Part (2), since the ILP allows any possible assignment of query plans to witnesses. 
	Part (1) can then be proved by the construction in \cref{sec:ilp}, since the prefix constraints enforce that the length of the factorization takes into account all the prefixes used by a witness and hence is correct. 
\end{proof}

\subsection{Value of structural knowledge}

A key underlying quest in complexity theory and knowledge compilation is determining the 
\emph{value of 
knowledge of the underlying structure} by which a problem decomposes.
Consider an arbitrary $m$-partite positive DNF and the problem of finding its minimal factorization.
What is the value of knowing the query that produced that provenance polynomial?

If we don't have additional information, then each clause (or witness) could be factored in any of the possible factorization orders.
The number of those is equal to the number of labeled rooted trees with $m$ vertices,
which is $m^{m-1}$~\cite{oeis}: \href{https://oeis.org/A000169}{A000169}:
1, 2, 9, 64, 625, 7776, 117649, $\ldots$ 
However, if we know that the DNF is the provenance of a query $Q$ with $k$ variables over a database $D$,
then we only need to consider the 
query-specific number $|\mveo(Q)|$ 
of minimal $\veo$s.

For example, for the $k$-star query $Q^*_k$, 
which has the highest number of minimal $\veo$s (given fixed $k$),
the number of minimal $\veo$s is $k!$. 
Contrast this with knowing the query and just having a $k+1$-partite positive DNF (its provenance has $k+1$ partitions).
The number partitions alone would imply $(k+1)^k$ possible factorizations for each witness.
To illustrate the difference, we contrast the numbers for increasing $k=2, 3, \ldots$:
$\{(k: k! | (k+1)^k) \} = 
\{(2: 2 | 9), (3: 6 | 64), (4: 24 | 625), (5: 120 | 7776), (6: 720 | 117649), \ldots \}$.

\section{Details for \cref{SEC:APPROX}: $\PTIME$ Algorithms}

\subsection{Factorization Flow Graphs with Nested Running-Prefix Orderings}
\label{sec:nestingRP}

RP-orderings (Running-Prefix orderings) are possible for any query with a slight modification to the construction.
We call the modified flow graphs ``nested'' and their orders ``\emph{nested RP-orderings}.''
We show an example before we formalize.

\begin{figure}
	\centering
	\begin{subfigure}[b]{.24\columnwidth}
		\centering
		\begin{tikzpicture}[grow=right,<-, level distance=8mm,sibling distance=10mm, baseline=-0.5ex] \node(x){$x$}
		    child {node(u){$u$}
		      child {node(y){$y$}
    		      child {node(z){$z$}
    		      }
		      }
		      child {node(w){$w$}
		        child {node(v){$v$}
    		    }
		      }
		    };
		    \draw [-,opacity=0.15, line width =10pt, color=blue, line cap=round, label=above:\(z\)] 
				(x.center) -- (y.center) node [left, opacity=1] {};
			\draw [-,opacity=0.15, line width =10pt, color=black, line cap=round, label=above:\(z\)] 
				(x.center) -- (w.center) node [left, opacity=1] {};
		    \draw [-,opacity=0.15, line width =10pt, color=green, line cap=round] 
				(u.center) -- (z.center) node [left, opacity=1] {} ;
		    \draw [-,opacity=0.15, line width =10pt, color=red, line cap=round] 
				(y.center) -- (z.center) node [right, opacity=1] {};
			\draw [-,opacity=0.15, line width =10pt, color=teal, line cap=round, label=above:\(z\)] 
				(w.center) -- (v.center) node [left, opacity=1] {};
			\draw [-,opacity=0.20, line width =10pt, color=yellow, line cap=round, label=above:\(z\)] 
				(u.center) -- (v.center) node [left, opacity=1] {};
			\draw [-,opacity=0, line width =10pt, color=orange, line cap=round] (x.center) node [above, opacity=1] {$1$};
			\draw [-,opacity=0, line width =10pt, color=orange, line cap=round] (u.center) node [above, opacity=1] {$1$};
			\draw [-,opacity=0, line width =10pt, color=orange, line cap=round] (y.center) node [below, opacity=1] {$2$};
			\draw [-,opacity=0, line width =10pt, color=orange, line cap=round] (z.center) node [below, opacity=1] {$3$};
			\draw [-,opacity=0, line width =10pt, color=orange, line cap=round] (w.center) node [above, opacity=1] {$2$};
			\draw [-,opacity=0, line width =10pt, color=orange, line cap=round] (v.center) node [above, opacity=1] {$3$};
		\end{tikzpicture}
		\caption{$v_1$}
		\label{6CycleWE:VEO1}
	\end{subfigure}
	\begin{subfigure}[b]{.24\columnwidth}
		\centering
		\begin{tikzpicture}[grow=right,<-, level distance=8mm,sibling distance=10mm, baseline=-0.5ex] \node(x){$x$}
		    child {node(u){$u$}
		      child {node(y){$y$}
    		      child {node(z){$z$}
    		      }
		      }
		      child {node(v){$v$}
		        child {node(w){$w$}
    		    }
		      }
		    };
		    \draw [-,opacity=0.15, line width =10pt, color=blue, line cap=round, label=above:\(z\)] 
				(x.center) -- (y.center) node [left, opacity=1] {};
			\draw [-,opacity=0.15, line width =10pt, color=black, line cap=round, label=above:\(z\)] 
				(x.center) -- (w.center) node [left, opacity=1] {};
		    \draw [-,opacity=0.15, line width =10pt, color=green, line cap=round] 
				(u.center) -- (z.center) node [left, opacity=1] {} ;
		    \draw [-,opacity=0.15, line width =10pt, color=red, line cap=round] 
				(y.center) -- (z.center) node [right, opacity=1] {};
			\draw [-,opacity=0.15, line width =10pt, color=teal, line cap=round, label=above:\(z\)] 
				(w.center) -- (v.center) node [left, opacity=1] {};
			\draw [-,opacity=0.20, line width =10pt, color=yellow
			, line cap=round, label=above:\(z\)] 
				(u.center) -- (v.center) node [left, opacity=1] {};
			\draw [-,opacity=0, line width =10pt, color=orange, line cap=round] (x.center) node [above, opacity=1] {$1$};
			\draw [-,opacity=0, line width =10pt, color=orange, line cap=round] (u.center) node [above, opacity=1] {$1$};
			\draw [-,opacity=0, line width =10pt, color=orange, line cap=round] (y.center) node [below, opacity=1] {$2$};
			\draw [-,opacity=0, line width =10pt, color=orange, line cap=round] (z.center) node [below, opacity=1] {$3$};
			\draw [-,opacity=0, line width =10pt, color=orange, line cap=round] (v.center) node [above, opacity=1] {$2$};
			\draw [-,opacity=0, line width =10pt, color=orange, line cap=round] (w.center) node [above, opacity=1] {$3$};
		\end{tikzpicture}
		\caption{$v_2$}
		\label{6CycleWE:VEO2}
	\end{subfigure}
	\begin{subfigure}[b]{.24\columnwidth}
		\centering
		\begin{tikzpicture}[grow=right,<-, level distance=8mm, sibling distance=10mm, baseline=-0.5ex] \node(x){$x$}
		    child {node(u){$u$}
		      child {node(z){$z$}
    		      child {node(y){$y$}
    		      }
		      }
		      child {node(v){$v$}
		        child {node(w){$w$}
    		    }
		      }
		    };
		    \draw [-,opacity=0.15, line width =10pt, color=blue, line cap=round, label=above:\(z\)] 
				(x.center) -- (y.center) node [left, opacity=1] {};
			\draw [-,opacity=0.15, line width =10pt, color=black, line cap=round, label=above:\(z\)] 
				(x.center) -- (w.center) node [left, opacity=1] {};
		    \draw [-,opacity=0.15, line width =10pt, color=green, line cap=round] 
				(u.center) -- (z.center) node [left, opacity=1] {} ;
		    \draw [-,opacity=0.15, line width =10pt, color=red, line cap=round] 
				(y.center) -- (z.center) node [right, opacity=1] {};
			\draw [-,opacity=0.15, line width =10pt, color=teal, line cap=round, label=above:\(z\)] 
				(w.center) -- (v.center) node [left, opacity=1] {};
			\draw [-,opacity=0.20, line width =10pt, color=yellow
			, line cap=round, label=above:\(z\)] 
				(u.center) -- (v.center) node [left, opacity=1] {};
			\draw [-,opacity=0, line width =10pt, color=orange, line cap=round] (x.center) node [above, opacity=1] {$1$};
			\draw [-,opacity=0, line width =10pt, color=orange, line cap=round] (u.center) node [above, opacity=1] {$1$};
			\draw [-,opacity=0, line width =10pt, color=orange, line cap=round] (z.center) node [below, opacity=1] {$2$};
			\draw [-,opacity=0, line width =10pt, color=orange, line cap=round] (y.center) node [below, opacity=1] {$3$};
			\draw [-,opacity=0, line width =10pt, color=orange, line cap=round] (v.center) node [above, opacity=1] {$2$};
			\draw [-,opacity=0, line width =10pt, color=orange, line cap=round] (w.center) node [above, opacity=1] {$3$};
		\end{tikzpicture}
		\caption{$v_3$}
		\label{6CycleWE:VEO3}
	\end{subfigure}
	\begin{subfigure}[b]{.24\columnwidth}
		\centering
		\begin{tikzpicture}[grow=right,<-, level distance=8mm, sibling distance=10mm, baseline=-0.5ex] \node(x){$x$}
		    child {node(u){$u$}
		      child {node(z){$z$}
    		      child {node(y){$y$}
    		      }
		      }
		      child {node(w){$w$}
		        child {node(v){$v$}
    		    }
		      }
		    };
		    \draw [-,opacity=0.15, line width =10pt, color=blue, line cap=round, label=above:\(z\)] 
				(x.center) -- (y.center) node [left, opacity=1] {};
			\draw [-,opacity=0.15, line width =10pt, color=black, line cap=round, label=above:\(z\)] 
				(x.center) -- (w.center) node [left, opacity=1] {};
		    \draw [-,opacity=0.15, line width =10pt, color=green, line cap=round] 
				(u.center) -- (z.center) node [left, opacity=1] {} ;
		    \draw [-,opacity=0.15, line width =10pt, color=red, line cap=round] 
				(y.center) -- (z.center) node [right, opacity=1] {};
			\draw [-,opacity=0.15, line width =10pt, color=teal, line cap=round, label=above:\(z\)] 
				(w.center) -- (v.center) node [left, opacity=1] {};
			\draw [-,opacity=0.20, line width =10pt, color=yellow
			, line cap=round, label=above:\(z\)] 
				(u.center) -- (v.center) node [left, opacity=1] {};
			\draw [-,opacity=0, line width =10pt, color=orange, line cap=round] (x.center) node [above, opacity=1] {$1$};
			\draw [-,opacity=0, line width =10pt, color=orange, line cap=round] (u.center) node [above, opacity=1] {$1$};
			\draw [-,opacity=0, line width =10pt, color=orange, line cap=round] (z.center) node [below, opacity=1] {$2$};
			\draw [-,opacity=0, line width =10pt, color=orange, line cap=round] (y.center) node [below, opacity=1] {$3$};
			\draw [-,opacity=0, line width =10pt, color=orange, line cap=round] (v.center) node [above, opacity=1] {$2$};
			\draw [-,opacity=0, line width =10pt, color=orange, line cap=round] (w.center) node [above, opacity=1] {$3$};
		\end{tikzpicture}
		\caption{$v_4$}
		\label{6CycleWE:VEO4}
	\end{subfigure}

	\vspace{1mm}

    \begin{subfigure}[b]{\linewidth}
        \begin{tikzpicture}
        \path[blue, thick, <->] (0,0) edge[] node[ fill=white, anchor=center, pos=0.5] {y} (6.8,0);
        \path[blue, thick, <->] (7.2,0) edge[] node[ fill=white, anchor=center, pos=0.5] {z} (14,0);
        
        \path[blue, thick, <->] (0,-0.5) edge[] node[ fill=white, anchor=center, pos=0.5] {w} (3.9,-0.5);
        \path[blue, thick, <->] (4.1,-0.5) edge[] node[ fill=white, anchor=center, pos=0.5] {v} (10.9,-0.5);
        \path[blue, thick, <->] (11.1,-0.5) edge[] node[ fill=white, anchor=center, pos=0.5] {w} (14,-0.5);
        \end{tikzpicture}
	\end{subfigure}
	\caption{\Cref{ex:6-cycleWE}: 4 $\mveo$ for $\qsixcyclewe$ that prevent an RP-ordering}	
	\label{fig:6CycleWE-mveo}
\end{figure}

\begin{figure}
    \centering
    \includegraphics[scale=0.4]{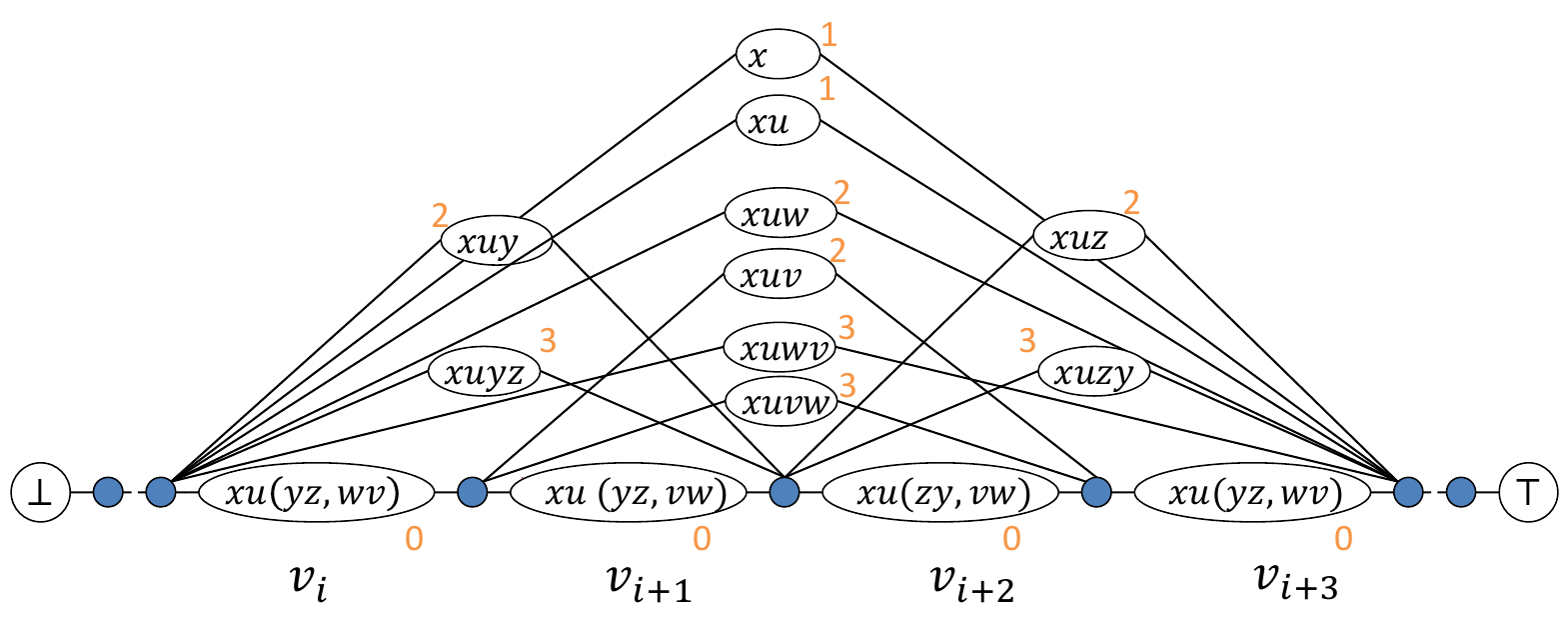}
	\caption{
	\Cref{ex:6-cycleWE} incorrect: Part of factorization flow graph showing $4$ min $\veo$s in Non-RP ordering. 
	Notice that prefixes $xuw$ and $xuwv$ span query variables for which they are not prefixes.}
	\label{fig:6Cycle-NonRP-Flowgraph}
\end{figure}

\begin{figure}
    \centering
    \includegraphics[scale=0.4]{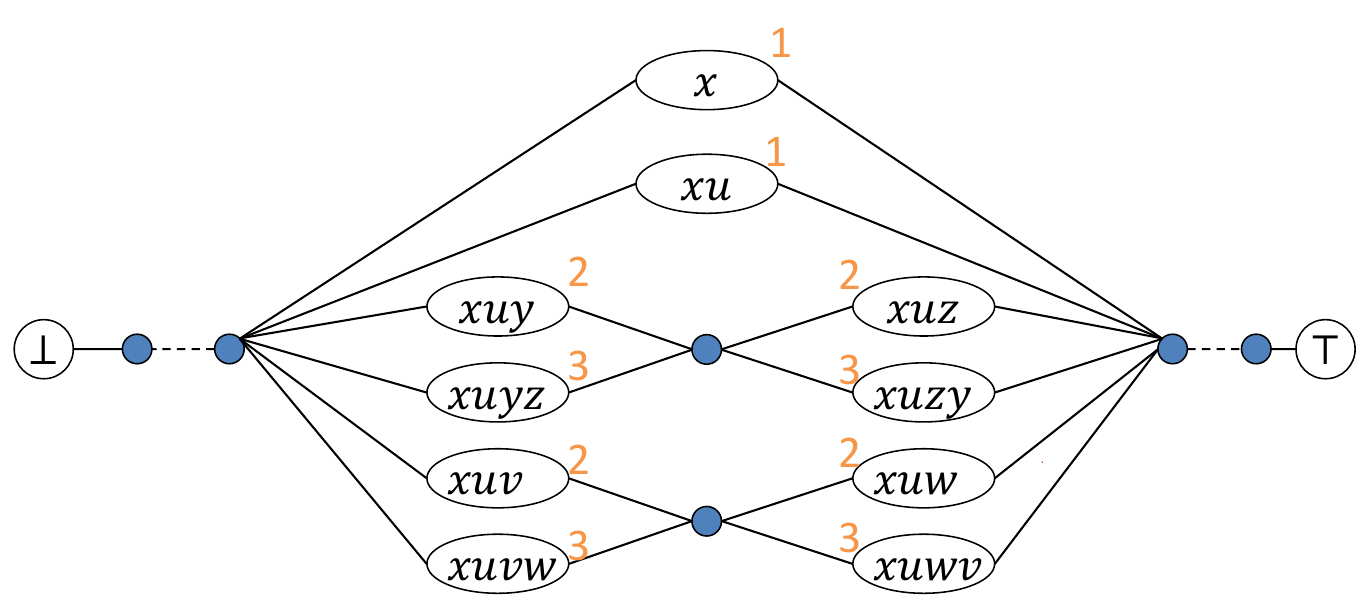}
	\caption{\Cref{ex:6-cycleWE} corrected: Nested RP-ordering that treats each subtree $\{y, z\}$ and $\{v, w\}$ under $\{x, u\}$ as independent subproblem.
}
	\label{fig:6Cycle-Nested-Flowgraph}
\end{figure}

\begin{example}[6-cycle]
	\label{ex:6-cycleWE}
Consider the 6-cycle query with endpoints
shown in \cref{fig:6cyclewe}
and the subset $\{v_1, v_2, v_3, v_4\}$ of $\mveo$ of this query shown in \cref{fig:6CycleWE-mveo}.
Notice that there is no ordering for these 4 $\veo$s to allow the running prefix property. 
For example, if the prefixes of $y$,$z$, and $v$ satisfy the property, 
then we cannot have the prefix of $w$ adjacent as well in a linear ordering. 
\Cref{fig:6Cycle-NonRP-Flowgraph} shows a partial factorization flow graph focused on these four $\veo$s in an arbitrary ordering. 
Notice that the nodes such as $x \!\leftarrow\! u \!\leftarrow\! w $ may be selected even if their corresponding plans are not used.

Our modification to generate an RP-Ordering relies on the observation that the two children of node $a$ in the $\veo$ may be \emph{independently} chosen. 
The left sub-child may be $y\!\leftarrow\!z$ or $z\!\leftarrow\!y$ and the right sub-child can independently be  $v\!\leftarrow\!w$ or $w\!\leftarrow\!v$. 
We encode these independent decisions as parallel paths
and write the nested ordering as  $\Omega = [x \!\leftarrow\! u \!\leftarrow\! ((z\!\leftarrow\!y, y\!\leftarrow\!z) \times (v\!\leftarrow\!w,w\!\leftarrow\!v))]$. 
In the new flow graph \cref{fig:6Cycle-Nested-Flowgraph}, each independent decision is a parallel path in the flow graph. 
Thus, the choice of $xuyz$ vs.\ $xuzy$ is independent of choosing $xuvw$ vs.\ $xuwv$. 
More formally, $\{v_1, v_2, v_3, v_4\}$ are all nestable with each other. 
Thus, we can build a nested ordering where any pair of these are uncomparable.
\Cref{fig:6Cycle-Nested-Flowgraph} shows the factorization flow graph 
for 
$\Omega$.
\end{example}

\subsubsection{Proof \cref{lem:RunningPrefix}}

We prove that for any query $Q$, there is a linear order $\Omega$ on its minimal $\veo$s $\mveo(Q)$ such that it satisfies the RP property:

\thmrpproperty*

\begin{figure}
\begin{subfigure}[b]{.45\linewidth}
	\centering
	\begin{tikzpicture}[scale=0.9,
				every circle node/.style={fill=white, minimum size=7mm, inner sep=0, draw}]
	\node (1) at (-1,0) [] {$x$};	
	\node (2) at (-0.5,0.866) [] {$y$};
	\node (3) at (0.5,0.866) [] {$z$};
	\node (4) at (1,0) [] {$u$};
	\node (5) at (0.5,-0.866) [] {$v$};
	\node (6) at (-0.5,-0.866) [] {$w$};

	\path[->, line width=1pt, auto]
		(6) edge[color=black] (1)
		(1) edge[color=black] (2)
		(2) edge[color=black] (3)	
		(3) edge[color=black] (4)	
		(4) edge[color=black] (5)	
		(5) edge[color=black] (6)
		
		(1) edge [loop left] (1)
	    (2) edge [loop above] (2)
		(3) edge [loop above] (3)
		(4) edge [loop right] (4)
		(5) edge [loop below] (5)
		(6) edge [loop below] (6)
		;
	\end{tikzpicture}
\end{subfigure}
\caption{\Cref{ex:6-cycleWE}: 6-cycle query with endpoints 
$\qsixcyclewe \datarule A(x), R(x,y), B(y), S(y,z), C(z), T(z,u), D(u), U(u,v), E(v),$ $V(v,w), F(w), W(w,x)$.
}
\label{fig:6cyclewe}
\end{figure}

\begin{proof}[Proof \cref{lem:RunningPrefix}]
We prove the theorem by outlining a method of construction of $\Omega$ and showing that this $\Omega$ is guaranteed to satisfy the RP property. 

To construct an RP-Ordering $\Omega$ we first group the $\mveo$ by their root variable $r$. Since the root variables are present in any prefix we know that no two $\veo$s in different groups can share prefixes. Thus arranging the roots in a pre-determined order e.g. a lexicographical order can lead to no violation of the Running Prefixes property. Within a group of $\veo$s with root $r$, we must again decide the ordering. If $r$ disconnects the query incidence matrix, it will have as many children in the $\veo$s as the number of disconnected components. Since $r$ is a min-cut of the query matrix, there will be no node that is shared by its children and the ordering of each component can be decided independently. 
With these new groups of independent prefixes, we again look to order them using the predetermined order, splitting into nested paths when necessary. Hence, we obtain a nested RP-Ordering $\Omega$ for any $Q$.
\end{proof}

\subsubsection{A Flow Graph With Leakage}

The factorization flow graphs are an approximation and can have \emph{leakage} i.e. if no minimal factorization forms a min-cut of the flow graph.

\begin{example}[Leakage in $\qtriangle$ flow graph]
\label{ex:leakage}
Consider a set of 4 witnesses $W = \{r_0s_0t_0, r_1s_1t_0, r_2s_1t_1, $ $r_2s_2t_2\}$
for
$\qtriangle$. 
It has 3 minimal $\veo$s:
$\mveo(\qtriangle) = \{v_1 = xy \!\leftarrow\! z, v_2 = yz \!\leftarrow\! x, v_3 = zx \!\leftarrow\! y\}$.
Notice that any permutation of the $\mveo$ is RP.
We arrange the query plan variables in order 
$\Omega=[v_1, v_2, v_3]$
to construct 
\cref{fig:mincut_triangle_leakage}. 
The min-cut of the graph is $11$.
However, the minimal factorization of $W$ $t_0(r_0s_0 \vee r_1s_1) \vee r_2(s_1t_1 \vee s_2t_2)$ has length $10$ (corresponding to the nodes in blue).
However, this factorization does not cut the graph
because the graph encodes an additional spurious constraint (shown as a red path) that enforces $x_0y_1 + y_1z_0 + z_0x_1 \geq 1$ which need not be true.
\end{example}

\begin{figure}
	\centering
	\includegraphics[scale=0.4]{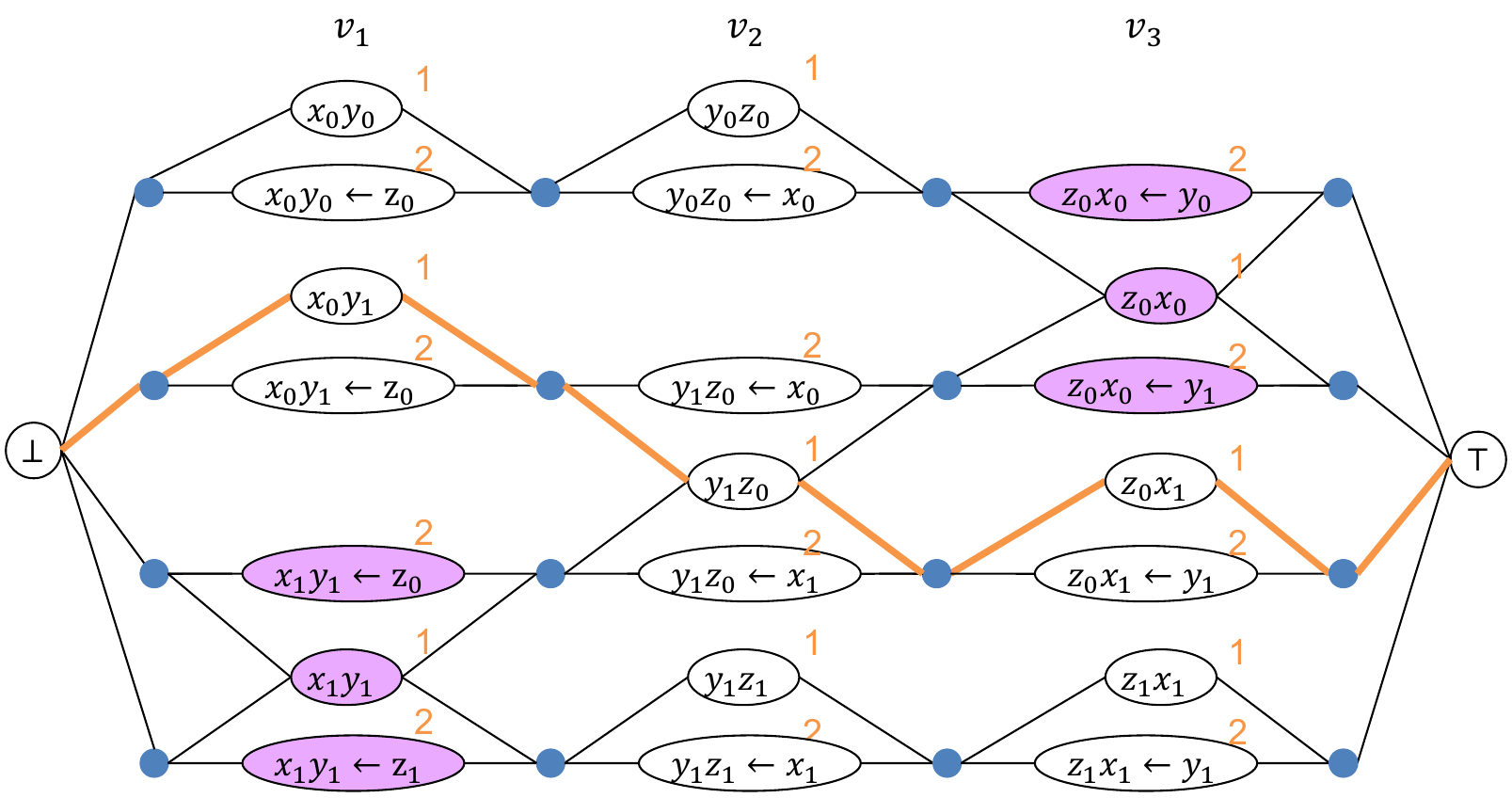}    
	\caption{\Cref{ex:leakage}: A flow graph with leakage for the query $\qtriangle$. 
	}
	\label{fig:mincut_triangle_leakage}
\end{figure}

\subsection{Details for \cref{SEC:LP-ROUNDING}: LP relaxation}

\subsubsection{Details on the Approximation Bound provided by the LP rounding algorithm}

This section details the LP rounding algorithm (\cref{alg:lp-rounding}) and shows a proof of its guaranteed $\mveo$ bound. 
We also show an example comparing it to the natural upper bound of the $DNF$.

\begin{algorithm}

	\SetKwInOut{Input}{Input}
	\Input{List of witnesses $W$ under query $Q$ }
	\KwResult{A factorization for the instance}

	\tcc{Solving a LP:}
	
	Construct the required an ILP for $\minfact(Q)$ over witnesses $W$ \\
	Solve an LP relaxation of the problem by treating all decision variables as continuous- i.e. allowing values [0,1] instead of \{0,1\}. \\
	
	\tcc{LP rounding:}
	Set the value of all prefixes variables to $0$.\\
	\For{$\forall$ $\vec w$ $\in W$}{
		$qp=$ the query plan variable of $\vec w$ with the highest value (break ties arbitrarily) \\
		Set value of $qp = 1$ \\
		Set all other query plan variables of $\vec w$ to $0$ \\
	}
	\For{$\forall$ query plan variables with value $=1$}{
		Set all their prefix variables to $1$.
	}
	
	\caption{LP rounding Algorithm}
	\label{alg:lp-rounding}
	\end{algorithm}

The bound provided by $\cref{alg:lp-rounding}$ is the number of minimal query plans, which may be exponential in query size. 
However, we argue that this simple algorithm provides a useful bound. 
Let us compare it to the trivial upper bound provided by the $DNF$ provenance.
Consider an arbitrary $k$-chain with intermediates query $\datarule R_1(x_1, z_1, x_2), R_2(x_2, z_2, x_3), \hdots R_k(x_k, z_k, x_{k+1})$.
Assume that all variables have the same domain $d$.
A read-once factorization would have $k*d$ literals, while the DNF could have upto $d^k$ literals, with a worst-case approximation of $d^{k-1}/k$.
On the other hand, the number of minimal query plans for a $k$-chain query are given by the Catalan numbers\footnote{OEIS sequence \href{https://oeis.org/A000108}{A000108}} \cite{DBLP:journals/vldb/GatterbauerS17}.
The approximation bound does not depend on domain size and gets better in comparison to the $DNF$ as domain size is increased.
However, even for a small domain size of $10$, all chains up to at least k=$30$ have a better approximation bound with the LP rounding.

\thmlprounding*

\begin{proof}[Proof \cref{thm:lp-rounding}]
	Since for each witness, a valid query plan as well as all associated prefixes are chosen, we see that the approximation algorithm returns a valid factorization.
	
	We see that each query plan is multiplied by at most $|\mveo|$ the value it had in the LP.
	A prefix is chosen only if the associated query plan has a value of at least $1/ |\mveo|$ in the LP. 
	Due to prefix constraints, any chosen prefix too must have had to have a value of at least $1/ |\mveo|$ in the LP.
	Thus all variables are multiplied by at most $|\mveo|$ and we can guarantee the algorithm is an $|\mveo|$-factor approximation.
\end{proof}

\subsubsection{Proof of \cref{prop:lp-easy-mincut}}

\lemlpeasywhenmincuteasy*

\begin{proof}[Proof \cref{prop:lp-easy-mincut}]
We prove that, for such a query, the LP relaxation of the $\minfactilp$ always gives the correct value.
Concretely, we show that the solution of the LP is the same as that of the MFMC-based algorithm in any leakage-free ordering in \cref{proof:tu-flow-optimal}.

First, we notice that the LP relaxation can only provide a lower bound for $\minfact$, while we have the assurance that the MFMC-based algorithm computes the exact $\minfact$ value. 

Next, we show that any solution obtained by the LP must satisfy the MFMC algorithm in a leakage-free ordering. 
We know that the LP solution must satisfy the query plan and prefix constraints, and thus the corresponding paths are cut in the factorization flow graph.
The only remaining paths can only be leakage paths. 
However, since we assumed this to be a leakage-free ordering, 
any additional leakage paths are guaranteed not to increase the value of the min-cut. 
Thus, due to the leakage-free nature, we need only check that the paths corresponding to query plan paths and prefix paths are cut for a valid cut.
In other words, a solution that cuts all query plan and prefix paths must have an equivalent solution that cuts all paths.
Thus there is no path (or ``constraint'') that is not cut (i.e.\ not satisfied) by the LP solution, 
and the LP solution can be represented as a cut of the graph.

Since we have shown that the LP solution is both no greater than and no smaller than the min-cut solution, 
we have also shown that they are identical (in the presence of a leakage-free ordering).
\end{proof}

\section{Proofs for \cref{SEC:READONCE}: Read-once formulas}
\label{sec:appendix:read-once}

\introparagraph{Details on read-once instances}
A formula is read-once if it does not contain any repeated variables. 
Read-once provenance formulas are characterized by the absence of a $P_4$~\cite{CramaHammer2010:BooleanFunctions} co-occurrence pattern,
which is a pattern $(w_1-r-w_2-s-w_3)$ where a witness $w_2$ shares tuple $r$ with $w_1$, and shares tuple $s$ with $w_3$, 
however, $w_1$ and $w_3$ do not share $r$ nor $s$ (\cref{fig:P4}).\footnote{$P_4$ patterns are alternatively defined based on the 4 variables that form a path like ($t-r-s-u)$ in \cref{fig:P4}. The second property of ``normality'' \cite[Sect.~10]{CramaHammer2010:BooleanFunctions} is always fulfilled by partitioned expressions such as the provenance for sj-free queries.} 
We prove that for any read-once formula for any query
the $F$ under any RP-Ordering does not have leakage.

\begin{figure}
    \centering
	\tikzset{>=latex}
    \begin{tikzpicture}[thick]
    \draw (0,0) ellipse (1.5cm and 0.3cm);
    \draw (2,0) ellipse (1.5cm and 0.3cm);
    \draw (4,0) ellipse (1.5cm and 0.3cm);
    \draw (-1,0) node {$t$};
    \draw (1,0) node {$r$};
    \draw (3,0) node {$s$};
    \draw (5,0) node {$u$};
    \draw (0,-0.75) node {$\w_{1}$};
    \draw (2,-0.75) node {$\w_{2}$};
    \draw (4,-0.75) node {$\w_{3}$};
    \end{tikzpicture}
\vspace{-2mm}	
\caption{\Cref{sec:appendix:read-once}: 
$P_4$ co-occurrence pattern $(w_1 \!-\! r \!-\! w_2 \!-\! s \!-\! w_3)$.
}
\label{fig:P4}
\end{figure}

\thmreadonce*

We first prove \cref{prop:readonce:Mincut} and then \cref{prop:readonce:LP}.

\begin{proof}[Proof \cref{prop:readonce:Mincut}]
Consider any RP or nested RP-Ordering $\Omega$.
Notice that the factorization flow graph $F$ represents a database with a $P_4$
if and only if we have three witnesses connected as shown in \cref{fig:flow-graph-P4}.

Let us next assume that we have no $P_4$, yet have leakage in $F$ involving $w_1$ and $w_2$. 
For this to be true, $w_1$ and $w_2$ must share a node $t$ that is not chosen in the minimal factorization. 
It is usually not beneficial to choose a node that belongs exclusively to $w_1$ or $w_2$ over a shared prefix $t$. 

Thus, a shared prefix $t$ is \emph{not} chosen only if there is some other prefix that is shared between witnesses. 
Then there are exactly two possibilities: either this additional shared prefix is 1) shared with one of $w_1$ or $w_2$, or 2) shared with both $w_1$ and $w_2$.
We explore both these cases:
\begin{enumerate}

\item One of the witnesses shares a different prefix with a different witness $w_3$.
This scenario describes a $P_4$, and hence is not possible in a read-once instance.

\begin{figure}
    \centering
    \includegraphics[scale=0.4]{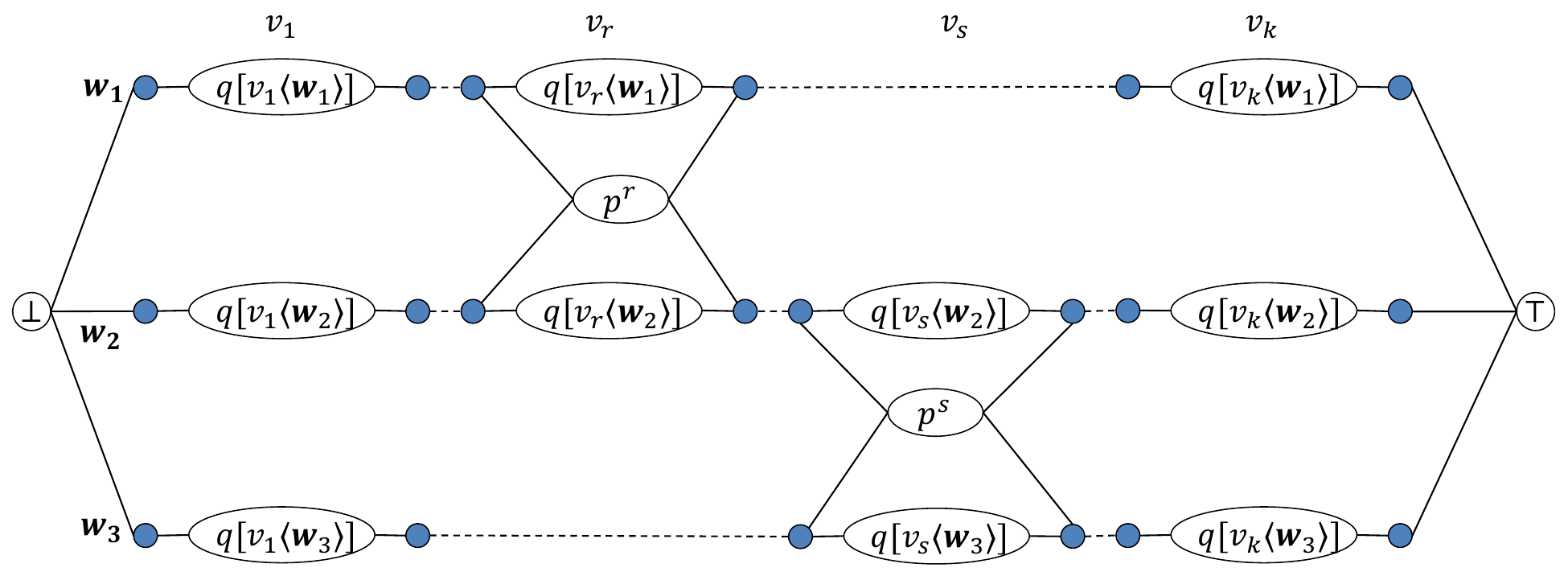}
    \caption{Proof \cref{prop:read-ONCE}: $P_4 (w_1-p_r-w_2-p_s-w_3)$ in factorization flow graph.
	}
    \label{fig:flow-graph-P4}
\end{figure}

\begin{figure}
\centering
\includegraphics[scale=0.4]{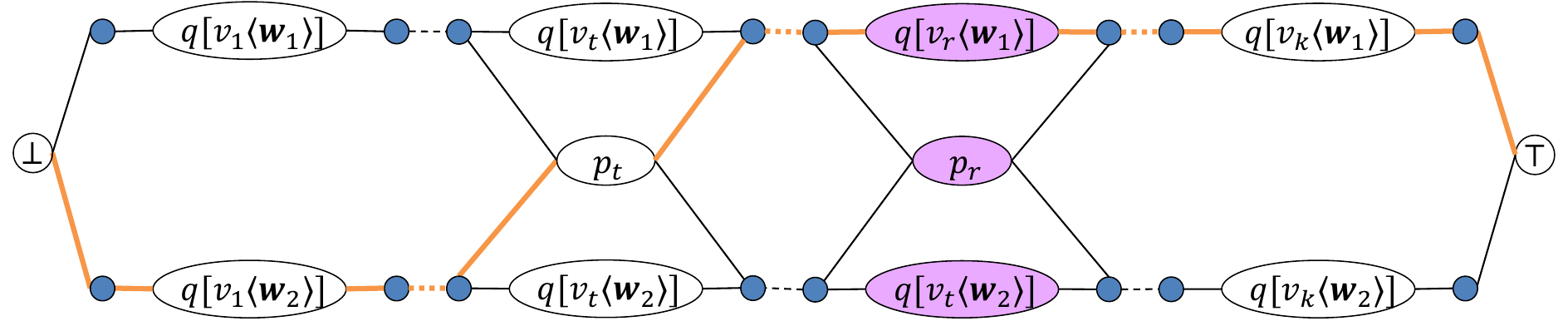}
\caption{Proof \cref{prop:read-ONCE}: Impossibility of leakage in read-once Instances (Case 2).}
\label{fig:flow-graph-2-shared-nodes}
\end{figure}
    
\item The two witnesses $w_1$ and $w_2$ share a different node $r$ as well, and that node is chosen for the minimal factorization.
W.l.o.g, consider that the chosen $\veo$ $v_r$ lies to the right of shared $v_t$ node 
under the ordering $\Omega$ (\cref{fig:flow-graph-2-shared-nodes}). 
Any leakage path that passes through $v_t$ must pass through $v_r$ or a prefix of $v_r$ as well. Since these nodes must be selected, the path cannot "leak".
\end{enumerate}

Thus, we can never have a leakage path for any read-once instance and the factorization flow graph returns the minimal factorization.
In more general terms, our flow graph also correctly identifies the minimal factorization of all known tractable cases of exact probabilistic inference (see \cref{fig:connectionsWithPDBs})
\end{proof}

\begin{proof}[Proof \cref{prop:readonce:LP}]
	We show that the LP relaxation of the $\minfact$ ILP is always correct for read-once instances by showing that the solution of the LP is the same as that of the MFMC-based algorithm (which we proved is correct in \cref{prop:read-ONCE}).
	First, we see that any solution obtained by the MFMC-based algorithm is a valid solution for the LP. 
	Since each path in the factorization flow graph is cut, any solution fulfills all the query plan as well as prefix constraints, thus becoming a valid solution for the LP and thus an upper bound for the optimal LP solution.

	We can next see that the reverse is also true, that a valid LP solution also is a cut for the factorization flow graph. We showed in \cref{prop:read-ONCE} that a read-once instance has no leakage path, therefore each path of the flow graph corresponds exactly to a constraint in the LP. Since the flow graph imposes no additional constraints (``paths''), there is no constraint (``path'') that is not satisfied (not cut) by the LP solution.

	Hence, since valid LP solutions and valid min-cuts have an equivalence for read-once instances, the Linear Program is guaranteed to find the read-once factorization as well. 
\end{proof}

\section{Proofs for \cref{sec:easycase}: Tractable Cases}
\label{sec:appendix:easy-case}

\subsection{All queries with $\leq$2 minimal query plans}

\thmtwoqueryplans*

We first prove \cref{prop:2queryplans:Mincut} and then \cref{prop:2queryplans:LP}. \cref{prop:2queryplans:LP} alternatively follows from \cref{prop:lp-easy-mincut} given \cref{prop:2queryplans:Mincut}.

\begin{proof}[Proof \cref{prop:2queryplans:Mincut}]
Consider an arbitrary query $Q$ with $\mveo=\{v_1, v_2\}$. There are two possible orderings of $\QPV$. 
We can trivially see they are both Running-Prefixes ordering since both query nodes either share prefixes or do not. 
For there to be leakage in the flow graph, there must be a path that connects two cut nodes that are not associated with the same witness. 
Since the query has two minimal plans, every path from source to target passes through at most two cut nodes. 
The two nodes in the path are connected if and only if they are associated with a common witness. 
Thus, no leakage is possible in this graph.

Since the flow graph for any query with $\mveo$ of size $2$ always has an RP ordering and can never have leakage, 
our algorithm is guaranteed to always return the minimal factorization.
\end{proof}

\begin{proof}[Proof \cref{prop:2queryplans:LP}]
\label{proof:TUM}
If a query has $2$ min-$\veo$s, we first show that if it has "equal roots" then the constraint matrix for the ILP generated by this query is Totally Unimodular, 
and then show how this applies to any query with $2$ minimal plans.

\begin{definition}[VEO roots]
The root of a $\veo$ is the set of variables in the $\veo$ that are projected away last in the corresponding query plan. 
\end{definition}

Notice that the roots of two $\veo$ are considered equal if and only if the sets are identical.

\begin{example}[VEO roots]
    The $\veo$s $x\!\leftarrow\!y\!\leftarrow\!z$ 
	and 
	$x\!\leftarrow\!z\!\leftarrow\!y$ have equal root
	$\{x\}$, 
	while $xy\!\leftarrow\!z$ and $xz\!\leftarrow\!y$ do not. 
\end{example}

\begin{proposition}[TU of two-plan queries]
	\label{prop:TUM}
Let $Q$ be a query with precisely two minimal query plans $\{v_1, v_2\} = \mveo(Q)$ 
where $v_1$ and $v_2$ have unequal roots. 
Let $C$ be the constraint matrix of an ILP to find the minimal factorization of any instance with $Q$ constructed as defined in \cref{eqn:ilp}. 
Then $C$ is a Totally Unimodular (TU) matrix.
\end{proposition}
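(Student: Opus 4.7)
The plan is to apply Ghouila-Houri's sufficient condition for Total Unimodularity: a $\{-1,0,+1\}$-matrix is TU iff for every subset $R$ of its rows there exists a partition $R = J_+ \sqcup J_-$ such that $\sum_{r \in J_+} C_{r,c} - \sum_{r \in J_-} C_{r,c} \in \{-1, 0, +1\}$ for every column $c$. First I would describe the structure of $C$ arising from the ILP in~(\ref{eqn:ilp}): query plan rows $q(v_1,\w) + q(v_2,\w) \geq 1$ carry two $+1$s in the two $q$-columns of $\w$, while prefix rows $p(v_i^\pi,\w) - q(v_i,\w) \geq 0$ carry a $+1$ in a $p$-column and a $-1$ in the matching $q$-column. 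The unequal-roots hypothesis enters here: because $v_1$ and $v_2$ have distinct roots, every table prefix begins with its own VEO's root, so the $p$-columns partition into two disjoint families, and the sub-matrices $C_1, C_2$ formed by the prefix rows of $v_1$ and $v_2$ share no $p$-columns and interact only through the $q$-columns touched by the query plan rows.

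The signing is then built in two stages. Each sub-matrix $C_i$ restricted to $R$ is the signed incidence matrix of a bipartite graph between $\{q(v_i, \w)\}$ and $\{p(v_i^\pi, \w)\}$; since bipartite incidence matrices are classically TU, Ghouila-Houri applied to each block independently produces a signing whose column sums already lie in $\{-1, 0, +1\}$. The query plan rows in $R$ are signed last, one per witness, by examining the residuals induced at the two $q$-columns of that witness by the already-fixed prefix-row signing and choosing the single sign $\epsilon \in \{-1,+1\}$ that keeps both column sums inside $\{-1, 0, +1\}$. The remaining column types (the $p$-columns) are contained entirely within $C_1$ or $C_2$ and therefore inherit their Ghouila-Houri bound from the bipartite signing.

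The main obstacle is that a single sign on a query plan row contributes equally to $q(v_1,\w)$ and $q(v_2,\w)$, so both residuals must be compatible with the chosen sign; a conflict arises exactly when one residual is $-1$ and the other is $+1$. The hard part will be to resolve this by exploiting the freedom to flip the bipartite TU signing of each connected component of $C_1$ and $C_2$: such a flip preserves the column-sum invariant but negates the residual at every $q$-vertex in the component, allowing us to arrange residuals so that, for every witness $\w$ whose query plan row lies in $R$, the pair of residuals at $q(v_1,\w)$ and $q(v_2,\w)$ is never $\{-1,+1\}$. Verifying that these flip decisions can always be made consistently across all witnesses in $R$, and that the hypothesis $|\mveo(Q)| = 2$ with distinct roots prevents the cascading conflicts that would arise with three or more VEOs, is the deepest technical step of the proof and the crucial place where both assumptions of the proposition are used.
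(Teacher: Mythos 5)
Your proposal takes a genuinely different route from the paper, and it contains a real gap at exactly the point you flag as ``the deepest technical step.'' The Ghouila--Houri strategy requires, for \emph{every} row subset $R$, a signing whose column sums lie in $\{-1,0,+1\}$. Your plan is to first sign the prefix rows block-by-block (each block being a directed-incidence-type matrix, hence TU) and then extend to the query plan rows, resolving the conflict case where the residuals at $q(v_1,\w)$ and $q(v_2,\w)$ have opposite nonzero signs by flipping whole connected components of $C_1$ or $C_2$. But several witnesses can share a component (whenever they share a prefix variable), so each flip is a per-component decision constrained by a per-witness compatibility condition linking one component of $C_1$ to one component of $C_2$; you would need to prove this 2-coloring-like system is always satisfiable, and you explicitly leave that unverified. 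Since that consistency argument is where all the difficulty of the proposition lives, the proof is not complete as written. (Your identification of where the unequal-roots hypothesis enters --- the $p$-columns split disjointly between the two VEOs --- is correct and matches the paper.)

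For contrast, the paper avoids this entirely by working with $C^\transpose$ and invoking the Heller--Tompkins sufficient condition (\cref{thm:tum-conditions}): every \emph{constraint} involves at most two decision variables, so one only needs a single global bipartition of the decision variables into two classes such that same-sign pairs (the two $\QPV$s in a query plan constraint) straddle the partition and opposite-sign pairs (a $\PV$ with its $\QPV$ in a prefix constraint) stay on one side. Putting all variables of $v_1$ in one class and all of $v_2$ in the other, and each prefix variable with its unique owning VEO (well-defined precisely because the roots are unequal), verifies the condition in a few lines. If you want to salvage your approach, the cleanest fix is to note that the Heller--Tompkins bipartition itself induces the required Ghouila--Houri signing uniformly for all row subsets, rather than constructing signings subset-by-subset; as it stands, the component-flip argument is an unproven claim, not a proof.
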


\begin{proof}[Proof \cref{prop:TUM}]
The following \cref{thm:tum-conditions} describes a set of sufficient conditions for Total Unimodularity. We show that $C^T$ meets all conditions. Since the transpose of a TU matrix is also TU, we hence prove that $C$ is a TU matrix as well.

\begin{lemma}[Total Unimodularity \cite{14AnExtensionofaTheoremofDantzigs}]
\label{thm:tum-conditions}
Let $A$  be an $m$ by $n$ matrix whose rows can be partitioned into two disjoint sets $B$ and $C$, with the following properties:
\begin{enumerate}
    \item every entry in A is $0$, $+1$, or $-1$
    \item every column of $A$ contains at most two non-zero entries;
    \item if two non-zero entries in a column of A have the same sign, then the row of one is in $B$, and the other in $C$
    \item if two non-zero entries in a column of $A$ have opposite signs, then the rows of both are in $B$, or both in $C$.
\end{enumerate}
Then every minor determinant of $A$ is $0$, $+1$, or $-1$ i.e. $A$ is a totally unimodular matrix.
\end{lemma}

Each column in $C^T$ represents an ILP constraint and each row represents a decision variable.

\textbf{Condition 1}: Recall the query plan constraint, projection constraint, and integrity constraints. For any variable and any constraint, the coefficient is $1$ if the variable is on the LHS of the constraint, $-1$ if it is on the RHS, and $0$ if it does not participate in the constraint. The coefficient cannot be any other value. 

\textbf{Condition 2}: The query plan constraint enforces a relationship between the variables of each min-VEO. In the case of queries with $2$ plans, this constraint involves $2$ variables. The project constraints always enforce a relationship between $2$ variables- i.e. a query plan instance and a variable-prefix instance, while the integrity constraint always involves a single variable. Hence, there are never more than $2$ non-zero coefficients in a single constraint (and hence, a column of $C^T$).

\textbf{Condition 3}: The only type of constraint that has two coefficients of the same non-zero value is the query plan constraint. 
Each constraint involves a variable for $\veo$ $v_1$, and another for $\veo$ $v_2$. 
We assign all variables corresponding to $v_1$ to set $B$ and those corresponding to $v_2$ to set $C$ to satisfy this property.

\textbf{Condition 4}: The only type of constraint that has two coefficients of the different non-zero values is the projection constraint. The constraint involves a query variable and a projection variable. To satisfy condition 3 we have already assigned the query variable to either set $B$ or $C$. We now assign the projection variable to the same set to which the query variable is assigned. A projection variable is constrained to a query variable if and only if the projection variable has the same table-prefix instances as the query variable. Since $v_1$ and $v_2$ have different roots, no projection variable can be constrained to both the query variables. Hence, every variable is assigned to only one of the two sets, and we fulfill all required properties.
\end{proof}

We can now use the proven-$\PTIME$ case for $2$ $\veo$s with unequal roots to also solve the case where the roots are equal. For query $Q$ with min-$\veo$s $v_1, v_2$ such that they have equal roots, we first check which tables have the same table-prefix in both the $\veo$s. The tuples from these tables can be minimally factorized only in a pre-ordained since they have a single table-prefix. We separately pre-compute the length due to tuples from these tables and remove these table-prefixes from the set $PVF$. After this simplification, no table-prefix will be associated with both $\veo$s. We can now satisfy Condition $4$ by dividing projection variables like in the unequal-root case. Recall that condition $4$ was the only condition where we required the unequal roots property. Thus, we prove that the simplified matrix is TU, and we can obtain $\PTIME$ solutions for all queries with $2$ minimal query plans.
\end{proof}

\subsection{Two queries with $\geq$ 3 minimal query plans}

\subsubsection{Triangle-unary $\qtriangleunary$ is easy}
\label{sec:TU-appendix}

We next prove that $\minfact$ for a query that we call ``Triangle Unary''
$\qtriangleunary \datarule U(x),R(x,y),S(y,z),$ $T(z,x)$ is in $\PTIME$.
This is surprising since
the query has $3$ minimal $\veo$s 
$\mveo =\{x \!\leftarrow\! y \!\leftarrow\! z, x \!\leftarrow\! z \!\leftarrow\! y, yz \!\leftarrow\! x\}$
and appears very similar to the triangle query 
$Q_{\triangle} \datarule R(x,y),S(y,z),T(z,x)$ which 
also has $3$ minimal $\veo$s 
$\mveo=\{xy \!\leftarrow\! z, xz \!\leftarrow\! y, yz \!\leftarrow\! x\}$.
Both queries are illustrated in \cref{fig:triangle:variants}.
However $Q_{\triangle}$ contains a triad, which proves $\fact(Q)$ to be $\npc$,
whereas $\qtriangleunary$ does not have a triad (neither $R(x,y)$ nor $T(z,x)$ are independent due to the relation $U(x)$).
Furthermore, the constraint matrix of the $\qtriangleunary$'s ILP is \emph{not guaranteed to be Totally Unimodular (TU)},
thus our prior TU argument cannot be used anymore for it to be in $\PTIME$.
Then we can use \cref{prop:lp-easy-mincut} to show that the LP relaxation is always easy as well.
\thmtriangleunary*

\begin{proof}[\cref{prop:triangleUNARY:Mincut}]

\cref{fig:TU-one-witness} shows the flow graph for a single witness under order $\Omega = [x \leftarrow y \leftarrow z,x \leftarrow z \leftarrow x,yz \leftarrow x]$. 
However, this proof can be applied to any $\Omega$ of $\mveo(\qtriangleunary)$ that satisfies the Running Prefix property. 
Notice that no matter what $\mveo$ is chosen, a weight of at least $2$ is assigned to the full witness. 
We can preprocess the length due to these nodes as $2*|W|$ and remove the nodes that correspond to the full witness from the graph. 
In addition, we rename the $\mveo$ to names based on the tables that distinguish them i.e.\, $\Omega$ can now be represented as $UR, UT, S\!\leftarrow\!U$. 
The optimized graph for a single witness is shown in \cref{fig:TU-one-witness-simplified}. 
Since all the weights are $=1$, we leave out the weights in subsequent figures.

\begin{figure}
    \centering
    \includegraphics[scale=0.4]{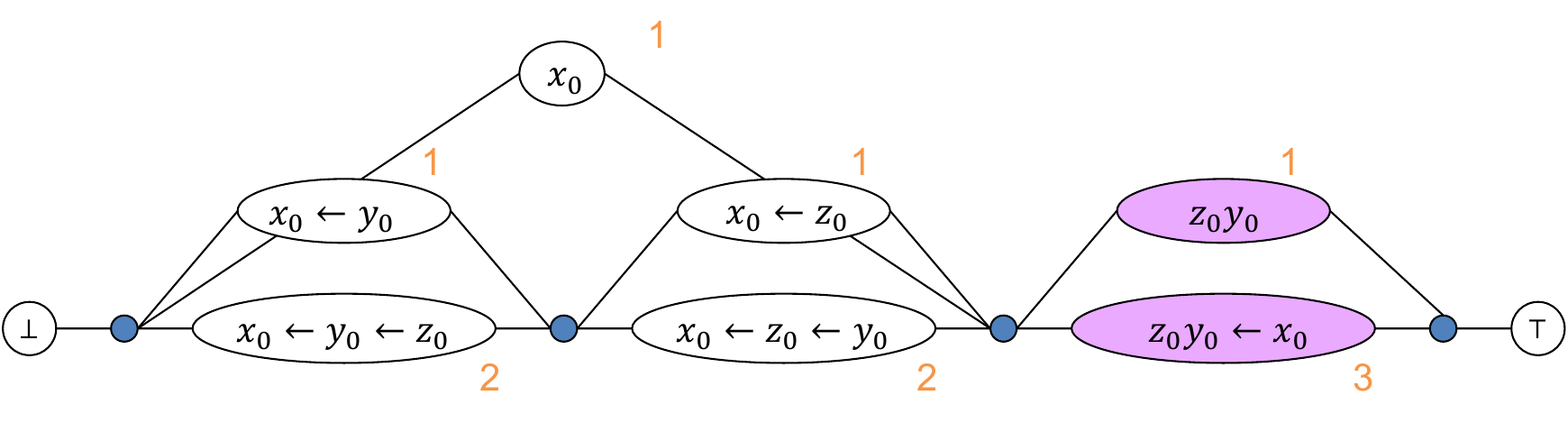}
    \caption{$\qtriangleunary$ graph for a single witness}
    \label{fig:TU-one-witness}
\end{figure}

\begin{figure}
    \centering
    \includegraphics[scale=0.4]{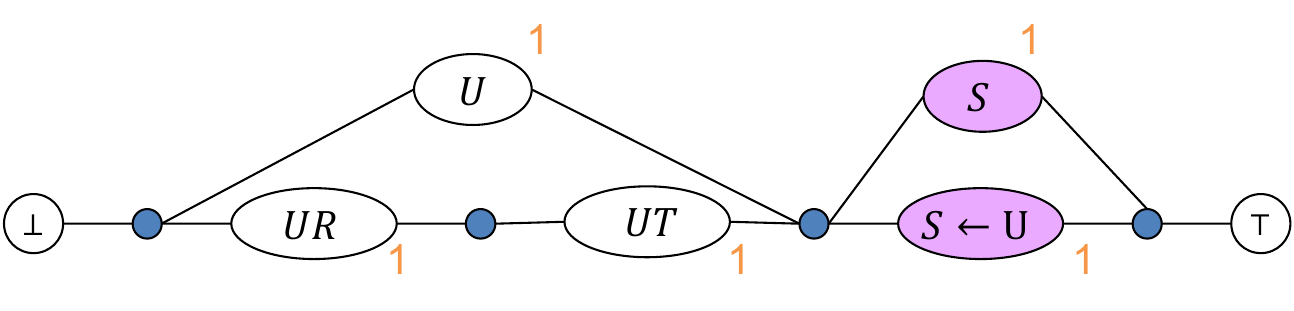}
    \caption{Simplified $\qtriangleunary$ graph for a single witness
	}
    \label{fig:TU-one-witness-simplified}
\end{figure}

We see from the graph that $\Omega$ is a Running-Prefixes Ordering, as the only shared prefix $U$, is shared between adjacent nodes.

We now show that $F(Q, \Omega)$ is leakage-free for any $W$. 
A leakage path must involve at least two witnesses since a path with nodes from one witness only will be cut in any minimal factorization. Let $\w_1$ and $\w_2$ be two witnesses whose nodes lie on the leakage path.

We divide the proof into $7$ parts based on the relationship between $\w_1$ and $\w_2$. If $\w_1$ and $\w_2$ share:
\begin{enumerate}
    \item \emph{No common variable values}:
    There will be no common nodes between the two witnesses, and leakage is not possible.
    
    \item \emph{Only $y$ or $z$ values}:    
    There are still no common nodes between the two witnesses (since there are no $y$ or $z$ nodes), and hence leakage is not possible.
    
    \item \emph{Only $x$ values}:
    In this case (\cref{fig:TU-shared-x}), the $U$ node is shared.
	For the shared node to be on the leakage path, it is not selected in a minimal factorization - thus the witnesses may not choose any $UR$ or $UT$ plans -> thus the witnesses must choose the $S \!\leftarrow\! U$ plans. 
	Since all paths through $U$ pass through $S$ or $S \!\leftarrow\! U$ nodes, leakage is not possible.
    \begin{figure}
        \centering
        \includegraphics[scale=0.4]{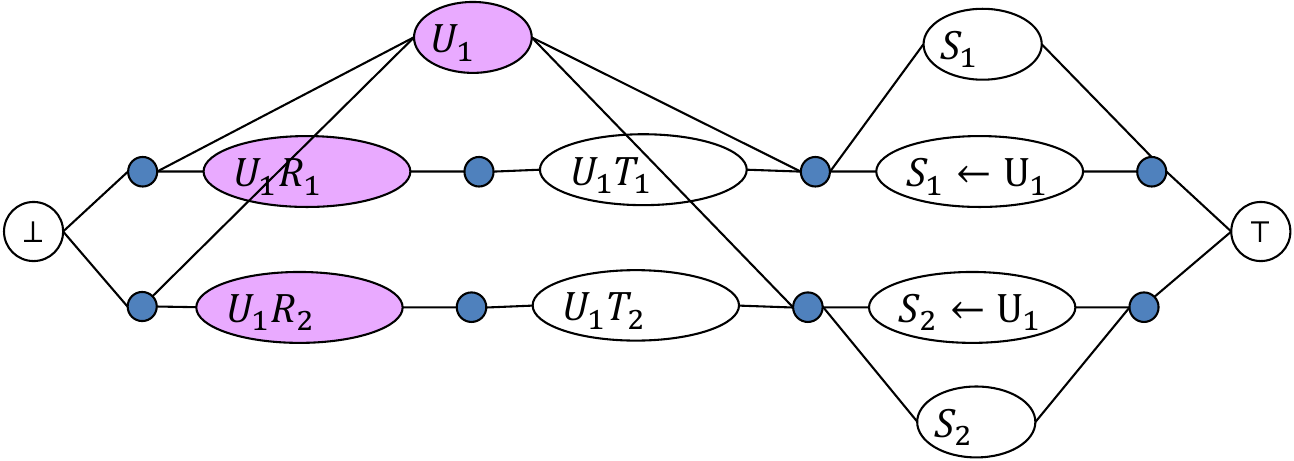}
        \caption{$\qtriangleunary$ instance, with shared $x$ (Case 3)}
        \label{fig:TU-shared-x}
    \end{figure}
    
    \item \emph{$x$ and $y$ values}:    
    In this case (\cref{fig:TU-shared-xy}), we can see that since the shared node $UR$ is at the side of the graph, we see the paths passing through $UR$ must pass through nodes corresponding with the same witness only (either $\w_1$ or $\w_2$). Thus, no leakage is possible.
    \begin{figure}
        \centering
        \includegraphics[scale=0.4]{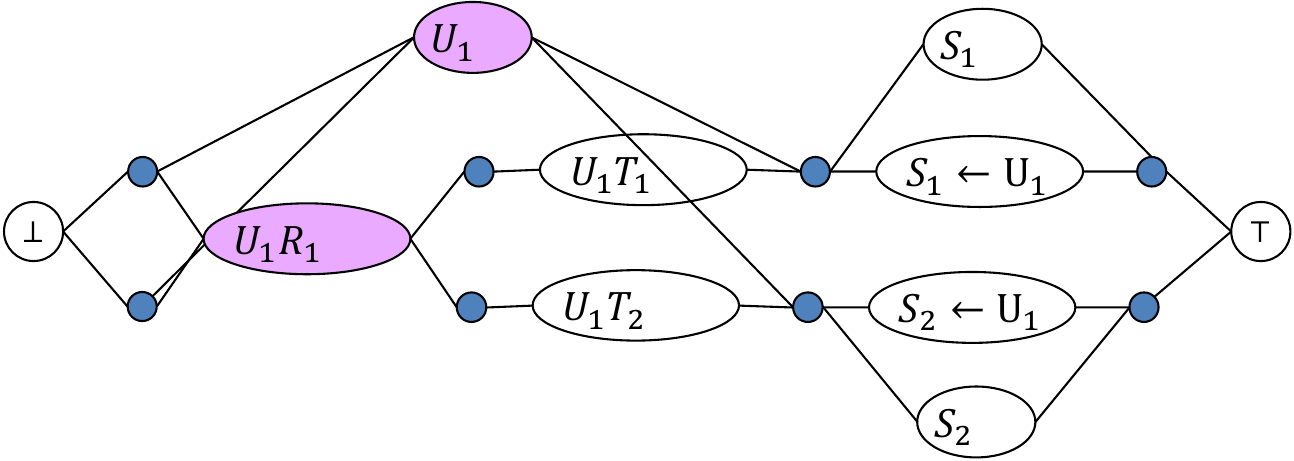}
        \caption{$\qtriangleunary$ instance, with shared $xy$ (Case 4)}
        \label{fig:TU-shared-xy}
    \end{figure}
    
    \item \emph{$x$ and $z$ values}:    
    In this case (\cref{fig:TU-shared-xz}), a leakage is possible if and only if one witness uses the $UR$ plan while the other uses the $S \!\leftarrow\! U$ plan. 
	However, if an $R$ or $T$ tuple is repeated, then the minimal factorization will never use the $S \!\leftarrow\! U$ plan, thus preventing the only potential leakage path.
    \begin{figure}
        \centering
        \includegraphics[scale=0.4]{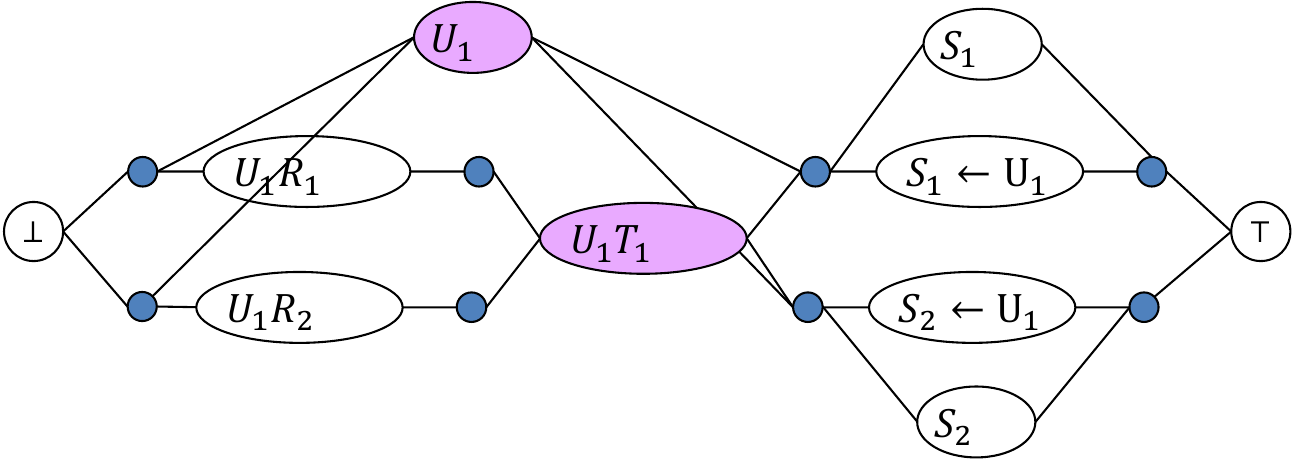}
        \caption{$\qtriangleunary$ instance, with shared $xz$ (Case 5)}
        \label{fig:TU-shared-xz}
    \end{figure}
	
    \item \emph{$y$ and $z$ values}:    
    This case (\cref{fig:TU-shared-yz}), is similar to case 4 -  since the shared node $S$ is at the side of the graph, we see that all paths through it pass through nodes corresponding with either only $\w_1$ or only $\w_2$, thus not allowing leakage.
    \begin{figure}
        \centering
        \includegraphics[scale=0.4]{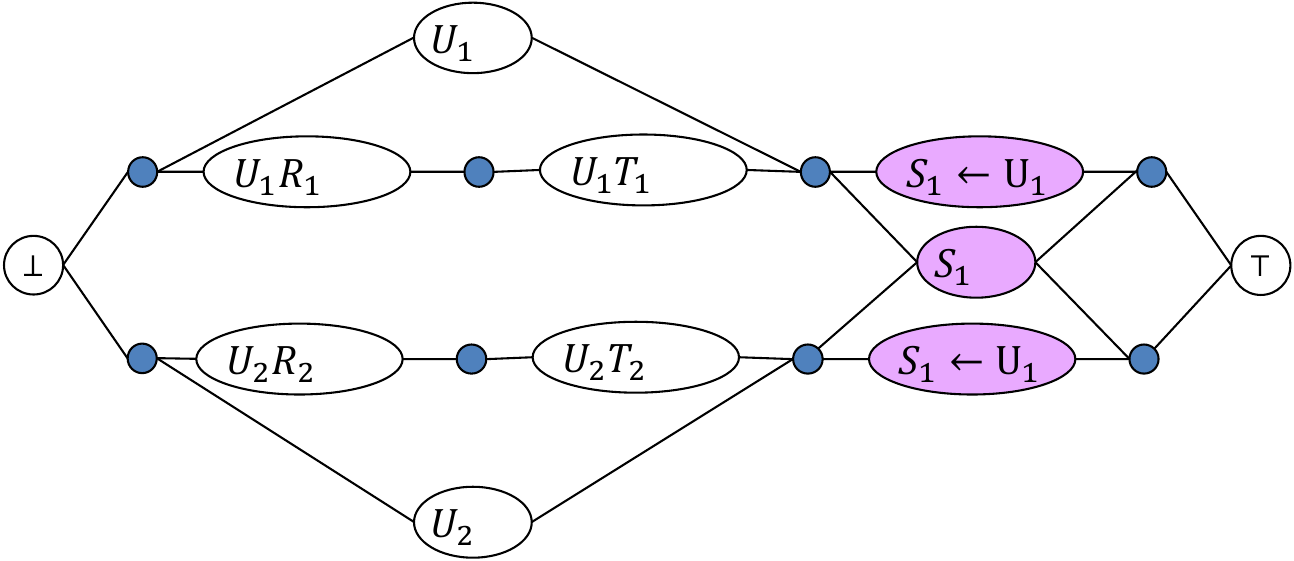}
        \caption{$\qtriangleunary$ instance, with shared $yz$ (Case 6)}
        \label{fig:TU-shared-yz}
    \end{figure}
    
    \item \emph{$x$, $y$ and $z$ values}:    
    This case is not possible under set semantics.
	\qedhere
\end{enumerate}
\label{proof:tu-flow-optimal}
\end{proof}

We show an RP and leakage-free ordering for $\qtriangleunary$, hence proving that the algorithm meets conditions for optimality described in \cref{sec:heuristic-when-optimal}.

\begin{proof}[\cref{prop:triangleUNARY:LP}]

As we have shown in \cref{prop:lp-easy-mincut} that the LP relaxation of $\minfact$ is correct when it is solved by the MFMC-based algorithm, this result follows given  \cref{prop:triangleUNARY:Mincut}.

\end{proof}

\subsubsection{4Chain query}
\label{sec:4Chain-appendix}

We next consider the Four Chain query 
$\qfourchain \datarule P(u,x), R(x,y),$ $S(y,z),$ $T(z,v)$
and prove that
$\minfact(\qfourchain,D)\in\PTIME$.
Just like in the previous subsections, we show the $\PTIME$ complexity by showing that the MFMC-based algorithm and LP relaxation are optimal.

The query has five minimal $\veo$s 
and in the first proof, we show that one RP-ordering is leakage-free and optimal.
We then use this factorization flow graph to show that the Linear Program is always correct for any $\qfourchain$ instance as well.

\thmfourchain*

\begin{proof}[Proof \cref{prop:fourchain:Mincut}]

$\qfourchain$ has $5$ minimal $\veo$s. Let us define an ordering $\Omega = [x \!\leftarrow\!(u,y \!\leftarrow\!{z} \!\leftarrow\!{v}),
x \!\leftarrow\!(u,z \!\leftarrow\!(y,v)),
y \!\leftarrow\!(x \!\leftarrow\! u,z \!\leftarrow\!  v),
z \!\leftarrow\!(x \!\leftarrow\!(u,y),v),
z \!\leftarrow\!(y \!\leftarrow\! x \!\leftarrow\! u,v)]$

Notice that there are $2$ plans each with $x$ or $z$ as a prefix, 1 plan with $y$ as a prefix, and no plan with $u$ or $v$ as the prefix.
Notice as well that in the ordering $\Omega$, the y plan is in the center and the $x \!\leftarrow\! z $ and $z \!\leftarrow\! x $ plans are at the sides.
\cref{fig:four-chain-single-witness} shows a single witness under this ordering.

\begin{figure}
    \centering
    \includegraphics[width=\columnwidth]{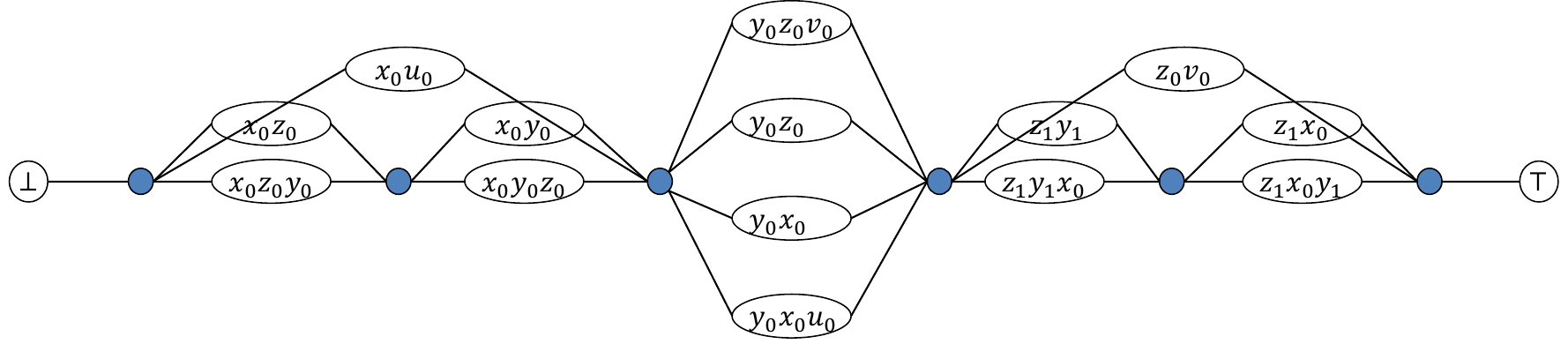}
    \caption{Flow graph with a single witness of $\qfourchain$ under ordering $\Omega$.}
    \label{fig:four-chain-single-witness}
\end{figure}

We first check that this ordering has the Running Prefixes Property. This is easily verified from the figure, no projection is parallel to a node it is not a prefix of. 

Next, we want to show that no flow graph under this ordering will have leakage. 
For this, we first prove a lemma that if there is leakage in a flow graph then there exists a leakage path comprised of nodes from just $2$ witnesses.

\begin{figure}
    \centering
    \includegraphics[scale=0.4]{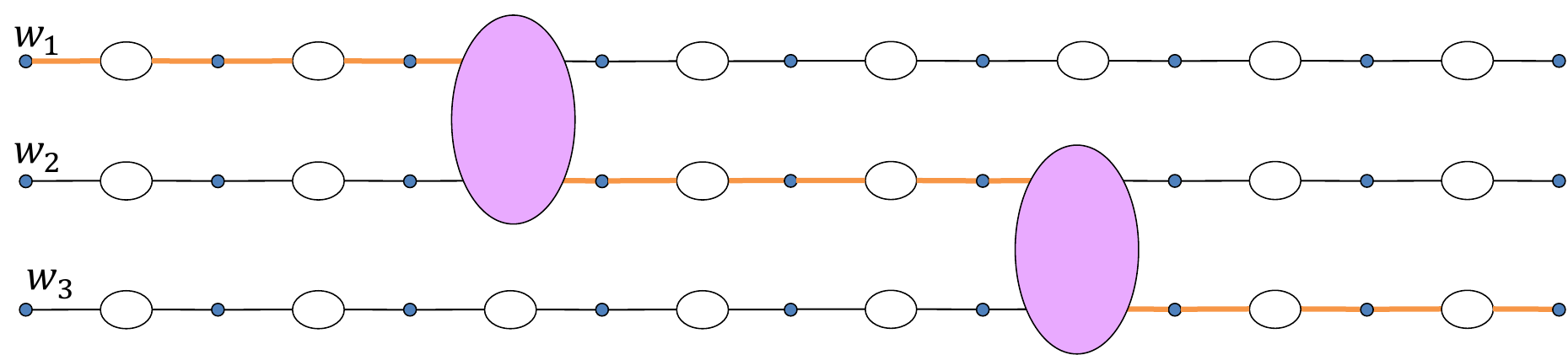}
    \caption{\cref{lem:min-leakage}: Showing that if there is leakage in the graph, there must be two witnesses that can together form a leakage path.
	}
    \label{fig:minimal-leakage-witnesses}
\end{figure}

\begin{lemma}
    \label{lem:min-leakage}
    If there exists leakage in a factorization flow graph F then there must exist two witnesses $w_1$, $w_2$ such that if we only consider a graph with cut nodes corresponding to these witnesses then the plans they use any minimal factorization do not cut the graph.
\end{lemma}

\begin{proof}[\cref{lem:min-leakage}]
Assume that the lemma is false and any leakage path in $F$ must traverse through $3$ witnesses. We know that for a leakage path to exist, the witnesses must have some shared nodes. 
Let us the minimal paths needed for such a set of witnesses $w_1$, $w_2$, $w_3$ in \cref{fig:minimal-leakage-witnesses}.
For a leakage path to exist, they must not choose any of the shared computations in the path. 
W.l.o.g., assume $w_1$ chooses a plan to the right of the shared computation. 
Notice that if $w_2$ chooses a plan in between the two shared computations then there isn't any more leakage.
So the only remaining case is if $w_2$ chooses a plan to the left of the shared computation with $w_3$, in which case there is a leakage path between $w_2$ and $w_3$.
\end{proof}

Now that we know that if there is leakage, there must be a leakage path between two witnesses, let us assume there is an instance of $\qfourchain$ under $\Omega$ where witnesses $w_1$, $w_2$ have a leakage path.

We do a case-wise analysis to show how this cannot exist, based on the variables that $w_1$ and $w_2$ share.
\begin{enumerate}
    \item They share no variables: No leakage is possible
    \item They share u,x,y,z, or v variables: In this case to the two witnesses share no nodes so no leakage is possible
    \item They share ux or zv or xz variables: In this case, the two witnesses share a variable - however, the variable is at the sides of the flow graph and hence cannot create a leakage path (there cannot be a crossover in the leakage path from $w_1$ to $w_2$
    \item They share uxzv: In this case, they share multiple nodes, but again, the shared nodes as on the sides.
    \item They share xy (or yz):  In this case, we can see there will be leakage if the $w_1$ chooses an $x$ plan and $w_2$ chooses a different $z$. 
    If the two witnesses share only $xy$ and no other variables, then $xy$ must be a part of at least $4$ witnesses because of join dependencies (every u that x is connected to must also connect to z and vice versa). In this case, you would always choose a y plan since y acts as the root of the Cartesian product.
    
    However, if they share other variables along with xy, the argument differs. The choosing of alternate plans could only happen if x and z were both connected to different y as well. In this case, the xy tuple would be repeated at least once, and since they use different prefixes and share tuples, all of the tuples $ux$, $yz$, $yv$ would be repeated. Meanwhile, a factorization where they both choose the same plan could only incur additional repeats on one of $ux$ or $zv$.
\end{enumerate}
	Notice that the above cases are exhaustive since they cover all possible prefix node combinations that can be shared between two witnesses. 
	That is, they cover the possibility of two witnesses sharing nodes
	ux, xz (equivalently zx, since the variables shared are the same), zv (in Case 3);  xy (equivalently yx), yz (equivalently zy), 
	xzy (equivalently xyz), yzv, yxu, zyx (equivalently zxy) (in Case 5).
	Notice that case 5 also includes when any other additional variables are shared alongside xy or yz. Thus, the above cases suffice to cover all possible prefix node sharing between two witnesses. Note that two witnesses can't share all 5 variables or all prefix nodes, as that would make them the same witness.
\end{proof}

\begin{proof}[\cref{prop:fourchain:LP}]

As we have shown in \cref{prop:lp-easy-mincut} that the LP relaxation of $\minfact$ is correct when it is solved by the MFMC-based algorithm, this result follows given \cref{prop:fourchain:Mincut}.

\end{proof}

\section{Proofs for \cref{sec:hardcase}: Hard Cases}

\subsection{Proof of \cref{prop:triadS}}
\thmactivetriadshard*

We notice that this proof 
is inspired by a result in \cite{DBLP:conf/pods/FreireGIM20}.
It can easily be adapted to also prove the hardness of triads for resilience 
and is considerably simpler than the original proof in \cite{FreireGIM15}.
The key idea is that in our proof construction, the chosen roots correspond to the minimum vertex cover,
or equivalently, the minimum number of input tuples to cover, in order to remove all witnesses.

\begin{proof}[Proof \cref{prop:triadS}]
Let $Q$ be a query with triad $\mathcal{T}=\set{R,S,T}$. We construct a hardness gadget \cref{fig:triad_hardness_gadget} that can be used to build a reduction from IndSet to any query $Q$.

The hardness gadget is built with $3$ witnesses. We assume that no variable is shared by all three elements of $\mathcal{T}$  
(we can ignore any
such variable by setting it to a constant). For the first $2$ witnesses, only the variables of atom $T$ are equal in both witnesses. Because $T$ is an independent atom, its tuple will be the only common tuple between $\w_1$ and $\w_2$. The nodes $x_1, x_2, x_3$ represent all the tuples of tables that are not part of the triad $\mathcal{T}$. Hence any query with a triad can create an instance of this gadget.

\begin{figure}
	\centering
	\tikzset{>=latex}
	\begin{tikzpicture}[thick]
	\draw [rounded corners] (-2,-0.5)--(2,-0.5)--(0,1.7)--cycle;
	\draw [rounded corners] (0,-0.5)--(4,-0.5)--(2,1.7)--cycle;
	\draw [rounded corners] (2,-0.5)--(6,-0.5)--(4,1.7)--cycle;
	\draw[orange] (-1,0) circle (0.25cm) node {$r_{a}$};
	\draw (0,0) circle (0.25cm) node {$s_1$};
	\draw (1,0) circle (0.25cm) node {$t_2$};
	\draw (2,0) circle (0.25cm) node {$r_3$};
	\draw (3,0) circle (0.25cm) node {$s_4$};
	\draw (4,0) circle (0.25cm) node {$t_5$};
	\draw (0,1) circle (0.25cm) node {$x_1$};
	\draw (2,1) circle (0.25cm) node {$x_2$};
	\draw (4,1) circle (0.25cm) node {$x_3$};
	\draw[orange] (5,0) circle (0.25cm) node {$r_{b}$};
	\draw (0,2) node {$\w_{1}$};
	\draw (2,2) node {$\w_{2}$};
	\draw (4,2) node {$\w_{3}$};
	\end{tikzpicture}
	\caption{\Cref{prop:triadS}: Hardness gadget for a query with triad $\{R,S,T\}$.}
	\label{fig:triad_hardness_gadget}
\end{figure}

We see that the gadget has $2$ minimal factorizations: 
$\textcolor{blue}{r_a} s_1 t_2 x_1 \vee s_4(r_3 t_2 x_2 \vee t_5 x_3 \textcolor{blue}{r_{b}})$ or as
$t_2(\textcolor{blue}{r_a} s_1 x_1 \vee r_3 s_4 x_2) \vee \textcolor{blue}{r_{b}} t_5 s_4 x_3$, 
both of which incur a penalty of $1$ and use one of $r_a$ or $r_b$ as root. 
We can represent these factorizations with gadget orientations, 
such that each gadget is oriented from the root of the factorization to the sink. 
Either of these factorizations must repeat some variable (thus have at least a penalty of 1), for $t_2$ or $s_4$ respectively. 
Since $r_a$ and $r_b$ may be connected to the other edges of the graph as well, they may incur further penalties if they are not root nodes in more than one factorization term.

To prove the problem is $\npc$, we reduce the Independent Set (IS) problem to  $\fact(Q)$. 
An independent set (IndSet) 
of an undirected Graph $G(V,E)$
is a subset of vertices $V' \subseteq V$, such that no two vertices in $V'$ are adjacent.
The $\IS$ problem asks, for a given graph $G$ and a positive integer $k \leq |V|$, 
whether $G$ contains an independent set $V'$ having $| V' | \geq k$,
which we write as $(G,k) \in \IS$.
$$
	\IS \leq \fact(Q_{3}^\star)
$$

\begin{proposition}
\label{prop:np-reduction-direction1}
$D$ has a factorization with a penalty $2|E|-k$ if $G$ has an independent set of size $k$.
\end{proposition}

\begin{proof}[Proof \cref{prop:np-reduction-direction1}]
If $G$ has an independent set of size $k$ then a factorization can be constructed corresponding to an edge orientation that points all edges towards nodes in the independent set. A minimum penalty of $|E|$ is guaranteed to be incurred as there is a penalty either on $t_3$ or $s_5$ for every gadget. 
What we need to reason about is the additional penalties that are incurred by the nodes at the end of the gadget.
For every node, it is counted in the factorization only once as root, and as many times as a non-root as it has incoming edges. 
If a node ever acts as a root, then its penalty is equal to the number of incoming edges, however, if it is never a root, then the penalty is the number of incoming edges $-1$. We construct an edge orientation such that any member of a maximum independent set never acts as the root, as shown in  \cref{fig:idp-set}. 
This construction is always possible since no members of the independent set are connected. 
The total factorization penalty then is the total incoming edges over all nodes - the size of the independent set. 
Since all $|E|$ edges are an incoming edge for some node, the penalty due to $k$ nodes is then $|E|-k$. 
The total penalty is hence $|E|+|E|-k$ or $|2|E|-k$.
\end{proof}

\begin{proposition}
\label{prop:np-reduction-direction2}
If $G$ has no independent set of $k$ then $D$ cannot have a factorization with penalty $2|E|- k$.
\end{proposition}
	
\begin{proof}[Proof \cref{prop:np-reduction-direction2}]
For $D$ to have a factorization with penalty $2|E|-k$ there would have to be $k$ gadgets with penalty exactly equal to $1$ thus implying the presence of $k$ sink nodes. Since $k$ sink nodes form an independent set of size $k$, $D$ cannot have a factorization with penalty $2|E|-k$.
\end{proof}

\Cref{prop:np-reduction-direction1,prop:np-reduction-direction2}
together finish the reduction.

\end{proof}

\begin{figure}
	\centering
	\begin{tikzpicture}[thick] 
	\node[circle, draw] (1) {$1$};
	\node[circle, draw] (3) [right of=1] {$3$};
	\node[circle, draw] (2) [below=7mm of 1, xshift=3mm] {$2$};
	\node[circle, draw] (5) [right of=3] {$5$};
	\node[circle, draw] (4) [below=7mm of 5, xshift=-3mm] {$4$};
	
	\draw [line width=0.4mm, >=latex] (1) -- (2);
	\draw [line width=0.4mm, >=latex] (2) -- (3);
	\draw [line width=0.4mm, >=latex] (2) -- (3);
	\draw [line width=0.4mm, >=latex] (3) -- (4);
	\draw [line width=0.4mm, >=latex] (2) -- (4);
	\draw [line width=0.4mm, >=latex] (4) -- (5);
	
	\end{tikzpicture} 
\hspace{10mm}
	\begin{tikzpicture}[thick, every node/.style={draw,circle},->, shorten >= 3pt, shorten <= 3pt]
	\node[circle, draw, blue] (1) {$1$};
	\node[circle, draw, blue] (3) [right of=1] {$3$};
	\node[circle, draw] (2) [below=7mm of 1, xshift=3mm] {$2$};
	\node[circle, draw, blue] (5) [right of=3] {$5$};
	\node[circle, draw] (4) [below=7mm of 5, , xshift=-3mm] {$4$};
	
	\draw [line width=0.4mm, >=latex] (2) -- (1);
	\draw [line width=0.4mm, >=latex] (2) -- (3);
	\draw [line width=0.4mm, >=latex] (2) -- (4);
	\draw [line width=0.4mm, >=latex] (4) -- (3);
	\draw [line width=0.4mm, >=latex] (4) -- (5);
	
	\end{tikzpicture}
\caption{\cref{prop:triadS}: Example graph with an independent set $\{1, 3, 5\}$. 
By orienting all edge directions towards the independent sets (and thus making them sink nodes), 
we achieve a factorization with penalty $2|E|-3 = 7$. 
The penalties at endpoints $3$ and $4$ are 1, elsewhere 0.
Additionally, each of the $5$ edges is encoded with the path gadget and has an internal penalty of $1$ each.}
\label{fig:idp-set}	
\end{figure}

\subsection{Proof of \cref{prop:dominatedtrianglehard}}
\label{sec:proofdominatedtrianglehard}

\thmsemideactivatedtriadhard*

A slight modification of the gadget from \cref{prop:triadS} can be used to show that this class of queries is hard for $\minfact$, although it is easy for $\res$.

\begin{proof}[Proof \cref{prop:dominatedtrianglehard}]
We construct a hardness gadget \cref{fig:dominated_triangle_hardness_gadget} that can be used to build a reduction from IndSet.
The hardness gadget is built with $3$ witnesses. 
It is essentially the same hardness as \cref{prop:triadS}, with the addition of a table that dominates all tables in the co-deactivated triad $A(w)$.
We keep the value of the variable $w$ fixed to $w_1$ in all gadgets.
Then $w_1$ must be the root in the $\minfact$, and the endpoints now denote the root of a factorization assuming $A(w_1)$ has been factored out.
We see that the endpoints $T(w,x,z)$ are independent of all other tables in the gadgets excluding $A(w)$.
Since only a single $A(w)$ tuple will be used through all the partitions, joining these gadgets will not lead to any unexpected witnesses.

We see that the gadget has $2$ minimal factorizations: 
both of which incur a penalty of $1$ and use one of $T(w_1, x_1, z_1)$ or $T(w_1, x_2, z_2)$ as root. 
We can use these factorizations as graph orientations as in \cref{prop:triadS} and the rest of the reduction is identical.
\end{proof}

\begin{figure}
	\includegraphics[scale=0.42]{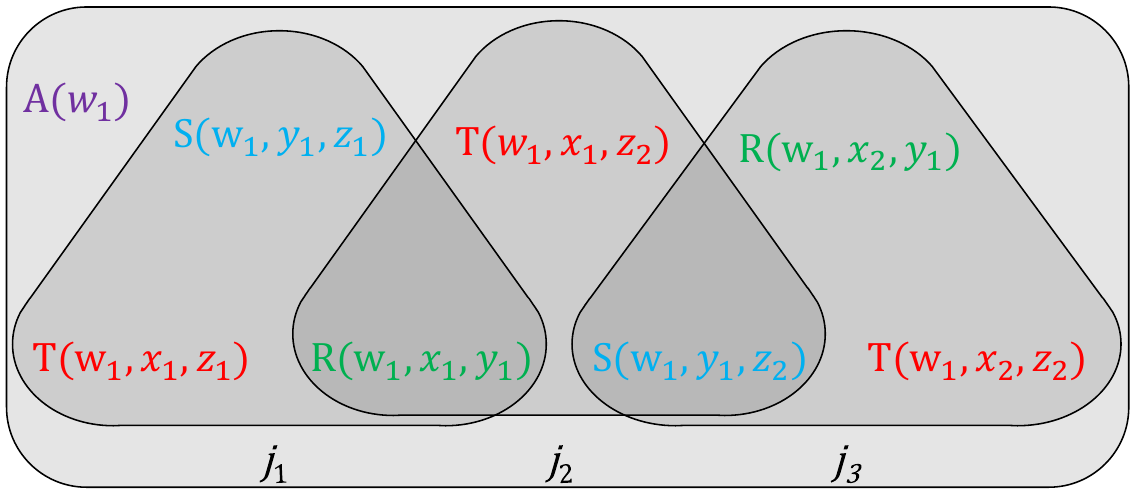}    
	\caption{\Cref{sec:proofdominatedtrianglehard}: Hardness gadget for $\qdominatedtriangle$}
	\label{fig:dominated_triangle_hardness_gadget}
\end{figure}

\section{Application: A complete approach for Approximate Probabilistic Inference}
\label{sec:application:pdbs}

\begin{definition}[Probabilistic query evaluation]
Given a Boolean query $Q$, a database $D$,
and a function $p$ that assigns an independent probability to each tuple.
Probabilistic query evaluation $\prob(Q, D, p)$
computes the marginal probability $\PP{Q,\D, p}$
(i.e.,\ the probability that $Q$ evaluates to {true} in a randomly chosen world).
\end{definition}

An important contribution of the database community to probabilistic inference 
has been the identification of tractable queries and database instances that allow exact evaluation in $\PTIME$.
Approaches towards identifying tractable cases are either at the query level (thus only look at the query and ignore the database \cite{DBLP:journals/vldb/DalviS07,DBLP:journals/jacm/DalviS12})
or at the data-level i.e. from the actual provenance polynomial \cite{DBLP:conf/icdt/RoyPT11, SenDeshpandeGetoor2010:ReadOnce}.
A practical concern with approaches that focus on tractable cases is that they 
provide only ``\emph{partial}'' solutions;
they offer none for the non-tractable cases
which then need separate approximate methods, 
usually based on Monte Carlo approximations \cite{DBLP:conf/sigmod/BoulosDMMRS05,DBLP:journals/vldb/DalviS07,re2007efficient}
or anytime approximation schemes based on {branch-and-bound} provenance decomposition methods \cite{DBLP:conf/sigmod/HeuvelIGGT19,DBLP:journals/vldb/FinkHO13,DBLP:conf/icdt/FinkO11,DBLP:conf/icde/OlteanuHK10}.
A different line of work is ``\emph{complete}'' in that it can answer all queries, 
but it is query-centric~\cite{gatterbauer2014oblivious, DBLP:journals/vldb/GatterbauerS17}
and 
thus \emph{unnecessarily approximates} cases that would allow an exact $\PTIME$ solution
if one looked at the concrete database instance.
Interestingly, a solution to the ``minimal provenance factorization" problem combines the best of both worlds (\cref{fig:connectionsWithPDBs}):
\emph{a complete approach that includes all known tractable cases for exact inference as special cases}.
We illustrate this in \cref{ex:1}.

\begin{figure}[]
\centering
\small
\renewcommand{\tabcolsep}{0.7mm}
\renewcommand{\arraystretch}{1.10}
\begin{tabular}[b]{@{\hspace{0pt}} p{13mm}  | p{40mm} | p{40mm} | @{\hspace{0pt}}}
	\multicolumn{1}{c}{}
	&
	\multicolumn{1}{c}
	{ \textbf{\normalsize Query-level} }	
		& 
		\multicolumn{1}{c}
		{ \textbf{\normalsize Data-level} }	
		\\
\cline{2-3}
	\textbf{Partial solution}
		&
		\cite{DBLP:journals/vldb/DalviS07}:
		dichotomy that has
		a $\PTIME$ solution iff the query is hierarchical
			&
			\cite{DBLP:conf/icdt/RoyPT11, SenDeshpandeGetoor2010:ReadOnce}:
			$\PTIME$ solution iff the provenance polynomial is read-once
			(which includes \cite{DBLP:journals/vldb/DalviS07} as special case)\\
\cline{2-3}
	\textbf{Complete solution}
		&
		\cite{DBLP:journals/vldb/GatterbauerS17}:
		$\PTIME$ approximation for any query
		(recovers the $\PTIME$ cases of \cite{DBLP:journals/vldb/DalviS07}, 
		but approximates some cases of \cite{DBLP:conf/icdt/RoyPT11})
			&
			\textbf{this paper}:
			finding the minimum factorization 
			includes all other three solutions \cite{DBLP:journals/vldb/DalviS07,DBLP:journals/vldb/GatterbauerS17,DBLP:conf/icdt/RoyPT11}
			as special cases
			\\
\cline{2-3}
\end{tabular}
\caption{Connection of the problem formulation and solutions in this paper with prior results on exact or approximate solutions for
probabilistic evaluation of sj-free conjunctive queries.}
\label{fig:connectionsWithPDBs}
\end{figure}

\begin{example}[Provenance]
	\label{ex:1}
	Consider again the Boolean 2-star query 
	$\qtwostar \datarule R(x),$ $S(x,y), T(y)$
	over the database in \cref{fig:RST_intro}
	(ignore tuple  $ \color{orange}s_{13}$ for now)
	as in \cref{ex:prov}.
	Each tuple is annotated with a Boolean variable 
	$ r_1, r_2,  \ldots$, 
	representing the independent event that the tuple is present in the database.
	The {provenance} $ \varphi$ is the Boolean expression that states which tuples need to be present for 
	$\qtwostar$ to be true:
	\begin{align}
		\varphi= {\color{blue}r_1} s_{11} t_1 \vee {\color{blue}r_1} s_{12} t_2 \vee r_2 s_{23} {\color{blue}t_3} \vee r_3 s_{33} {\color{blue}t_3}
		\label{eq:phiDNF}
	\end{align}
	Here, the color blue indicates tuple variables that are repeated in the expression.
	Evaluating the probability of $\qtwostar$ is \#P-hard 
	according to the dichotomy by Dalvi and Suciu for SJ-free CQs~\cite{DBLP:journals/vldb/DalviS07}. 
	This means that for database instances of increasing sizes, the evaluation becomes infeasible, in general.
	However, the dichotomy does not take the actual database into account, %
	and Roy et al.~\cite{DBLP:conf/icdt/RoyPT11} 
	and Sen et al.\ \cite{SenDeshpandeGetoor2010:ReadOnce} later independently 
	proposed a $\PTIME$ solution for particular database instances (including the one from this example)
	that allowed a read-once factorization.
	This is a factorized representation of the provenance polynomial in which 
	every variable occurs once, and which can be found in $\PTIME$ in the size of the database:
	\begin{align*}
		\varphi' &=r_1 (s_{11} t_1 \vee s_{12} t_2) \vee (r_2 s_{23} \vee r_3 s_{33}) t_3
	\end{align*}

	Another approach called query dissociation
	\cite{DBLP:journals/vldb/GatterbauerS17}
	can evaluate it only approximately with an upper bound
	that is calculated with a ``{probabilistic query plan}''
	\begin{align}
		& P  = \projp{x}\joinp{}{R(x),\projp{y}\joinp{}{S(x,y),T(y)}}	
		\label{eq:2starplan1}
	\end{align}
	where the probabilistic join operator $ \joinp{}{\ldots}$ in prefix notation
	and the probabilistic project-away operator with duplicate elimination $\projpd{}{}$ 
	compute the probability assuming that their input probabilities 
	are independent~\cite{DBLP:journals/tois/FuhrR97}. 
	This approach intuitively leads to a read-once upper-bound expression 
	in which formerly repeated variables are now replaced by fresh copies (indicated by the new indices):
	\begin{align}
		\varphi'' &=r_1 (s_{11} t_1 \vee s_{12} t_2) \vee r_2 s_{23} {\color{blue}t_{3,1}'} \vee r_3 s_{33} {\color{blue}t_{3,2}'}
		\label{eq:phifactorized}
	\end{align}
\end{example}

The exact approaches listed
above are important milestones 
as they delineate 
which cases can be calculated efficiently or not. 
However, an important problem is that even the data-level approaches 
\cite{DBLP:conf/icdt/RoyPT11, SenDeshpandeGetoor2010:ReadOnce} are only ``\emph{partial}'',
i.e.\ when we make a minor modification to the database, they do not work anymore,
as illustrated next.

\begin{example}[continued]
	\label{ex:incompleteApproaches}
Consider again adding one tuple $\color{red}s_{13}$ to the database as in \cref{ex:fact}: it  
is shown as a red dashed line in the join graph of \cref{fig:RST_intro}. 
After this addition, the exact approaches based on query level \cite{DBLP:journals/vldb/DalviS07} or 
on instance read-once formulas  \cite{DBLP:conf/icdt/RoyPT11} do not work anymore.

An approach based on query dissociation with minimal query plans 
\cite{DBLP:journals/vldb/GatterbauerS17}
would choose one of 
two query-level factorizations with either all tuples from $R$ or all tuples from $T$ as ``root variables.'' 
The solution with $R$'s as root is:
\begin{align*}
	\phi &=r_1 (s_{11} t_1 \vee s_{12} t_2 \vee {\color{red}s_{13}} {\color{blue}t_3}) 
	\vee r_2 (s_{23} {\color{blue}t_3}) \vee r_3 (s_{33} {\color{blue}t_3})
\end{align*}

This would lead to variable ${\color{blue}t_3}$ being repeated 3 times and expression size 13.
Similarly, approaches based on Shannon expansion
\cite{DBLP:conf/icde/OlteanuHK10}
need to start from such a factorization before repeatedly applying the expansion until arriving at a read-once expression.

However, there is an optimal factorization that repeats one variable only 2 times and has a total size of $12$.
Importantly, this factorization needs to use \emph{different} root variables for different witnesses:
\begin{align*}
	\phi' &=r_1 (s_{11} t_1 \vee s_{12} t_2 \vee {\color{red}s_{13}} {\color{blue}t_3}) 
	\vee (r_2 s_{23} \vee r_3 s_{33}) {\color{blue}t_3}
\end{align*}
\end{example}	

Our idea is to try to come as close as possible to a read-once factorization in the general case 
by using dissociation-based bounds with the fewest number of repeated variables necessary.
This achieves a complete solution (covering all queries and database instances).
The problem
is related to various 
topics in databases
and is best understood
in the modern formulation of provenance semirings~\cite{GKT07-semirings,Green:2017:SFD:3034786.3056125}. 
A minimal factorization can also be used as input to anytime or exact algorithms 
relying on repeated application of Shannon expansion~\cite{DBLP:journals/vldb/FinkHO13}.
Our problem generalizes the problem 
of determining whether a propositional formula is read-once 
to 
that of finding an expression of minimal size.

\begin{example}[\cref{ex:1} continued]
	Consider the following plan
	\begin{align}
		& P'  = \projp{x,y}\joinp{}{R(x),S(x,y),T(y)}	
		\label{eq:2starplan2}
	\end{align}
	It corresponds to a query dissociation 
	$\Delta' = (\{y\}, \emptyset, \{x\})$
	whereas the plan shown in \cref{eq:2starplan1} 
	corresponds to a dissociation
	$\Delta = (\emptyset,  \emptyset, \{x\})$.
	Thus, 
	$\Delta \preceq \Delta'$
	and we know from
	\cref{thm:minfactminveo} 
	that 
	$\len\big(\varphi(P^{\Delta})\big) \leq \len\big(\varphi(P^{\Delta'})\big)$
	over any database.
	Let's verify for the case of 
	the database in \cref{fig:RST_intro} without the red tuple.
	$\varphi(P^{\Delta})$
	corresponds to  
	\cref{eq:phifactorized}
	with length 11
	whereas 
	$\varphi(P^{\Delta'})$
	corresponds to 
	\cref{eq:phiDNF}
	with length 12.
\end{example}

\subsection{A Proof-of-Concept for Improving Probabilistic Inference}
\label{sec:piexpt}

\introparagraph{Setup}
With \cref{fig:expt-prob-inference}, we next generalize our setup from \cref{fig:RST_intro:b}:
We again consider a database of format where one of the $r$ tuples is connected to all $t$ tuples, and vice versa (\cref{fig:PI-database}).
But instead of 3 tuples, we now have $k$ tuples in tables $R$ and $T$, respectively (and $2k-1$ in $S$).
We consider the query $\qthreechain$, which has $2$ minimal query plans, and allow the $R$ and $T$ tables to take on probabilistic values (whereas tuples in $S$ are deterministic, i.e.\ have probabilities $1$).
We investigate if the minimal factorization indeed leads to better probabilistic bounds as $k$ increases.

\begin{figure}[t]
    \begin{tikzpicture}[node distance={3mm}, thick, main/.style = {draw, circle}] 
		\node (1) {$R$}; 
		\node[main] (2) [below = of 1] {$r_1$};
		\node[main] (3) [below = of 2] {$r_2$};
		\node[main] (4) [below = of 3] {$...$}; 
		\node[main] (5) [below = of 4] {$r_k$}; 
		\node (11) [right = 10mm of 1] {$S$};
		\node (6) [right = 10mm of 11] {$T$}; 
		\node[main] (7) [below = of 6] {$t_1$};
		\node[main] (8) [below = of 7] {$t_2$};
		\node[main] (9) [below = of 8] {$...$}; 
		\node[main] (10) [below = of 9] {$t_k$}; 
		\draw (2) -- (7); 
		\draw (2) -- (8); 
		\draw (2) -- (9); 
		\draw (2) -- (10); 
		\draw (3) -- (10); 
		\draw (4) -- (10); 
		\draw (5) -- (10); 
	\end{tikzpicture}
    \caption{\Cref{sec:piexpt}: Parameterized database instance for evaluating probabilistic inference with $2k$ probabilistic variables.
	}
    \label{fig:PI-database}
\end{figure}

\begin{figure}
	\centering
	\includegraphics[scale=0.45]{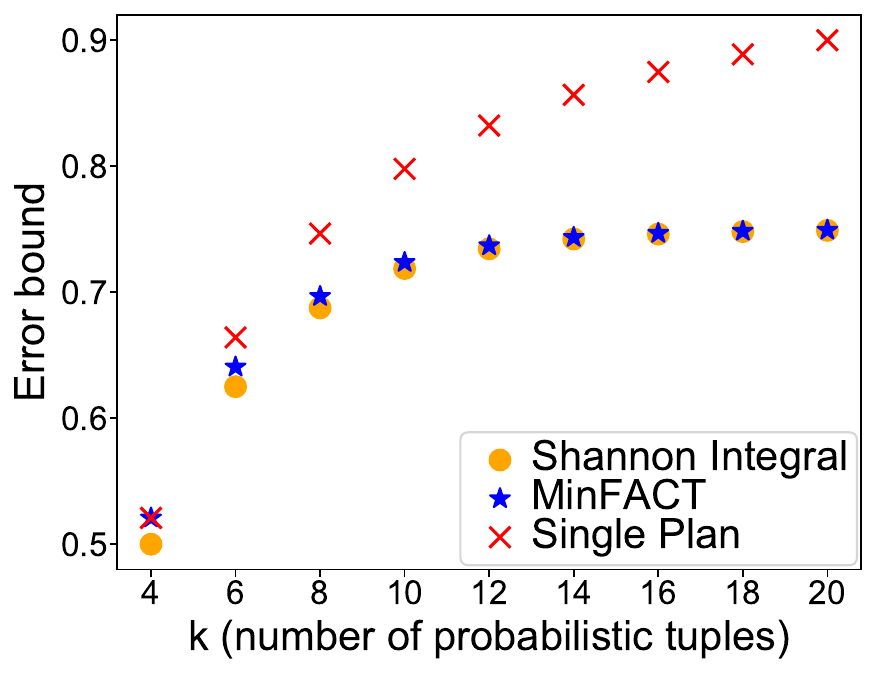}
	\caption{\Cref{sec:piexpt}: Expected value of exact and approximate probabilistic inference comparison for an increasing number of tuples (x-axis is $k$).}
	\label{fig:expt-prob-inference}
\end{figure}

\introparagraph{Methods}
Exact probabilistic inference is hard for this case,
and we use Shannon expansion to obtain the exact probabilities as a baseline.
As comparison, we use the minimal factorization of this expression (which is not read-once) and the expression obtained by evaluating the database under a single dissociated query plan.

\introparagraph{Evaluation}
For all three expressions, we calculate the expected probability assuming that each of the $2k$ tuples can take on a probability independently sampled from a uniform random probability in $[0, 1]$. To calculate that expected value, we integrate over all variables.
Thus, for an instance with just $k=2$ tuples in $R$ and $2$ $T$, respectively, we perform a quadruple integral.
We performed these integrations in Wolfram Mathematica \cite{Mathematica} and were only able to scale to $20$ repeated integrals.

\introparagraph{Results}
We plot the integral of the expected probability over the whole space on the y-axis,
and the x-axis shows the number $k$ of probabilistic variables. 
We see that with increasing variables $k$, $\minfact$ converges to the exact probabilistic inference given by Shannon expansion, while a single plan dissociation gives increasingly worse approximations.

\section{Optimizations for computing $\minfact$}
\label{sec:appendix:pruning}

We think of $\minfact$ as the problem of assigning $\veo$s to each witness to optimize $\texttt{len}$. 
Each witness chooses between $|\mveo|$ $\veo$s. 
However, it is not necessary that each must choose from the same set. 
For some witnesses, some $\veo$s can be directly eliminated (or pruned) and thus their set of choices for those witnesses is reduced. 
The pruning is done based on ``\emph{degree statistics}'' which summarize information about 
how many witnesses share a given subset of the domain values of a given witness.
This optimization can be thought of as a preprocessing step for both the ILP and the MFMC-based algorithm.

\begin{example}[VEO Pruning]
\label{example:pruning-intuition}
Consider query $\qthreestar$ query and a database instance that contains $2$ witnesses:
$\vec w_1 = (r_1, s_1, t_1, w_{111})$ and 
$\vec w_2 = (r_1, s_2, t_2, w_{122})$ 
(or $\vec w_1 = (x_1,y_1,z_1)$ and 
$\vec w_2 = (x_1,y_2,z_2)$ in the domain values perspective).
The only tuple they have in common is $r_1$ and the only variable they have in common is $x_1$. 
We say that the degree (or count) of $x_1$ in $\vec w_1$ is 2.
We can intuitively see that a minimal factorization would have $r_1$ as a ``root'' i.e. would use a $\veo$ in which the $x_1$ is a prefix. 
\end{example}

\introparagraph{Degree Statistics}
We leverage degree statistics of tuples participating in a witness to filter its potential query plan assignments. 
Given a witness $\vec w$ and strict subset of attributes $\{x_1, x_2, \hdots x_k\}$, 
we define $\Count(x_1, x_2, \hdots x_k)$ 
as the number of witnesses that have the same valuation of these attributes. 
Each degree statistic can be calculated by a simple group-by and aggregation. 
We calculate the degree statistics for all sets in the powerset of $\var(Q)$.
The number of such statistics is exponential in the size of the query, but polynomial in the data size, 
and in practice the time to compute all such statistics is small (see experimental figures).

\begin{example}[\cref{example:pruning-intuition} continued]
\label{degree-statistics}
We compute the degree statistics over the example: 
$\vec w_1$ has $c_x = 2, c_y = 1, c_z = 1, c_{xy}= 1, c_{yz} = 1 $ and $c_{xz} = 1$. 
$\vec w_2$ has the exact same statistics.
\end{example}

\introparagraph{Pruning Rule} 
Given two $\veo$s $v_1$ and $v_2$ with the roots of the trees being composed of the set of variables 
$\vec r_1$ and $\vec r_2$, 
we can apply the following rules:
\begin{align*}
\text{If } \Count(r_1) &== \Count(r_1 + r_2) \text{ and } \Count(r_1) \text{ \textless } \Count(r_2):\\
& \text{Eliminate Plan }v_1 \\
\text{Elif } \Count(r_2) &== \Count(r_1 + r_2) \text{ and } \Count(r_2) \text{ \textless } \Count(r_1):\\
& \text{Eliminate Plan }v_2 \\
\text{Elif } \Count(r_1) &== \Count(r_2) == \Count(r_1 + r_2) \text{ and } \\
\Count(r_2) &=  \Count(r_1):\\
& \text{Assign Equivalence Class } \{v_1, v_2\} \\
\end{align*}

\begin{example}[Pruning Rule]
    Continuing our setup, we see that for witness $x_1y_1z_1$, $\Count(y_1) = \Count(x_1y_1)$ and that $\Count(y_1)$ < $\Count(x_1)$. Thus, we can eliminate plans with the root $y_1$, i.e. the two plans $y \!\leftarrow\! x \!\leftarrow\! z $ and  $y \!\leftarrow\! z \!\leftarrow\! x$.
    
    We can apply the pruning rule again in the same manner to eliminate two plans with $z_1$ as the root. The same pruning will also apply to $w_2$.
\end{example}

\introparagraph{Recursive Application of Pruning Rule}
Consider 2 $\veo$s $v_1$ and $v_2$ with the same root $r$, but then we compare all the children subtrees of $r$ and apply the pruning rule recursively to further prune between plans with same root.

\begin{example}[Recursive Pruning]
    After the pruning in the last step, we are left with two plans $x \!\leftarrow\! y \!\leftarrow\! z$ and $x \!\leftarrow\! z \!\leftarrow\! y$.
    We now prune at tree level $2$.
    We see that $\Count(xy) = \Count(xz) = \Count(xyz)$. 
    Thus, we assign an equivalence class $\{xy, xz\}$
\end{example}

\introparagraph{Equivalence Classes}
At the end of all the pruning, we look at our collected equivalence classes. If two $\veo$s are in an equivalence class, this implies that all else being equal, both plans will lead to equal $\texttt{len}$. Thus, if we have multiple plans from the same equivalence class, we remove all but one arbitrarily.  

\begin{example}[Equivalence Classes]
    Since $\{xy, xz\}$ are in an equivalence class, and we have finished our pruning, we arbitrarily remove plans with the prefix $xz$.
    Thus, we are left with sole plan $x \!\leftarrow\! y \!\leftarrow\! z$.
    We were able to reduce both witnesses from $6$ to $1$ plan in this very easy example.
    (Note that in other cases, different witnesses may have different degree statistics and hence reach a different set of pruned plans).
\end{example}

\introparagraph{Proof of Correctness of Optimization}
\begin{enumerate}
    \item Correctness of the Pruning Rule: Consider the case where with $\veo$s $v_1$ and $v_2$ where $\Count(r_1) = \Count(r_1 + r_2)$ and $\Count(r_1) < \Count(r_2)$. 
    Let us assume that a witness $w_1$ necessarily needs $v_1$ for a minimal factorization of the instance.
    We take all witnesses that share $r_1$ and $r_2$ and if they use $v_1$ as their VEO, we switch them to $v_2$. 
    Notice that $\Count(r_1) = \Count(r_1 + r_2)$ implies that whichever witnesses share $r_1$, also share $r_2$. 
    So switching a subset of these witnesses over from $v_2$ to $v_1$ can repeat no additional variables since it is never worse to factor out $v_2$ first (which they all share anyway).
    For witnesses that contain $r_1$ and $r_2$ but did not use $v_1$ plan, they could possibly incur penalties on a subset of $r_1$ that they used as prefixes. However, as $\Count(r_1) < \Count(r_2)$, there is a greater prevalence of $r_2$ variables that can be repeated and so they are never negatively affected by the switch.
    Thus switching to $v_2$ can only lead to better or equal factorization hence the pruning rule is always correct.
    \item Correctness of Equivalence Classes: We can use the same logic as that of the pruning rule to show that when $\Count(r_1) = \Count(r_2) = \Count(r_1 + r_2) \text{ and} \Count(r_2) =  \Count(r_1)$, then neither $v_1$ nor $v_2$ can be worse than each other.
    \item Correctness of Repeated Application: If each pruning rule application is correct, then for every plan that is pruned, there exists an unpruned plan that leads to a better or equal $\texttt{len}$. Since the rules are applied linearly and not in parallel, we can safely repeatedly apply the rule until applicable.
\end{enumerate}

\section{Experiments}
\label{sec:experiments}
\label{SEC:EXPERIMENTS}

We implemented our ILP, LP relaxation, and the MFMC-based algorithm as LP\footnote{The MFMC-based algorithm can be encoded as an LP since the min-cut problem itself can be encoded as an LP}
by using Python 3.8.5 for the pre-processing of the provenance
and creating the problem encoding,
and then Gurobi Optimizer $8.1.0$~\cite{gurobi} for solving the respective optimization problems.
Our goals are to evaluate 1) the running times, and 2) the size of resulting factorizations.
In this section we illustrate our results and show three interesting takeaways:
1) ILP is, as expected, exponential in the size of the program, in general. 
The MFMC-based algorithm is not and can thus speed up the evaluation quite drastically. 
While it comes with no guarantees for hard cases, it approximates the minimum size quite well.
2) For queries that we have shown are in $\PTIME$, the ILP solver is comparably fast as the LP and all algorithms provide the correct solution.
3) The optimizations described in \cref{sec:appendix:pruning} further speed up the evaluations in a way that Gurobi cannot.

For both the ILP and MFMC-based algorithms, the \emph{optimizations} are applied during a preprocessing step and simplify the resulting ILP and MFMC formulations (\cref{sec:appendix:pruning}).
The intuition is that based on how each witness interacts with other witnesses (i.e.\ based on some \emph{degree statistics} obtained by simple group-bys), we can mark certain minimal query plans as unnecessary for a minimal factorization.\footnote{
We also believe that a similar pruning algorithm will be instrumental in achieving a $\PTIME$ factorization flow graph algorithm for other linear queries by eliminating nodes that cause leakage.}
Being able to find such optimization is notable since it implies we are \emph{identifying and leveraging structural properties of the problem} 
based on the minimal query plans that are difficult to extract in the same time by the state-of-the-art ILP solver Gurobi.

\subsection{Experimental Setup}
\label{sec:appendix:expt-details}

\introparagraph{Software and Hardware} We implement the algorithms using Python 3.8.5.\,
and solve the respective optimization problems with Gurobi Optimizer $8.1.0$~\cite{gurobi}, 
a commercial and highly optimized solver. 
The experiments are run on
an Intel\,\textsuperscript{\textregistered} Core\texttrademark\,i7-1065G7 CPU @ 1.30GHz machine with 132 GB RAM. 

\introparagraph{Experimental Protocol}
We focus on 5 queries: the two hard queries $\qthreestar$ and
$\qtriangle$, the two easy queries $\qtriangleunary$, $\qfourchain$, 
and finally query $\qfivechain$ which we hypothesize to be easy.
We first create a random database instance by fixing a number of tuples to sample, and
sampling attribute values independently for each attribute (and removing duplicates) and parameterized by total tuples in the instance.
We then run a provenance query and store the resulting provenance (and its size). 
On this database instance, we then run our five algorithms: ILP, ILP (OPT), MIN-CUT, MIN-CUT (OPT), and LP.
Here (OPT) refers to executing an algorithm after applying preprocessing optimizations described in \cref{sec:appendix:pruning}.

\begin{figure*}
    \begin{subfigure}[b]{0.95\textwidth}
        \centering
        \includegraphics[scale=0.3]{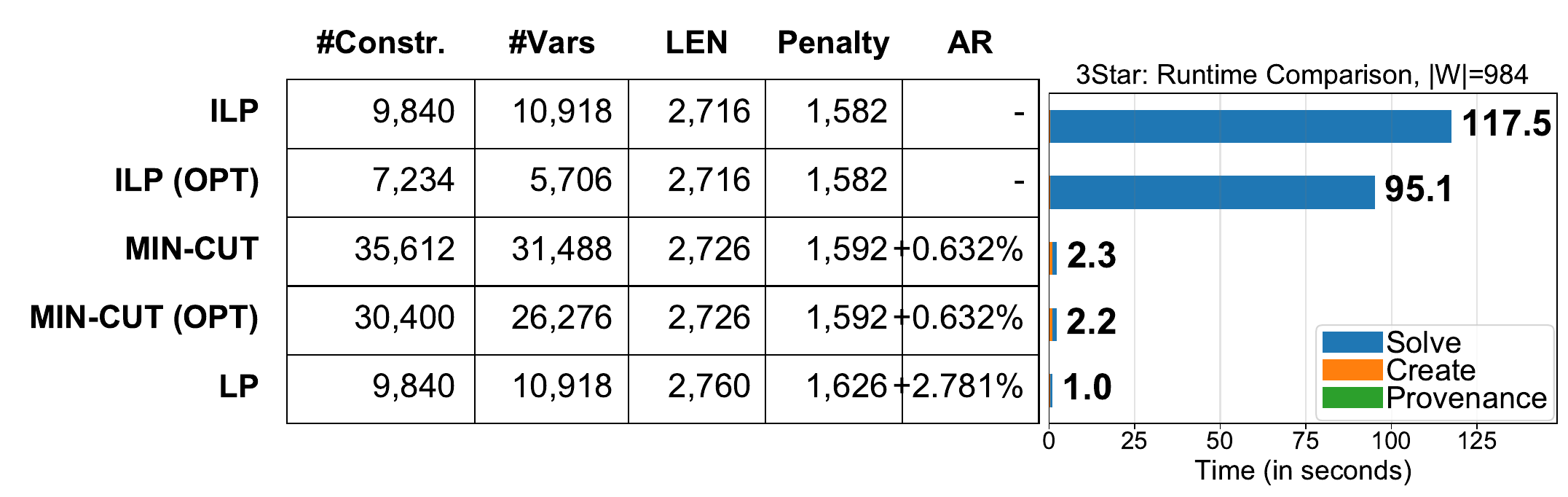}
        \vspace{-2mm}
		\caption{$\qthreestar$ (\npc query) instance with $984$ witnesses}	
        \label{fig:expt-threestar}
    \end{subfigure}
	\begin{subfigure}[b]{0.95\textwidth}
        \centering
        \includegraphics[scale=0.3]{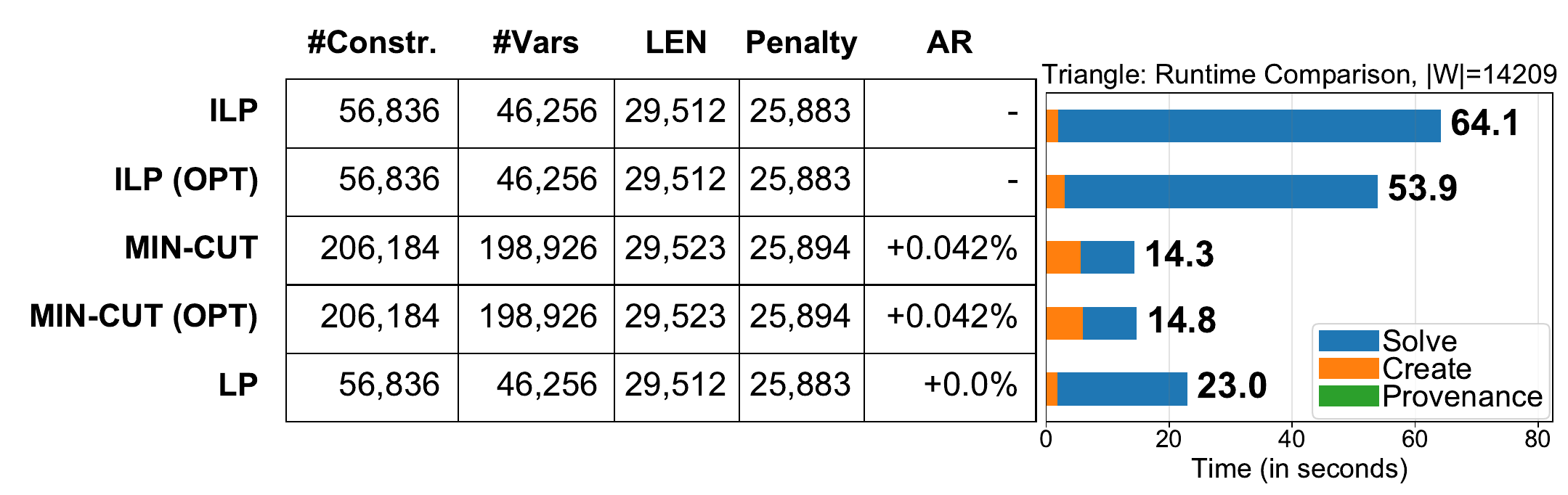}
        \vspace{-2mm}
		\caption{$\qtriangle$ ($\npc$ query) instance with $14,209$ witnesses}	
        \label{fig:expt-triangle}
    \end{subfigure}
    \begin{subfigure}[b]{0.95\textwidth}
        \centering
        \includegraphics[scale=0.3]{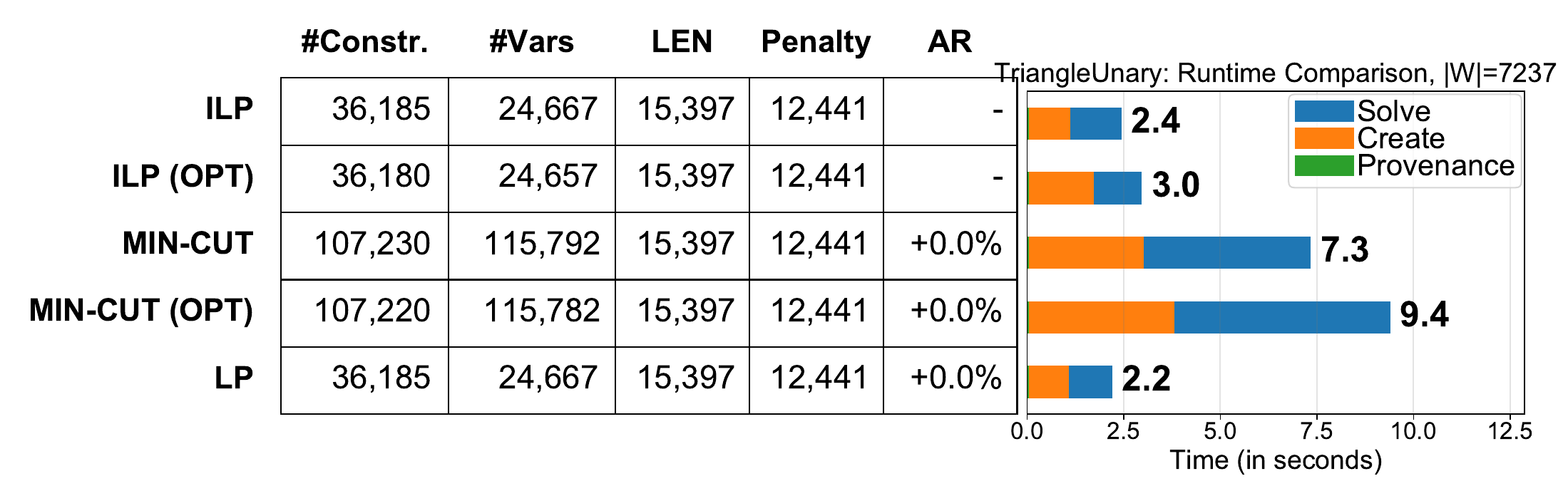}
        \vspace{-2mm}
		\caption{$\qtriangleunary$ ($\PTIME$ query)  instance with $7,237$ witnesses 
		}	
        \label{fig:expt-triangle-unary}
    \end{subfigure}
    \begin{subfigure}[b]{0.95\textwidth}
        \centering
        \includegraphics[scale=0.3]{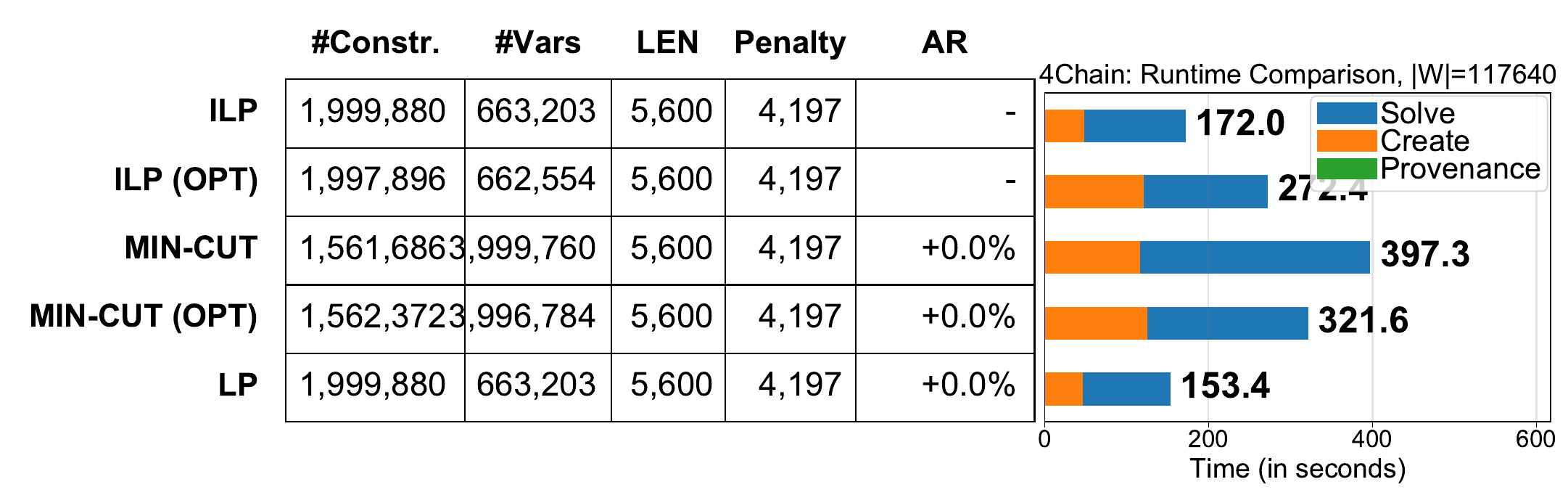}
        \vspace{-2mm}
		\caption{$\qfourchain$ ($\PTIME$ query) instance with $117,640$ witnesses}	
        \label{fig:expt-four-chain}
    \end{subfigure}
    \begin{subfigure}[b]{0.95\textwidth}
        \centering
        \includegraphics[scale=0.3]{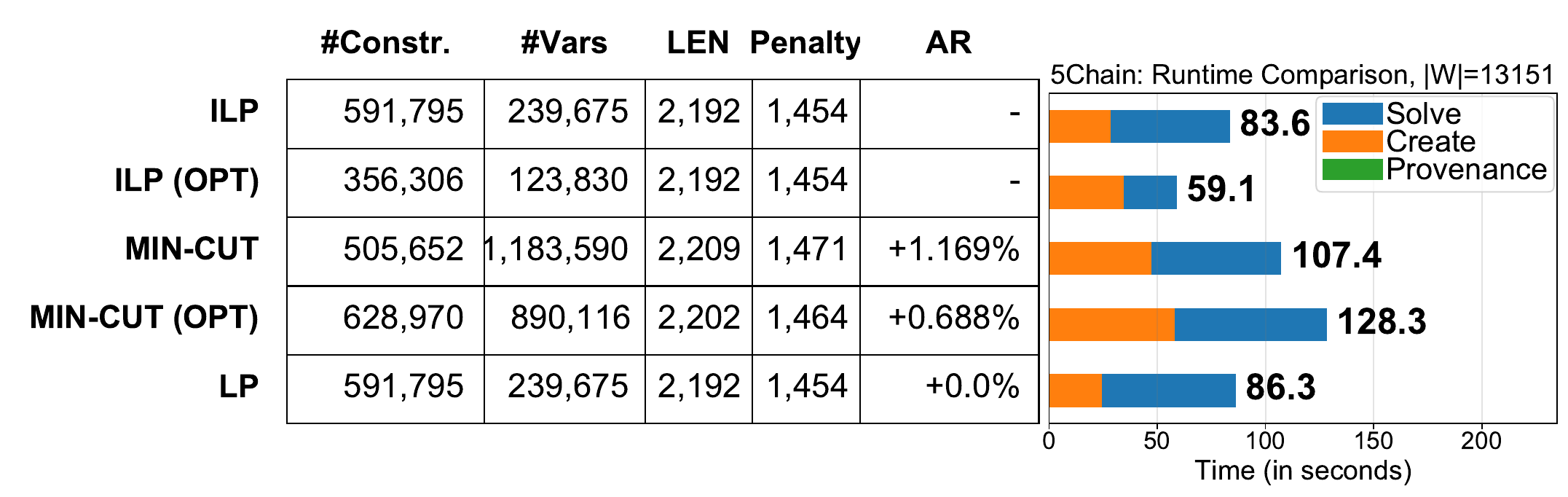}
        \vspace{-2mm}
		\caption{$\qfivechain$ ($\PTIME$ query) instance with $13,151$ witnesses}	
        \label{fig:expt-five-chain}
    \end{subfigure}
    \caption{Experimental study of time and factorization length for $5$ queries over all algorithms.}
    \label{fig:algo-compare-expts}
\end{figure*}

\subsection{Experimental results}
\label{sec:exptfigures}

In this section, we empirically evaluate $5$ queries discussed throughout the paper and not only gather some empirical support about the complexities we show in the paper but also gain some insight into the effectiveness of the optimizations and if advanced ILP Solvers can detect our $\PTIME$ Cases.

The ``Penalty'' of a factorization is the length ($\len$) minus the number of different tuple variables. 
The Approximation Ratio (AR) measures the relative increase of the penalty of an algorithm to the minimal penalty. 
The ILP and ILP (OPT) algorithms are guaranteed to find the optimal solution, i.e.\ $+0\%$. 
While the MFMC-based algorithm provides no guarantees for $\npc$ queries, we show that it provides a good approximation.

We can see that the MFMC-based algorithm contains more constraints and variables than the ILP problem (we must add a flow conservation constraint for each node in the graph, which includes nodes for corresponding ILP variables, plus connector nodes).
However, since it is solved as an LP optimization problem instead of an ILP optimization, it is much faster. 
We also notice that the optimized algorithm decreases the number of constraints in the MFMC-based algorithm by $15\%$. 
The number of constraints does not decrease in the ILP, but because the Query Plan Constraints involve fewer variables, the constraints are less complex and the ILP can be solved faster.

	\introparagraph{$\qthreestar$ \Cref{fig:expt-threestar}} The $3$-Star query is a hard query with an active triad and $|\mveo| =6$.
	For this query, the provenance computation time and problem creation time are small compared to the time needed to solve the ILP. 
	The \emph{optimized prepossessing step} reduces the time for the ILP solution by a tenth, while
	the MFMC-based algorithm is $\approx 60$ times faster than the ILP. 
	This is expected as the min-cut algorithm is $\PTIME$ while the ILP is not.
	We see here that the Linear Program is faster than the Flow algorithm, however, it has a worse approximation ratio (+2.781\% instead of +0.632\% - but still within the proved bound of $6$ for this query).

	\introparagraph{$\qtriangle$ (\cref{fig:expt-triangle})} The triangle query is a hard query with $|\mveo| =3$. 
	We see in the figure that since there are just $3$ minimal $\veo$s, the effect of the optimization is not very much. 
	In fact, for the MFMC-based algorithm, the longer time to create the optimized version is not paid off in the solve time. 
	However, as expected for a hard query, we see that the $\PTIME$ MFMC-based algorithm is must faster than the exact ILP. 
	The MFMC-based algorithm enables us to get a very close approximation in less than one-fourth of the time.  
	In this case, the Linear Program, also a $\PTIME$ algorithm, is slower than the MFMC-based algorithm but gives an exact solution!

    \introparagraph{ $\qtriangleunary$  (\cref{fig:expt-triangle-unary})} 
	Next, we look at the very structurally similar but easy query, $\qtriangleunary$ with $|\mveo|=3$. 
	Here we see surprisingly that the ILP is faster than the MFMC-based algorithm! 
	Notice that for this query we have shown that both algorithms are exact. 
	This is because an optimized ILP Solver like Gurobi can leverage the fact that LP = ILP and solve the ILP in $\PTIME$ (even though the ILP constraints do not follow traditionally $\PTIME$ structures like Total Unimodularity, it leverages some structure in the matrix that makes it $\PTIME$ solvable). 
	Since there are indeed more variables and constraints in the MFMC-based algorithm, the ILP turns out to be faster.

    \introparagraph{ $\qfourchain$  (\cref{fig:expt-four-chain})} 
	The $4$ chain query is $\PTIME$ like $\qtriangleunary$ but with $|\mveo| = 5$ and has a similar graph to $\qtriangleunary$.
	Here again, the guarantee of the MFMC-based algorithm being exact is fulfilled, the pruning offers some benefit, and the ILP is faster than the MFMC-based algorithm due to the smaller problem size and some innate structure discovered by the solver. 
    The Linear Program, which has always returned the optimal solution, is significantly faster. 

    \introparagraph{$\qfivechain$ (\cref{fig:expt-five-chain})} The $5$ chain query has $|\mveo| = 14$, and we believe it to be easy, but do not yet have proof for the same. 
    In this case, too, we believe that the Gurobi solver has some tricks that make the ILP run faster than exponential. 
    (Notice that a bigger instance with more than double the size of $\mveo$ is solved faster than $\qthreestar$ in \cref{fig:expt-threestar}).
    In addition, the Linear Program is optimal. 
	This backs up our hypothesis that the query is in $\PTIME$. 
    While the MFMC-based algorithm is not optimal in this case, we see that applying our pruning rules helps eliminate leakage paths and gives us a better approximation as well. 
	We hypothesize that there exists a set of pruning rules that make the MFMC-based algorithm optimal for all Linear Queries (which we hypothesize are all in $\PTIME$).

}{
}

\end{document}